\newcommand{\ang}[2][{}]{{#1\langle} #2 {#1\rangle}}
\newcommand{\ceil}[2][{}]{{#1\lceil} #2 {#1\rceil}}
\newcommand{\card}[2][{}]{{#1\lvert} #2 {#1\rvert}}
\newcommand{\true}{\textsf{true}}
\newcommand{\secref}[1]{Section~\ref{sec:#1}}
\newcommand{\figref}[1]{Fig.~\ref{fig:#1}}
\newcommand{\thmref}[1]{Theorem~\ref{thm:#1}}
\newcommand{\lemref}[1]{Lemma~\ref{lem:#1}}
\newcommand{\algoref}[1]{Algorithm~\ref{alg:#1}}
\newcommand{\enumref}[1]{(\ref{enum:#1})}
\newcommand{\asmref}[1]{Assumption~\ref{asm:#1}}
\newcommand{\Tabref}[1]{Table~\ref{tab:#1}}
\newcommand{\Algoref}[1]{Algorithm~\ref{alg:#1}}
\newcommand{\model}{M$^2$HeW}
\NewDocumentCommand\acs{ g }{ \IfNoValueTF{#1} {\mathcal{A}} {\mathcal{A}(#1)} }
\newcommand{\clockof}[1]{C_{#1}}
\newcommand{\clockat}[2][]{C_{#1}(#2)}
\newcommand{\drmaximum}{\delta}
\newcommand{\destimate}{\Delta_{est}}
\newcommand{\dmaximum}{\Delta}
\newcommand{\acsmaximum}{S}
\newcommand{\epmaximum}{\epsilon}
\newcommand{\fpmaximum}{\phi}
\newcommand{\nsize}{N}
\newcommand{\degreeon}[2]{\Delta(#1,#2)}
\newcommand{\link}[2]{(#1, #2)}
\newcommand{\incoming}[2]{\mathcal{N}(#1,#2)}
\newcommand{\lspan}[2]{span(#1,#2)}
\newcommand{\srminimum}{\rho}
\newcommand{\A}[2]{A(#1,#2)}
\newcommand{\B}[2]{B(#1,#2)}
\newcommand{\C}[2]{C(#1,#2)}
\newcommand{\D}[2]{D(#1,#2)}
\newcommand{\E}[2]{E(#1,#2)}
\newcommand{\F}[2]{F(#1,#2)}
\newcommand{\ABC}[2]{ABC(#1,#2)}
\newcommand{\EBAR}[3][]{F(#2,#3,#1)}
\newcommand{\AF}[3]{\hat{A}(#1,#2,#3)}
\newcommand{\BF}[3]{\hat{B}(#1,#2,#3)}
\newcommand{\CF}[3]{\hat{C}(#1,#2,#3)}
\newcommand{\AFS}[3]{\hat{A^\prime}(#1,#2,#3)}
\newcommand{\CFS}[3]{\hat{C^\prime}(#1,#2,#3)}
\newcommand{\DF}[3]{\hat{D}(#1,#2,#3)}
\newcommand{\EF}[3]{\hat{E}(#1,#2,#3)}
\newcommand{\ABCF}[3]{\hat{ABC}(#1,#2,#3)}
\newcommand{\EBARF}[4][]{\hat{F}(#2,#3,#4,#1)}
\newcommand{\flength}{L}
\newcommand{\slotu}[1]{b_{#1}}
\newcommand{\slotv}[1]{a_{#1}}
\newcommand{\admissible}{admissible}
\newcommand{\under}{\sqsubset}
\newcommand{\precedence}{precedence}
\newcommand{\overlapAll}[1]{overlap\text{-}all(#1)}
\newcommand{\overlap}[2]{overlap(#1,#2)}
\newcommand{\node}[1]{node(#1)}
\newcommand{\legal}{aligned}
\newcommand{\lpair}[2]{\ang{#1,#2}}
\newcommand{\bsize}{B}
\newcommand{\bminimum}{W}
\newcommand{\phase}[2]{\Lambda(#1,#2)}
\newcommand{\critical}{potent}
\newcommand{\starttime}{T_s}
\newcommand{\finishtime}{T_{\!f}}
\newcommand{\vmaximum}{\Theta}
\newcommand{\framesuntil}[2]{F_{#1,#2}}
\NewDocumentCommand\starttimeOf{ g g g }{ \IfNoValueTF{#1} {T_b\xspace} { \IfNoValueTF{#3} {T_b^{\max}(#1,#2)\xspace} {T_b^{#3}(#1,#2)\xspace}} }
\NewDocumentCommand\finishtimeOf{ g g g }{ \IfNoValueTF{#1} {T_e\xspace} { \IfNoValueTF{#3} {T_e^{\min}(#1,#2)\xspace} {T_e^{#3}(#1,#2)\xspace}} }
\newcommand{\oldchannel}{j}
\newcommand{\busychannels}{C}
\newcommand{\given}{\mid}
\newcommand{\eulernum}{\mathrm{e}}
\newcommand{\slot}[2]{slot(#1,#2)}
\newcommand{\sdf}[2]{\sigma(#1,#2)}
\pgfplotsset{width=7cm}
\newtheorem{theorem}{Theorem}
\newtheorem{definition}{Definition}
\newtheorem{lemma}[theorem]{Lemma}
\newtheorem{assumption}{Assumption}
\newtheorem*{remark}{Remark}
\begin{document}

\title{\textbf{Robust Neighbor Discovery in Multi-Hop Multi-Channel Heterogeneous Wireless Networks}\footnote{A preliminary version of the work has appeared earlier in the Proceedings of the 31st International Conference on Distributed Computing Systems (ICDCS), 2011~\cite{MitZen+:2011:ICDCS}.}}

\author{Yanyan Zeng\footnote{ Yanyan Zeng is currently working at JPMorgan Chase.}}
\author{K. Alex Mills}
\author{Shreyas Gokhale}
\author{Neeraj Mittal\thanks{This work was supported, in part, by the National Science Foundation (NSF) under grant number CNS-1115733.}}
\author{S. Venkatesan}
\author{\\R. Chandrasekaran}
\affil{
Department of Computer Science \\
The University of Texas at Dallas \\
Richardson, TX 75080, USA \\
E-mails: yanyan.bansal@gmail.com \quad k.alex.mills@utdallas.edu \quad shreyas.gokhale@utdallas.edu \quad neerajm@utdallas.edu \quad  venky@utdallas.edu \quad chandra@utdallas.edu
}

\begin{comment}

\author{Neeraj~Mittal,~\IEEEmembership{Member,~IEEE,}
        Yanyan Zeng,
				S. Venkatesan,~\IEEEmembership{Senior Member,~IEEE,}
				R. Chandrasekaran%
\IEEEcompsocitemizethanks{\IEEEcompsocthanksitem Neeraj Mittal, S. Venkatesan and R. Chandrasekaran are with the Department of Computer Science, The University of Texas at Dallas, Richardson, TX 75080, USA.~E-mails: \{ neerajm, venky, chandra \}@utdallas.edu.%
\IEEEcompsocthanksitem Yanyan Zeng is currently working at JPMorgan Chase.~E-mail: yanyan.bansal@gmail.com.}%
}

\end{comment}

\begin{comment}

\author{
%%
\IEEEauthorblockN{
%%
Neeraj Mittal\thanks{This work was supported, in part, by the National Science Foundation (NSF) under grant number CNS-1115733.} \qquad Yanyan Zeng \qquad S. Venkatesan \qquad R. Chandrasekaran
}
%%
}
\IEEEauthorblockA{
%%
Department of Computer Science \\
The University of Texas at Dallas \\
Richardson, TX 75080, USA \\
%
\textsf{neerajm@utdallas.edu} \quad \textsf{zengyanyan@student.utdallas.edu} \quad \textsf{venky@utdallas.edu} \quad \textsf{chandra@utdallas.edu}
%%
}
%%
}

\end{comment}

\date{}

\maketitle

% To allow for easy dual compilation without having to reenter the
% abstract/keywords data, the \IEEEtitleabstractindextext text will
% not be used in maketitle, but will appear (i.e., to be "transported")
% here as \IEEEdisplaynontitleabstractindextext when the compsoc 
% or transmag modes are not selected <OR> if conference mode is selected 
% - because all conference papers position the abstract like regular
% papers do.
%% %% \IEEEdisplaynontitleabstractindextext
% \IEEEdisplaynontitleabstractindextext has no effect when using
% compsoc or transmag under a non-conference mode.

% For peer review papers, you can put extra information on the cover
% page as needed:
% \ifCLASSOPTIONpeerreview
% \begin{center} \bfseries EDICS Category: 3-BBND \end{center}
% \fi
%
% For peerreview papers, this IEEEtran command inserts a page break and
% creates the second title. It will be ignored for other modes.
%% %% \IEEEpeerreviewmaketitle

%%\IEEEtitleabstractindextext{%
%%
\begin{abstract}
An important first step when deploying a wireless ad hoc network is \emph{neighbor discovery} in which every node attempts to determine the set of nodes it can communicate within one wireless hop. 
In the recent years, \emph{cognitive radio (CR)} technology has gained attention as an attractive approach to alleviate  spectrum congestion. A cognitive radio transceiver
can operate over a wide range of frequencies possibly spanning multiple frequency bands. A cognitive radio node can opportunistically utilize unused wireless spectrum without interference from other wireless devices in its vicinity. Due to spatial variations in frequency usage and hardware variations in radio transceivers, different nodes in the network may perceive different subsets of frequencies available to them for communication. This \emph{heterogeneity} in the available channel sets across the network increases the complexity of solving the neighbor discovery problem in a cognitive radio network. In this work, we design and analyze several randomized algorithms for neighbor discovery in such a (heterogeneous) network under a variety of  assumptions (\emph{e.g.}, maximum node degree known or unknown) 
for both synchronous and asynchronous systems under minimal knowledge. 
We also show that our randomized algorithms are naturally suited to tolerate unreliable channels and adversarial attacks. 
%%
%% Specifically, we prove that, even under lossy channels and jamming attacks, neighbor discovery is guaranteed to complete 
%% with arbitrarily high probability with the running time increasing by only a constant factor.
%%
\\

\noindent
\textbf{Key words:}
multi-hop multi-channel wireless network, cognitive radio technology, heterogeneous channel availability, neighbor discovery,  randomized algorithm, asynchronous system, clock drift, lossy channel, jamming attack
\end{abstract}

%%
%%}

%% \IEEEraisesectionheading{\section{Introduction}\label{sec:introduction}}

\section{Introduction}\label{sec:introduction}

Neighbor discovery is an important step in forming a \emph{self-organizing} wireless ad hoc network without any support from an existing communication 
infrastructure~\cite{McGBor:2001:MOBIHOC,VasKur+:2005:INFOCOM,BorEph+:2007:AN,VasTow+:2009:MOBICOM,WilKar+:2010:ADHOCNOW,SonXie:2012:INFOCOM,
GonCha+:2013:MASS,BiaPar:2013:TMC,CheBia+:2014:MOBIHOC:a,CheBia+:2014:MOBIHOC:b,RusVas+:2014:TPDS}.
When deployed, nodes initially have no prior knowledge of other nodes that they can communicate with directly.

%% Update by Shreyas: Following paragraph added alongwith citing new papers (esp. those mentioned by one of the reviewers)

%% \paragraph{Applications of Neighbor Discovery:} 
%%
%% \begin{itemize}
%% \item 
%%
The results of neighbor discovery can  be used to solve other important communication problems such as
medium access control~\cite{BaoGar:2002:MOBICOM,ChoYan+:2002:MOBICOM}, routing~\cite{CesCuo+:2010:AN}, broadcasting~\cite{SonXie:2012:INFOCOM,RehVia+:2013:ComCom},
clustering~\cite{LinGer:1997:JSAC,HeiCha+:2000:ICSS,ChaHsu:2000:MNA},
collision-free scheduling~\cite{GanDaw+:2008:JPDC,GanZha+:2008:CN}, spanning tree construction~\cite{ChoKha+:2009:JSAC}, and topology control~\cite{WatLi+:2001:INFOCOM,LiHal+:2005:TN}.
Many algorithms for solving these problems implicitly assume that all nodes know their one-hop and sometimes even two-hop neighbors.
%%
%% \item 
%%
Many location-based routing protocols (\emph{e.g.} localized routing in Vehicular Ad hoc Networks (VANETs)) use the position of neighboring nodes to make 
routing/forwarding decisions. The neighborhood information is also used to update the reachability status of nodes. 
A better neighbor discovery algorithm, which uses fewer messages and has higher accuracy, can be used to improve the performance of location-based routing  algorithms~\cite{BouRez+:2009:GLOBECOM,Bou:2005:Book,Bou:2008:Book}.
%%
%%	\item 
%%
Neighborhood information helps reduce the cost of flooding in multicast tree construction using flooding algorithms~\cite{LimKim:2000:MSWiM}.
%%
%% \item 
%% 
Mobile sensing applications, ranging from mobile social networking to proximity-based gaming, involve collection and sharing of data among nearby users. The success of these applications depends on neighborhood information~\cite{CheBia+:2014:MOBIHOC:b}.
%%
%% \item 
%%
Neighbor discovery is extremely important in Underwater Acoustic (UWA) Networks and needs to be done frequently because nodes may move proactively due to the unpredictable underwater currents~\cite{ZhaZha:2010:INFOCOM}.
%%
%% \end{itemize}
%%
More details of how the results of neighbor discovery can be used to solve other communication problems  can be found in~\cite{Bou:2005:Book,Bou:2008:Book}.

Cognitive radio (CR) technology has emerged as a promising approach for improving spectrum utilization efficiency and meeting the increased demand for wireless communications \cite{BroWol+:Corvus}. 
A CR node can scan a part of the wireless spectrum, and identify unused or underutilized channels in the spectrum \cite{YucArs:2009:CST,AkyLo+:2011:PC}. CR nodes in a network can then use these channels
opportunistically for communication among themselves even if the channels belong to licensed users. The licensed users are referred to as the \emph{primary users}, and CR nodes are referred to as the \emph{secondary users}. (Of course, when a primary user arrives and starts using its channel, the secondary users have to vacate the channel.)
Due to spatial variations in frequency usage, hardware variations in radio transceivers and uneven propagation of wireless signals, 
different nodes in the network may perceive different subsets of frequencies available to them for communication. 
This gives rise to a \emph{multi-hop, multi-channel, heterogeneous wireless network}, abbreviated as \emph{\model{} network}. 
The focus of this work is on solving the neighbor discovery problem in an \model{} network.

A large number of neighbor discovery algorithms have been proposed in the literature.
%%
\begin{comment}

~\cite{SalBha+:2000:MOBIHOC,LawMeh+:2001:MOBIHOC,SalBha+:2001:INFOCOM,McGBor:2001:MOBIHOC,AloKra+:2003:ADHOCNOW,AloKra+:2003:IPDPS,ZheHou+:2003:MOBIHOC,RanChi:2005:INFOCOM,VasKur+:2005:INFOCOM,BorEph+:2007:AN,DyoMas:2008:DCOSS,DutCul:2008:SenSys,HamChe+:2008:DMTCS,KriTho+:2008:CN,AraVen+:2008:DYSPAN,MitKri+:2009:JPDC,VasTow+:2009:MOBICOM,YanShi+:2009:PERCOM,BouRez+:2009:GLOBECOM,AstZor+:2010:CCW,ZenMit+:2010:ISPDC,IyePru+:2011:SASO,YuaLiz+:2011:MSN,YouZhu+:2012:ICC,GanWan+:2012:SECON,LiuLin+:2012:TPDS,GuoFan+:2012:GLOBECOM,GonCha+:2013:MASS,GuHua+:2013:SECON,MisMis:2013:DCOSS,MisMis+:2013:VTC,BiaPar:2013:TMC,ChaHsu:2013:ICCP,WanSun+:2013:ICCT,VasAdl+:2013:TN,RusVas+:2014:TPDS,CheBia+:2014:MOBIHOC:a,CheBia+:2014:MOBIHOC:b,WanMao+:2015:TPDS
}. 
\end{comment}
%%
Most of the algorithms suffer from one or more of the following limitations: 
\begin{inparaenum}[(i)]
\item all nodes are assumed to be synchronized (synchronous system), 
\item the entire network is assumed to operate on a single channel (single-channel network),
\item all nodes are assumed to be able to communicate with each other (single-hop network),
\item all channels are assumed to be available to all nodes (homogeneous network),  or
\item the algorithm is only evaluated experimentally (no theoretical guarantees provided).
\end{inparaenum}
A more detailed discussion of the related work can be found in \secref{related}.

%% Update by Shreyas: related work section moved to a new file titled 'relatedWork.tex'

%% Our Contributions

\paragraph{Our Contributions:} 
Our main contribution in this work is \emph{two} randomized neighbor discovery algorithms for 
an \emph{\model{} network} when the system is  \emph{asynchronous} that guarantee success with arbitrarily high probability. The first 
algorithm assumes that nodes know a good upper bound on the maximum degree of 
any node in the network. The second algorithm does not make any such 
assumption. Both algorithms only assume that the maximum drift rate of the  clock of any node is bounded, with the second algorithm assuming a tighter but unknown bound whose value depends on various system parameters. 
None of the algorithms require clocks of different nodes to be synchronized. In fact, clocks of any two nodes may have arbitrary  skew with respect to each  other. Other advantages of our algorithms are as follows:
\begin{inparaenum}[(i)]
\item nodes do not need to agree on a universal channel set, and
\item the running time of an algorithm depends on the ``degree of heterogeneity'' in the network; the running time decreases as the available channel sets become more homogeneous.
\end{inparaenum}

Our algorithms for an asynchronous system are based on those for a synchronous system.
Therefore, as additional contributions, we also present a suite of randomized neighbor discovery algorithms for an \emph{\model{} network} when the system is \emph{synchronous} under a variety of assumptions such as: 
\begin{inparaenum}[(i)]
\item whether all the nodes start executing the neighbor discovery algorithm at the same time or not, and 
\item whether nodes are aware of an estimate on the upper bound on the maximum degree of any node in the network or not.
\end{inparaenum}

We believe that our approach for transforming a \emph{state-less} algorithm developed for a synchronous \model{} network to work for an asynchronous \model{} network can also be applied to other important communication problems in an \model{} network with running time increasing by only a constant factor.

We show that our randomized algorithms can easily tolerate unreliable channels.
We also prove that, our algorithms, with minor modification in the asynchronous case, 
are tolerant to jamming attacks by a reactive but ``memory-less'' jammer under certain assumption. The running time of our algorithms, when subject to a jamming attack, increases by at most a \emph{constant} factor. In fact, for sufficiently large values of system parameters 
(namely, number of nodes and number of channels), the running time  increases only by a factor of at most two in the worst-case.

\paragraph{Organization:} 
The rest of the manuscript is organized as follows. We describe our system
model for a multi-hop multi-channel heterogeneous wireless network in 
\secref{model}. For ease of exposition, we first present a suite of randomized 
neighbor discovery algorithms for a \emph{synchronous} system under a variety 
of assumptions and analyze their time complexity in \secref{synchronous}. We 
then present two randomized neighbor discovery algorithms for an 
\emph{asynchronous} 
system, which are derived from their synchronous counterparts, and analyze 
their complexity in \secref{asynchronous}.
%%
%%  \Secref{summary} summarizes various neighbor discovery algorithms. 
%%
We discuss several extensions to our algorithms to enhance their applicability and improve their robustness in 
\secref{extensions}. 
%% Update by Shreyas: Following line added
Finally, in \secref{related}, we give a comparison of our contributions with existing research and also examine other related work done on neighbor discovery.

\section{System Model}
\label{sec:model}

We assume a multi-hop  multi-channel heterogeneous wireless (\model{}) network consisting of one or more radio nodes.
Let $\nsize$ denote the total number of radio nodes. Nodes do not know $\nsize$.  Each
radio node  is equipped with a transceiver (transmitter-receiver
pair), which is capable of operating over multiple frequencies or channels. However,
at any given time, a transceiver can operate (either transmit or
receive) over a single channel only. Further, a transceiver cannot
transmit and receive at the same time. Transceivers of different nodes need not be identical; the set of channels over which a transceiver can operate may be different for different transceivers. 

Different nodes in a network may have different sets of channels available for communication. 
For example, in a cognitive radio network (a type of \model{} network), each node can scan the frequency spectrum and identify the subset of unused or under-used portions of the spectrum even those that have been licensed to other users or organizations \cite{BroWol+:Corvus}. 
A node can potentially use such frequencies  to communicate with its neighbors until they
are reclaimed by their licensed (primary) users \cite{BroWol+:Corvus}.
Due to spatial variations in frequency usage/interference and hardware variations in radio transceivers, different nodes in the network may perceive different subsets of frequencies available to them for communication.
We refer to the subset of frequencies or channels that a node can use to communicate with its neighbors as the \emph{available channel set} of the node. 
For a node $u$, we use $\acs{u}$ to denote its available channel  set. We use $\acsmaximum$ to denote the size of the largest available channel set, that is, $\acsmaximum = \max\limits_u \card{\acs{u}}$. Note that nodes do not know $\acsmaximum$.

We say that a node $v$ is a \emph{neighbor} of node $u$ on a channel $c$ if $u$ can reliably receive any message transmitted by $v$ on $c$ provided no other node in the network is transmitting on $c$ at the same time, and vice versa. We assume that the communication graph is \emph{symmetric}  because unidirectional neighborhood relationships  are expensive and impractical 
to use in wireless networks \cite{Pra:1999:DIALM} (although our algorithms work for asymmetric communication graphs as well).
For a node $u$ and a channel $c \in \acs{u}$, we use $\degreeon{u}{c}$ to denote the number of neighbors, also known as \emph{degree}, of $u$ on $c$. We use $\dmaximum$ denote the maximum degree of any node on any channel, that is, $\dmaximum = \max\limits_u \max\limits_{c \in \acs{u}} \degreeon{u}{c}$.

Note that, if nodes $u$ and $v$ are neighbors of each other on some channel, then $u$ has to discover $v$ and $v$ has to discover $u$ separately. It is convenient to assume two separate links---one from $u$ to $v$ and another from $v$ to $u$. We use $\link{u}{v}$ to denote the link from $u$ to $v$. We refer to the set of channels on which the link $\link{u}{v}$ can operate as the \emph{span} of $\link{u}{v}$ and denote it by $\lspan{u}{v}$. Note that $\lspan{u}{v} \subseteq \acs{u} \cap \acs{v}$. We refer to the ratio of $\card{\lspan{u}{v}}$ to $\card{\acs{v}}$ as the \emph{span-ratio} of the link $\link{u}{v}$.  Note that the span-ratio of any link lies between $\frac{1}{\acsmaximum}$ and $1$. Further, the span-ratio of $\link{u}{v}$ may be different from the span-ratio of $\link{v}{u}$ (because $\card{\acs{u}}$ may be different from $\card{\acs{v}}$). We use $\srminimum$ to denote the minimum span-ratio among all links. 
Note that nodes do not know $\srminimum$. 
Intuitively, $\srminimum$ can be viewed as a measure of the \emph{degree of homogeneity} in the network---the larger the value, the more homogeneous the network is in terms of channel availability for nodes and links.
The running time of our algorithms is \emph{inversely} proportional to $\srminimum$. When all links can operate on all available channels (an assumption made frequently in the literature), $\srminimum = 1$, which minimizes the running time of our algorithms. In general, the more heterogeneous the network is, the larger is the running time of our algorithms.

If nodes $v$ and $w$ are neighbors of node $u$ on channel $c$ and both $v$ and $w$ transmit on $c$ at the same time, then their transmissions  \emph{collide} at $u$. If $u$ is listening on $c$ when $v$ and $w$ transmitting on $c$ at the same time, then $u$ hears only noise.  We do not assume that nodes can detect collisions, that is, distinguish between background noise and collision noise.

%% A cognitive radio transceiver may be capable of operating over a wide range of frequencies  (\emph{e.g.}, as low as 100 MHz to as high as 2.5 GHz \cite{CafCor+:2007:SDR}). These frequencies have quite  different propagation characteristics. 
%% Radio
%% waves at different frequencies undergo different angles of refraction
%% and therefore propagate in different ways.
%%
The set of channels over which an \model{} network can operate may have very different propagation characteristics. For example,  a cognitive radio transceiver may be capable of operating over a wide range of frequencies  (\emph{e.g.}, as low as 100 MHz to as high as 2.5 GHz \cite{CafCor+:2007:SDR}). Radio
waves at different frequencies undergo different angles of refraction
and therefore propagate in different ways.
As a result, two 
radio nodes $u$ and $v$ that are neighbors on some channel $c$ may not
be neighbors on another channel $d$ even if both $u$ and $v$ have
 $d$ in their available channel  sets. This may happen, for
example, if $c$ is a channel in the 900 MHz band and $d$ is a channel
in the 2.4 GHz band; a lower frequency has stronger capability to penetrate
through different materials and therefore has a longer transmission range
than a higher frequency. For ease of exposition, we first describe our algorithms assuming 
that all frequencies or channels have identical propagation characteristics. This implies that if a communication link from node $u$ to node $v$ can operate over some channel $c \in \acs{u} \cap \acs{v}$ then it can also operate over all channels in $\acs{u} \cap \acs{v}$. We later discuss how
to extend our algorithms to handle diverse propagation characteristics of different frequencies.

We use ``$\log$'' to refer to the logarithm to the base $2$ and ``$\ln$'' to refer to the natural logarithm.
In this paper, we investigate the neighbor discovery problem both when the system is synchronous and when the system is asynchronous. We now describe how we model each type of system.

\paragraph{Synchronous System:} We assume that the execution of the system is divided into
\emph{synchronized time-slots}. In each time slot, each node can be in one of the
following three modes: \begin{inparaenum}[(i)] \item \emph{transmit mode} on some channel in its
available channel set, \item \emph{receive mode} on some channel in its
available channel set, or \item \emph{quiet mode} when the transceiver is
shut-off. 
\end{inparaenum}
Nodes may or may not start at the same time.

\paragraph{Asynchronous System:} We assume that every node is equipped with  a clock.  Clocks  of different nodes are not required to be synchronized.    For a clock $C$ and time $t$, we use $\clockat{t}$ to denote the value of $C$ at $t$. For a node $u$, we use  $\clockof{u}$ to denote the clock of $u$. 
The clock of a node may have non-zero drift and  the drift rate may change over time. The drift rate of a clock $C$ at time $t$ is given by $\frac{dC}{dt} -1$. If $C$ is an ideal clock, then, $\forall t, \Delta t \geq 0$, $\clockat{t + \Delta t} - \clockat{t} = \Delta t$; thus $\frac{dC}{dt} = 1$. At any given time, different clocks may have different drift rates. Further, the drift rate of the same clock may change over time both in magnitude and sign.
We do, however, assume that the magnitude of the maximum drift rate is \emph{bounded} and let the bound be denoted by $\drmaximum$. This implies that $\forall t, \Delta t \geq 0$, 
\begin{equation}
\label{eq:drift}
(1 - \drmaximum)\Delta t \; \leq \; \clockat{t + \Delta t} - \clockat{t} \; \leq \; (1 + \drmaximum)\Delta t
\end{equation}
For an ideal clock, $\drmaximum = 0$. In practice, $\drmaximum$ is quite small (\emph{e.g.}, $10^{-6}$ seconds/second for an ordinary quartz clock). 

\section{Neighbor Discovery in Synchronous Systems}
\label{sec:synchronous}

We first describe our algorithms for  neighbor discovery  assuming that all nodes start executing the algorithm at the same time. We then relax this assumption and describe our  algorithms for neighbor discovery when nodes may start executing the algorithm at different times. Later, in \secref{asynchronous}, we use the algorithms for variable start times to derive randomized neighbor discovery algorithms for an asynchronous system.

\subsection{Identical Start Times}

We first assume that the nodes know some upper bound on maximum node degree and present a neighbor discovery algorithm. The bound need not be tight and, in fact, may be quite loose (since the dependence is logarithmic on the value of the upper bound.) But all nodes should agree on a \emph{common} upper bound. We then relax this restriction and describe an algorithm for the case when such knowledge is not available.

\subsubsection{Knowledge of Loose Upper Bound on Maximum Node Degree}
\label{sec:identical|degree}

Let $\destimate$ denote an upper bound on the maximum node degree as known to all nodes.
The execution of the algorithm is divided into \emph{stages}. Each stage consists of $\ceil{\log(\destimate)}$ time-slots. In each time-slot of a stage, a node randomly chooses a channel from its available channel set and transmits on that channel with a certain probability (and listens on that channel with the remaining probability). Specifically, in time-slot $i$ of a stage, where $1 \leq i
\leq \ceil{\log(\destimate)}$, a node $u$ transmits on the selected channel, say $c$, with probability $\min\left(\frac{1}{2}, \frac{\card{\acs{u}}}{2^i}\right)$ and listens on $c$ with the remaining probability.

\begin{algorithm}[t]
\SetKwInput{Input}{Input}
\SetKwInput{Output}{Output}
\tcp{Algorithm for node $u$} 
\BlankLine
\Input{An upper bound on maximum node degree, say $\destimate$.}
\Output{The set of neighbors along with the subset of channels that are common with the neighbor. }
\BlankLine 
\While{\true{}}{
\tcp{execute a stage}
\For{$i \leftarrow 1$ \KwTo $\ceil{\log(\destimate)}$}{
\tcp{time-slot $i$ of the current stage}
\tcp{select a channel}
$c \leftarrow$ channel selected uniformly at random from $\acs{u}$\;
\tcp{compute the transmission probability}
$p \leftarrow \min\left( \frac{1}{2}, \frac{\card{\acs{u}}}{2^i} \right)$\;
tune the transceiver to $c$\;
switch to transmit mode with probability $p$ and  receive mode with probability $1-p$\;
\uIf{(in transmit mode)}{transmit a message containing $\acs{u}$\;}
\ElseIf{(heard a clear transmission)}{
let the received message be sent by node $v$ containing set $A$\;
add $\ang{v,A \cap \acs{u}}$ to the set of neighbors\;
%% end if
}
%% end for
} 
%% end while
}
\caption{Neighbor discovery algorithm for a synchronous system with identical start times  and knowledge of upper bound on maximum node degree.}
\label{alg:identical|degree}
\end{algorithm}

A formal description of the algorithm is shown in \algoref{identical|degree}. 
We next analyze its running time. Consider a node $u$ and let node $v$ be its neighbor on a channel $c$.   Observe that $1 \leq \degreeon{u}{c} \leq \dmaximum \leq \destimate$.  Let $k = \max( 1, \ceil{\log \degreeon{u}{c}})$. Clearly, we have:
\begin{align}
\label{eq:slot}
2^{k-1} \; \leq \; \degreeon{u}{c} \; \leq \;  2^k
\end{align}

We say that a time-slot $t$ \emph{covers} the link $\link{v}{u}$ on channel $c$ if during $t$: \begin{inparaenum}[(i)]
\item $v$ transmits on $c$, 
\label{enum:slot|1} \item $u$ listens on $c$, 
and 
\label{enum:slot|2} \item no other neighbor of $u$ transmits on $c$. 
\label{enum:slot|3}
\end{inparaenum}
The three conditions collectively ensure that $u$ receives a clear message from $v$ (on $c$ during $t$). 
We say that a stage $s$ covers the link $\link{v}{u}$ on channel 
$c$ if some time-slot of $s$ covers $\link{v}{u}$ on $c$; also,  $s$ covers $\link{v}{u}$ if
 $s$ covers $\link{v}{u}$ on some $c \in \lspan{v}{u}$. Finally, we say that a sequence of stages covers the link $\link{v}{u}$ if some stage of the sequence covers $\link{v}{u}$.

Consider a stage $s$ and let $\tau$ denote the time-slot of $s$ that satisfies \eqref{eq:slot}.
We first compute the probability that $\link{v}{u}$ is covered on $c$ during $\tau$.
Let $\A{\tau}{c}$, $\B{\tau}{c}$, $\C{\tau}{c}$ denote the three events corresponding to the three conditions \enumref{slot|1}, \enumref{slot|2} and \enumref{slot|3}, respectively, required for coverage.
Note that the three events are \emph{mutually independent}. Therefore we first compute the  probability of occurrence of events $\A{\tau}{c}$, $\B{\tau}{c}$ and $\C{\tau}{c}$ separately. We then compute the probability that $s$ covers $\link{v}{u}$. 
The steps involved in computing the probabilities are fairly standard and  have been moved to the appendix. We only state the results here. We have:
\begin{align}
\Pr\{\A{\tau}{c}\} & \geq \frac{1}{2 \max(S,\dmaximum)} \\ % \dmaximum and not D, right?
\Pr\{\B{\tau}{c}\} & \geq \frac{1}{2\card{\acs{u}}}
\label{eq:B|1} \\
\Pr\{\C{\tau}{c}\} & \geq \frac{1}{4}
\label{eq:C|1} \\
\Pr(s \text{ covers } \link{v}{u}) & \geq  \frac{\srminimum}{16 \max( \acsmaximum, \dmaximum)}
\label{eq:stage|cover}
\end{align}

Now, consider a sequence of $M = \frac{16 \max( \acsmaximum, \dmaximum )}{\srminimum} \ln \left( \frac{N^2}{\epmaximum} \right)$ stages, where $\epmaximum$ is the probability that  neighbor discovery does not complete successfully. We show that the probability that the link $\link{v}{u}$ is not covered within $M$ stages is at most $\frac{\epmaximum}{N^2}$. The steps for computing the probability are fairly standard and have been moved to the appendix. Formally,
\begin{align}
\Pr(\link{v}{u} \text{ is not covered within } M \text{ stages}) &  \leq \frac{\epmaximum}{N^2} 
\label{eq:sequence|cover}
\end{align}

Finally, we have:
\begin{align}
\nonumber
& \Pr(\text{neighbor discovery does not finish within } M \text{ stages}) \\
\nonumber & = \Pr(\text{some link is not covered within } M \text{ stages}) \\
\nonumber
& \leq \text{(number of links in the network)} \times \; \frac{\epmaximum}{N^2} 
%% \\
%% & \leq   \epmaximum
\leq \epmaximum
\end{align}

Therefore, we have the following theorem:

\begin{theorem}
\Algoref{identical|degree} guarantees that each node discovers all its neighbors on all channels within  $O\left( \frac{\max( \acsmaximum, \dmaximum )}{\srminimum} \log(\destimate)\log\left( \frac{\nsize}{\epmaximum}\right)\right)$ time-slots with probability at least $1 - \epmaximum$.
\end{theorem}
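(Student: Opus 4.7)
The plan is to prove the theorem by bounding the per-link failure probability and then applying a union bound over the at most $N^2$ directed links in the network. Fix an arbitrary link $\link{v}{u}$ and any channel $c \in \lspan{v}{u}$. As in the excerpt, let $k = \max(1, \ceil{\log \degreeon{u}{c}})$, so the ``potent'' slot $\tau = k$ of each stage satisfies $2^{k-1} \leq \degreeon{u}{c} \leq 2^k$; intuitively, the transmission probability $\min(1/2, \card{\acs{v}}/2^k)$ in this slot is calibrated to the number of contenders on $c$. I would show that each stage covers $\link{v}{u}$ with probability bounded below by a quantity depending only on $\acsmaximum$, $\dmaximum$, and $\srminimum$, then chain this bound across independent stages via an exponential tail estimate.

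For the per-stage bound, I decompose the coverage event on $c$ during $\tau$ into the three events $\A{\tau}{c}$, $\B{\tau}{c}$, $\C{\tau}{c}$ from the excerpt. These are mutually independent since they concern the disjoint node sets $\{v\}$, $\{u\}$, and the other neighbors of $u$ on $c$. Routine calculations yield $\Pr\{\A{\tau}{c}\} \geq 1/(2\max(\acsmaximum, \dmaximum))$ (multiply $1/\card{\acs{v}}$ by the transmission probability and use $2^k \leq 2\dmaximum$), $\Pr\{\B{\tau}{c}\} \geq 1/(2\card{\acs{u}})$ (the listen probability is always at least $1/2$), and $\Pr\{\C{\tau}{c}\} \geq (1 - 1/2^k)^{2^k - 1} \geq 1/4$ (each of the at most $\degreeon{u}{c} - 1 \leq 2^k - 1$ other neighbors transmits on $c$ with probability at most $1/2^k$). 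To aggregate over $c \in \lspan{v}{u}$, I observe that the joint events $\A{\tau}{c} \cap \B{\tau}{c}$ are mutually exclusive across distinct $c$ because each node tunes to a single channel per slot; summing over $c$ pulls in the span-ratio factor $\card{\lspan{v}{u}}/\card{\acs{u}} \geq \srminimum$, producing $\Pr(s \text{ covers } \link{v}{u}) \geq \srminimum/(16\max(\acsmaximum, \dmaximum))$.

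Writing $p$ for this per-stage coverage probability, independence of stages gives $\Pr(\link{v}{u} \text{ uncovered in } M \text{ stages}) \leq (1-p)^M \leq e^{-Mp}$, which is at most $\epmaximum/N^2$ for the choice $M = (16\max(\acsmaximum,\dmaximum)/\srminimum)\ln(N^2/\epmaximum)$. A union bound over all directed links yields overall failure probability at most $\epmaximum$, and multiplying $M$ by the stage length $\ceil{\log \destimate}$ recovers the asymptotic time-slot count in the theorem. The main technical obstacle is the aggregation step: the events $\A{\tau}{c}$ (and likewise $\B{\tau}{c}$) are \emph{not} independent across different channels for the same node, so the sum over $c \in \lspan{v}{u}$ must be justified by disjointness rather than by independence; similarly, the $1/4$ bound on $\Pr\{\C{\tau}{c}\}$ must hold uniformly even though interfering neighbors may have very different available channel-set sizes, which I would handle via the uniform worst-case upper bound $1/2^k$ on each interferer's probability of transmitting on $c$.
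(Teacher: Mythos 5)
Your proposal is correct and follows essentially the same route as the paper's proof: the same decomposition into the events $\A{\tau}{c}$, $\B{\tau}{c}$, $\C{\tau}{c}$ for the potent slot $\tau$ satisfying $2^{k-1}\leq\degreeon{u}{c}\leq 2^k$, the same per-stage coverage bound $\frac{\srminimum}{16\max(\acsmaximum,\dmaximum)}$, the same choice of $M$ with an exponential tail estimate across independent stages, and the same union bound over at most $\nsize^2$ links. Your explicit justification of the channel-aggregation step via mutual exclusivity of the events $\A{\tau}{c}\cap\B{\tau}{c}$ across channels (rather than independence) is in fact slightly more careful than the paper's appendix, which writes that step as a bare summation.
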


\subsubsection{No Knowledge of Maximum Node Degree}
\label{sec:identical|none}

One way to derive a neighbor discovery algorithm when knowledge about maximum node degree is not available is as follows. Starting with an estimate of one for the maximum node degree, repeatedly run an instance of the algorithm that  assumes knowledge about the maximum degree for a certain number of time-slots with geometrically increasing values for the estimate.  This approach, for example, is used by Vasudevan \emph{et al.} in \cite{VasTow+:2009:MOBICOM} to eliminate the assumption about network size knowledge  made by some of their neighbor discovery algorithms. The approach has the advantage that the time complexity of the knowledge-unaware algorithm is asymptotically same as that of the knowledge-aware algorithm. This approach cannot be used here because it requires computing the exact number of time-slots for which an instance of the knowledge-aware algorithm ought to be run such that, in case the estimate is correct, the neighbor discovery completes with a desired success probability.
Computing the number of time-slots for our (knowledge-aware) algorithm requires nodes to \emph{a priori} know the values for other system parameters, namely $\nsize$, $\acsmaximum$ and $\srminimum$, whose values may not be known in advance.

We instead employ the following approach used in \cite{NakOla:2000:ISAAC}. Starting with an estimate of two, we repeatedly execute an instance of stage with sequentially increasing values for the estimate. 

\begin{algorithm}[t]
\SetKwInput{Input}{Input}
\SetKwInput{Output}{Output}
\SetKwData{DCurrent}{$d$}
\tcp{Algorithm for node $u$}
\BlankLine
%% \Input{None}
\Output{The set of neighbors along with their available channel set}
\BlankLine
\DCurrent $\leftarrow 2$\;
\While{\true}{
execute an instance of stage described in \algoref{identical|degree} with $\destimate$ set to \DCurrent\;
\DCurrent $\leftarrow$ \DCurrent + 1\;
%% end while
}
\caption{Neighbor discovery algorithm for a synchronous system with identical start times and no knowledge of maximum node degree.}
\label{alg:identical|nodegree}
\end{algorithm}
   
A formal description of the algorithm is shown in \algoref{identical|nodegree}. 
It can be easily verified that, once $d$ becomes at least $\dmaximum$, each stage thereafter contains  a time-slot that satisfies \eqref{eq:slot}. To reach the  success probability of at least $1 - \epmaximum$, from the analysis of \algoref{identical|degree}, it suffices for an execution to contain $M = \frac{16 \max(\acsmaximum, \dmaximum)}{\srminimum} \ln\left( \frac{\nsize^2}{\epmaximum} \right)$ stages each of which consists of a  time-slot that satisfies \eqref{eq:slot}. In other words, neighbor discovery completes successfully with probability at least $1 - \epmaximum$ within $\dmaximum + M$ stages. Note that $M  = \Omega(\dmaximum)$. Therefore we have: 

\begin{theorem}
\Algoref{identical|nodegree} guarantees that each node discovers all its neighbors on all channels within  $O( M \log M )$ time-slots  with probability at least $1 - \epmaximum$, where $M = \frac{16 \max(\acsmaximum, \dmaximum)}{\srminimum}  \ln\left( \frac{\nsize^2}{\epmaximum} \right)$.
\end{theorem}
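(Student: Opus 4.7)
The plan is to reduce the analysis to the already-established analysis of \algoref{identical|degree}. The key observation, already noted in the paragraph preceding the theorem statement, is that once the iterating estimate $d$ reaches $\dmaximum$, every subsequent stage contains a time-slot $\tau$ at which the transmission probability used by node $v$ is $\min(1/2, \card{\acs{v}}/2^k)$ for the specific exponent $k$ satisfying \eqref{eq:slot}. For any link $\link{v}{u}$ on any channel $c \in \lspan{v}{u}$, the events $\A{\tau}{c}$, $\B{\tau}{c}$, $\C{\tau}{c}$ and their probability bounds were derived using only the existence of this slot, so \ineqref{stage|cover} applies verbatim to every such ``good'' stage.

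First I would count stages. Since $d$ is incremented by one per loop iteration starting from $d=2$, after running stages with estimates $d = 2, 3, \ldots, \dmaximum$ (at most $\dmaximum - 1$ stages) the algorithm enters the regime in which every subsequent stage is good. Running through $d = \dmaximum + M$ gives $M$ consecutive good stages. Because a good stage covers a fixed link $\link{v}{u}$ independently with probability at least $\srminimum / (16 \max(\acsmaximum, \dmaximum))$, the same union-bound computation used to derive \ineqref{sequence|cover} shows
\[
\Pr(\link{v}{u} \text{ not covered within these } M \text{ good stages}) \;\leq\; \frac{\epmaximum}{\nsize^2},
\]
and one more union bound over the $O(\nsize^2)$ links yields a total failure probability of at most $\epmaximum$.

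Next I would convert stage count to time-slot count. The stage with estimate $d$ uses $\ceil{\log d}$ time-slots, so the algorithm terminates successfully (with probability at least $1-\epmaximum$) after at most
\[
\sum_{d=2}^{\dmaximum + M} \ceil{\log d} \;\leq\; (\dmaximum + M)\,\ceil{\log(\dmaximum + M)}
\]
time-slots. Because $M \geq 16 \,\dmaximum / \srminimum \geq \dmaximum$ (using $\srminimum \leq 1$), we have $\dmaximum + M \leq 2M$, giving an overall bound of $O(M \log M)$ time-slots as claimed.

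The only subtle point, which I view as the main (mild) obstacle, is to confirm that the coverage bound \ineqref{stage|cover} does not degrade as $d$ grows beyond $\dmaximum$: a larger $d$ produces more time-slots per stage, but the argument focuses on one distinguished slot $\tau$ whose index $k$ is fixed by $\degreeon{u}{c}$ alone, and the transmission/listening probabilities in that slot are independent of $d$. Hence every good stage contributes the same lower bound to coverage, and the earlier calculation transfers without change.
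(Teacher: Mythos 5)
Your proposal is correct and follows essentially the same route as the paper: identify the $\dmaximum - 1$ warm-up stages, observe that every stage with $d \geq \dmaximum$ contains the distinguished slot satisfying \eqref{eq:slot} so the per-stage coverage bound \eqref{eq:stage|cover} carries over unchanged, run $M$ such good stages, use $M = \Omega(\dmaximum)$ to absorb the warm-up, and bound the total slot count by $(\dmaximum + M)\ceil{\log(\dmaximum+M)} = O(M \log M)$. The paper compresses all of this into two sentences plus a pointer to the analysis of \algoref{identical|degree}; your write-up just makes the same steps explicit, including the observation (left as ``easily verified'' in the paper) that the transmission probability in the distinguished slot depends only on the slot index $k$ and not on the stage's estimate $d$.
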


\subsection{Variable Start Times}

We first assume that the nodes know some upper bound on maximum node degree and present a neighbor discovery algorithm. We then relax this restriction and describe an algorithm for the case when such a knowledge is not available.
Let $\starttime$ denote the time by which all nodes have initiated neighbor discovery.

\subsubsection{Knowledge of Good Upper Bound on Maximum Node Degree}
\label{sec:different|degree}

%%Venky deleted reference to Vasudevan's work

% Vasudevan \emph{et al.} present a neighbor discovery algorithm for a single channel  network when nodes may start at different times \cite{VasTow+:2009:MOBICOM}. Their approach assumes that all nodes know in advance the time by which the entire network will be deployed.  Nodes simply initiate neighbor discovery at some globally agreed upon time instant  after the network has been deployed. We employ a different approach
% for handling variable start times. Specifically, 

We assume that nodes know a ``good'' upper bound 
on the maximum node degree. Although the algorithm works even if the upper bound is loose, the running time of the algorithm may be too large since it depends linearly on the value of the upper bound.

The main idea behind our algorithm is to ensure that the transmission probability of a node  is the \emph{same} for every time-slot (but may be different for different nodes).  This allows us to prove that a given link is covered in a time slot with  ``sufficiently high'' probability. Let $\destimate$ denote an upper bound on the maximum node degree as known to all nodes. In each time-slot, a node $u$ randomly selects a channel  from its available channel set, say $c$. It then transmits on $c$ with probability $\min\left( \frac{1}{2}, \frac{\card{\acs{u}}}{\destimate} \right)$ and listens on $c$ with the remaining probability.
A formal description of the algorithm is shown in \algoref{variable|degree}. 

%% \subsubsection{Know Good Upper Bound on Maximum Node Degree}

\begin{algorithm}[t]
\SetKwInput{Input}{Input}
\SetKwInput{Output}{Output}
\tcp{Algorithm for node $u$} 
\BlankLine
\Input{An upper bound on maximum node degree, say $\destimate$.}
\Output{The set of neighbors along with the subset of channels that are common with the neighbor.}
\BlankLine 
\BlankLine
\tcp{compute the transmission probability; same for all time-slots}
$p \leftarrow \min\left( \frac{1}{2}, \frac{\card{\acs{u}}}{\destimate} \right)$\;
\While{\true{}}{
\tcp{select a channel}
$c \leftarrow$ channel selected uniformly at random from $\acs{u}$\;
tune the transceiver to $c$\;
switch to transmit mode with probability $p$ and  receive mode with probability $1-p$\;
\uIf{(in transmit mode)}{transmit a message containing $\acs{u}$\;}
\ElseIf{(heard a clear transmission)}{
let the received message be sent by node $v$ containing set $A$\;
add $\ang{v,A \cap \acs{u}}$ to the set of neighbors\;
%% end if
} 
%% end while
}
\caption{Neighbor discovery algorithm for a synchronous system with variable start times and knowledge of upper bound on maximum node degree.}
\label{alg:variable|degree}
\end{algorithm}

For the analysis of the running time, as in the case of \algoref{identical|degree}, we can compute the probability of occurrence of events $\A{\tau}{c}$, $\B{\tau}{c}$ and $\C{\tau}{c}$.  We have:
\begin{align}
\nonumber
\Pr\{\A{\tau}{c}\} & = \frac{1}{\card{\acs{v}}} \times \min\left(  \frac{1}{2}, \frac{\card{\acs{v}}}{\destimate}  \right) \\
\nonumber
& = \frac{1}{\max\{ \: 2 \card{\acs{v}}, \destimate \: \}} \\
&  \geq \frac{1}{\max(2 \acsmaximum, \destimate)}
\end{align}
 
It can be verified that the inequalities for $\Pr\{\B{\tau}{c}\}$ in \eqref{eq:B|1} and  $\Pr\{\C{\tau}{c}\}$ in \eqref{eq:C|1} are still valid (although the proofs have to be slightly modified).   Using an analysis similar to that for  \algoref{identical|degree}, we can prove the following result:

\begin{theorem}
\Algoref{variable|degree} guarantees that each node discovers all its neighbors on all channels within  $O\left( \frac{\max( 2\acsmaximum, \destimate )}{\srminimum} \log\left( \frac{\nsize}{\epmaximum}\right)\right)$ time-slots of $\starttime$ with probability at least $1 - \epmaximum$.
\end{theorem}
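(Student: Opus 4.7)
The plan is to prove a per-slot coverage probability of $\Omega(\srminimum/\max(2\acsmaximum, \destimate))$ for every directed link, and then apply the same geometric-tail plus union-bound argument used in the proof of \algoref{identical|degree}.

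First, I would fix a directed link $\link{v}{u}$, a channel $c \in \lspan{v}{u}$, and a slot $\tau \geq \starttime$. The excerpt already states $\Pr\{\A{\tau}{c}\} \geq 1/\max(2\acsmaximum, \destimate)$. The essential change from \algoref{identical|degree} is that the transmission probability is now the \emph{slot-invariant} value $\min(1/2, \card{\acs{u}}/\destimate)$ rather than a slot-dependent one, so I would re-check the other two bounds. For $\Pr\{\B{\tau}{c}\} \geq 1/(2\card{\acs{u}})$: node $u$ listens on its chosen channel with probability $1 - \min(1/2, \card{\acs{u}}/\destimate) \geq 1/2$, which is independent of the slot. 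For $\Pr\{\C{\tau}{c}\} \geq 1/4$: each of the at most $\dmaximum - 1 \leq \destimate - 1$ other neighbors of $u$ transmits on the specific channel $c$ with probability at most $1/\destimate$, so by $(1 - 1/\destimate)^{\destimate - 1} \geq 1/\eulernum$ the event $\C{\tau}{c}$ holds with probability at least $1/\eulernum \geq 1/4$. Mutual independence of the three events follows because they depend on disjoint coin flips made by $v$, by $u$, and by $u$'s other neighbors, respectively.

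Next I would sum over $c \in \lspan{v}{u}$. The joint events $\A{\tau}{c} \cap \B{\tau}{c} \cap \C{\tau}{c}$ for different $c$ are pairwise \emph{disjoint}, since $\A{\tau}{c}$ and $\B{\tau}{c}$ together force both $v$ and $u$ to tune to the same channel $c$ in slot $\tau$. Combining the three bounds and invoking the definition of $\srminimum$ gives
\[
\Pr(\tau \text{ covers } \link{v}{u}) \;\geq\; \frac{\card{\lspan{v}{u}}}{\card{\acs{u}}} \cdot \frac{1}{8 \max(2\acsmaximum, \destimate)} \;\geq\; \frac{\srminimum}{8 \max(2\acsmaximum, \destimate)}.
\]

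Since distinct slots use independent randomness, the probability that $\link{v}{u}$ is not covered in any of $T$ consecutive slots after $\starttime$ is at most $\exp\!\left( -\frac{\srminimum T}{8 \max(2\acsmaximum, \destimate)} \right)$. Picking $T = \Theta\!\left( \frac{\max(2\acsmaximum, \destimate)}{\srminimum} \ln(\nsize^2/\epmaximum) \right)$ makes this at most $\epmaximum/\nsize^2$, and a union bound over the at most $\nsize^2$ directed links yields success probability $1 - \epmaximum$ within the claimed number of time-slots. The main obstacle is the careful re-derivation of the bounds on $\Pr\{\B{\tau}{c}\}$ and $\Pr\{\C{\tau}{c}\}$ under the slot-invariant transmission probability: the parameter $\destimate$ now plays the role that $2^i$ played inside a stage of \algoref{identical|degree}, and one must verify that each bound is preserved verbatim before the coverage and tail arguments can be reused.
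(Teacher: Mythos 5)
Your proposal is correct and follows essentially the same route as the paper: the paper also bounds $\Pr\{\A{\tau}{c}\} \geq 1/\max(2\acsmaximum,\destimate)$, notes that the bounds $\Pr\{\B{\tau}{c}\} \geq 1/(2\card{\acs{u}})$ and $\Pr\{\C{\tau}{c}\} \geq 1/4$ carry over with slight modifications, and then reuses the per-slot coverage, geometric-tail, and union-bound argument from \algoref{identical|degree}. Your write-up simply makes explicit the details the paper leaves to the reader (in particular the $(1-1/\destimate)^{\destimate-1} \geq 1/\eulernum$ step replacing the $2^k$-based bound), and these details check out.
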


Note that we no longer have a factor of $O(\log(\destimate))$ in the time complexity because we do not have stages any more.

\subsubsection{No Knowledge of Maximum Node Degree}
\label{sec:different|none}

We refer to the difference in the start times of two nodes as \emph{slack}, and use $\vmaximum$ to denote the maximum slack 
between any two nodes (expressed in terms of the number of time-slots).  
Also, note that no node knows $\vmaximum$. In fact, we present a 
neighbor discovery algorithm in which no node knows any of the other system 
parameters, namely, $\acsmaximum$, $\dmaximum$, $\srminimum$, $\nsize$ and 
$\epmaximum$. And, yet the algorithm completes neighbor discovery within a 
``short'' period with high probability. 

The execution of the algorithm is divided into \emph{epochs}. Each epoch is 
further divided into \emph{phases}. Each phase consists of a certain number 
of time-slots. Specifically, epoch $i$, for $i = 1, 2, 3, \ldots$, consists 
of $i+1$ phases. In phase $j$ of epoch $i$, where $1 \leq j \leq i + 1$, 
we execute an instance of \algoref{variable|degree} with $\destimate$ set to 
$2^j$ for a \emph{fixed} number of time-slots given by 
$2^{i}$. Note that the number of phases in an epoch increases linearily but the length of a 
phase  in an epoch increases geometrically  with the epoch number. Also, note that all 
phases in an epoch are of the same length but use geometrically increasing 
values for estimate of the upper bound on maximum node degree.
A formal description of the algorithm is shown in \algoref{variable|none}. 

%% \subsubsection{Know Good Upper Bound on Maximum Node Degree}

\begin{algorithm}[t]
\SetKwInput{Input}{Input}
\SetKwInput{Output}{Output}
\tcp{Algorithm for node $u$} 
\BlankLine
%% \Input{An upper bound on maximum node degree, say $\destimate$.}
\Output{The set of neighbors along with the subset of channels that are common with the neighbor.}
\BlankLine 
\BlankLine
$i \leftarrow 1$\;
\While{\true{}}{
\tcp{epoch $i$}
\For{$j \leftarrow 1$ \KwTo $i + 1$}{
\tcp{phase $j$ in epoch $i$}
execute an instance of \algoref{variable|degree} with $\destimate = 2^j$ for $2^i$ time-slots\;
%% end for
}
$i \leftarrow i + 1$\;
%% end while
}
\caption{Neighbor discovery algorithm for a synchronous system with variable start times (no knowledge assumed).}
\label{alg:variable|none}
\end{algorithm}

In order to show that \algoref{variable|none} guarantees that neighbor discovery completes with high probability within a certain number of time-slots, it is sufficient to identify a phase
in an epoch such that:
\begin{inparaenum}[(i)]
\item nodes use a \emph{sufficiently large value as an estimate} for maximum node degree during this phase, and
\item this phase for all nodes \emph{overlaps for a sufficiently long time} so as to allow all links to be discovered.
\end{inparaenum}

Let $\dmaximum_0$ denote the smallest power of two greater than or equal to 
$\drmaximum$. Further, let $M = 
\frac{16 \max(\acsmaximum, \dmaximum_0)}{\srminimum}  \ln\left( 
\frac{\nsize^2}{\epmaximum} \right)$. Given a node $u$, we refer to the phase  numbered 
$\log \dmaximum_0$ of the epoch numbered $\ceil{\log(M + \vmaximum)}$ as the \emph{\critical{} phase} of node $u$. 
(We call it \critical{} because we show that neighbor discovery is guaranteed to complete with high probability in this phase.)
In its \critical{} phase, each node executes an instance of \algoref{variable|degree} with $\destimate$ set to $\dmaximum_0$ for at least $M + \vmaximum$ time-slots. 
Let $\starttimeOf$ denote the \emph{latest time} at which some node \emph{begins} executing 
its \critical{} phase. Likewise, let $\finishtimeOf$  denote the \emph{earliest time} at 
which some node \emph{ends} executing its \critical{} phase. 
Note that the time difference between when two nodes \emph{start the same phase} is same as the slack between the two nodes (the time difference between when two nodes start the neighbor discovery algorithm). 
Since the maximum slack between any two nodes is 
$\vmaximum$,  clearly, $\finishtimeOf - \starttimeOf \geq M$. 
This implies that, during the time period from $\starttimeOf$ to $\finishtimeOf$, the following holds:
\begin{inparaenum}[(i)]
\item the \critical{} phases of all nodes \emph{mutually overlap} in time, and
\item the time-period contains \emph{at least $M$ time-slots} from the \critical{} phase of every node. 
\end{inparaenum}
We refer to the period of time from $\starttimeOf$ to $\finishtimeOf$ as the \emph{\critical{} period}. 
During the \critical{}  period, each node uses the same estimate for the  upper bound on maximum 
node degree, namely $\dmaximum_0$. Using an analysis similar to that for the previous algorithm, it can be shown that a single time-slot during the \critical{} period 
covers a given link with probability at least $\frac{\srminimum}{16 \max(\acsmaximum, \dmaximum_0)}$. This in turn implies that the probability that neighbor discovery completes successfully  during
the \critical{} period is at least $1 - \epmaximum$.

Note that epoch $i$ contains $(i+1) \cdot 2^i$ time-slots. Thus, the total number of time-slots contained in epochs 1 through $i$ is given by $i \cdot 2^{i+1}$ (proof is given in the appendix). 
As a result, we have:

\begin{theorem}
\Algoref{variable|none} guarantees that each node discovers all its neighbors on all channels within $O((M + \vmaximum)\log( M + \vmaximum ))$  
time-slots of $\starttime$ with probability at least $1 - \epmaximum$, where \linebreak $M = \frac{16 \max(\acsmaximum, \dmaximum_0)}{\srminimum}  \ln\left( 
\frac{\nsize^2}{\epmaximum} \right)$.
\end{theorem}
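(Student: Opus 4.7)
The plan is to isolate a single sub-interval of the execution during which Algorithm~\ref{alg:variable|none} effectively degenerates into a synchronized run of \algoref{variable|degree} with a valid common upper bound on the maximum node degree, and then reuse the per-slot coverage bound already established for that algorithm.

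First, I would verify that the \critical{} phase is well-defined for every node. Writing $I = \ceil{\log(M + \vmaximum)}$ and $k = \log \dmaximum_0$, epoch $I$ contains $I + 1$ phases so I need $I + 1 \geq k$. This follows because $M \geq 16\,\dmaximum_0$ by definition of $M$, hence $\log M \geq \log \dmaximum_0 + 4$, and therefore $I + 1 \geq \log M + 1 > k$. Thus every node eventually reaches phase $k$ of epoch $I$ and, during that phase, runs $2^I$ consecutive time-slots of \algoref{variable|degree} with the identical estimate $\destimate = 2^k = \dmaximum_0 \geq \dmaximum$.

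Second, I would argue that the \critical{} phases of all nodes overlap on the global time line for at least $M$ consecutive slots. Since any two start times differ by at most $\vmaximum$, and each \critical{} phase is a contiguous window of length $2^I \geq M + \vmaximum$, the latest time at which some node begins its \critical{} phase $\starttimeOf$ and the earliest time at which some node ends it $\finishtimeOf$ satisfy $\finishtimeOf - \starttimeOf \geq 2^I - \vmaximum \geq M$. Throughout the \critical{} period $[\starttimeOf, \finishtimeOf]$, every node is simultaneously executing \algoref{variable|degree} with the \emph{same} estimate $\dmaximum_0$.

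Third, I would invoke the single-slot analysis already performed for \algoref{variable|degree}. For an arbitrary directed link $\link{v}{u}$ and any slot $\tau$ of the \critical{} period, the events $\A{\tau}{c}$, $\B{\tau}{c}$, $\C{\tau}{c}$ are mutually independent and their joint occurrence covers $\link{v}{u}$; the bounds derived earlier (with $\destimate = \dmaximum_0$) give coverage probability at least $p := \frac{\srminimum}{16 \max(\acsmaximum, \dmaximum_0)}$. Since different slots draw choices independently, the probability that $\link{v}{u}$ is uncovered throughout the $M$ slots of the \critical{} period is at most $(1 - p)^M \leq e^{-pM} = e^{-\ln(\nsize^2/\epmaximum)} = \epmaximum / \nsize^2$. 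A union bound over at most $\nsize^2$ directed links gives overall success probability at least $1 - \epmaximum$ by the end of the \critical{} period. Finally, summing phase lengths yields $\sum_{i=1}^{I}(i+1)\,2^i = I \cdot 2^{I+1}$ slots up to and including epoch $I$, and $2^I \leq 2(M + \vmaximum)$ delivers the claimed $O((M + \vmaximum)\log(M + \vmaximum))$ bound.

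The main obstacle is the second step: despite nodes being in different epoch/phase indices at any fixed absolute instant because of start-time skew, I must certify that they are all simultaneously in the \emph{same} phase (matching indices $I$ and $k$), that this common window lasts at least $M$ slots, and that the shared value of $\destimate$ in that window is a valid upper bound on $\dmaximum$. Once this alignment is pinned down, the remainder is a straightforward reduction to \algoref{variable|degree}'s coverage analysis combined with a standard exponential/union-bound argument.
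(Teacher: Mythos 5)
Your proposal is correct and follows essentially the same route as the paper: it identifies the same \critical{} phase (phase $\log \dmaximum_0$ of epoch $\ceil{\log(M+\vmaximum)}$), establishes the same $M$-slot mutual overlap via the $\vmaximum$ slack bound, reuses the per-slot coverage probability $\frac{\srminimum}{16\max(\acsmaximum,\dmaximum_0)}$ from the analysis of \algoref{variable|degree}, and closes with the same exponential bound, union bound, and epoch-sum computation. The only additions are minor bookkeeping the paper leaves implicit (checking that epoch $\ceil{\log(M+\vmaximum)}$ actually contains a phase numbered $\log\dmaximum_0$), which is a harmless refinement rather than a different argument.
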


\section{Neighbor Discovery in Asynchronous System}
\label{sec:asynchronous}

% Vasudevan \emph{et al.} present a neighbor discovery algorithm for an asynchronous system in \cite{VasTow+:2009:MOBICOM} under the assumption that the \emph{maximum offset} between any two clocks is bounded and the bound is known in advance. Moreover, nodes agree in advance when to initiate neighbor discovery. Specifically, each node starts executing the neighbor discovery algorithm once its local clock reaches a certain previously agreed upon value.   

%% TODO

We first describe an algorithm that assumes that nodes know a good upper bound on maximum node degree. We then describe a  more general algorithm in which no such knowledge is assumed. The second algorithm, however, requires a tighter bound on the clock drift rate than the first algorithm. Both algorithms are derived from their synchronous counterparts in which different nodes can start at different times.

\subsection{Knowledge of Good Upper Bound on Maximum Node Degree}
\label{sec:asynchronous|degree}

Our neighbor discovery algorithm for an asynchronous system is based on our neighbor discovery algorithm for a synchronous system with variable start times (\algoref{variable|degree}).  Let $\destimate$ denote an upper bound on the maximum node degree as known to all nodes. In addition, our algorithm makes the following assumption about the maximum drift rate of the  clock of any node $\drmaximum$: 

\begin{assumption}
\label{asm:drift|rate}
The maximum drift rate of the clock of any node is bounded by $\frac{1}{7}$ seconds/second.
\end{assumption}

The offset or skew  between clocks of any two nodes may be arbitrarily large and, in fact, may grow with time. We now describe how to extend \algoref{variable|degree} to solve the neighbor discovery algorithm in an asynchronous system.

\begin{figure}[tp]
\centerline{\resizebox{4.75in}{!}{\input{frame-local.pstex_t}}}
\caption{\label{fig:frame|local} Execution of a node with respect to its local clock. All frames are of length $\flength$. All slots are of length $\frac{\flength}{3}$.}
\end{figure}

\begin{figure}[tp]
\centerline{\resizebox{5.75in}{!}{\input{frame-global.pstex_t}}}
\caption{\label{fig:frame|global} Execution of the network with respect to real time. Frames may be of different lengths (even within the same node). Slots may be of different lengths (even within the same frame).}
\end{figure}

Each node divides  its time into equal-sized \emph{frames}. Frames of different nodes are \emph{not} required to be synchronized and may, in fact, be \emph{misaligned}. A node measures the duration of a frame using its \emph{local} clock. The length of a frame \emph{as measured by a node using its local clock} is same for all nodes, say $\flength$. Note that, because of the clock drift, duration of a frame, when projected on real time, may be different from $\flength$---shorter than $\flength$ in case of positive drift and longer than $L$ in case of negative drift. Specifically, it can be verified that the length of a frame in real time lies in the range:
\begin{equation}
\label{eq:range|realtime}
\frac{\flength}{1 + \drmaximum} \; \leq \; \text{frame length in real-time} \; \leq \; \frac{\flength}{1 - \drmaximum}
\end{equation}
%Venky intution about 3 slots/frame?
A node further divides each frame into three equal-sized slots---equal with respect to its local clock. Therefore the duration of a slot as measured by a node using its local clock is $\frac{\flength}{3}$. At the beginning of each frame, a node $u$ randomly selects a channel from its available channel set, say $c$. It then transmits on $c$ during each slot of the frame with probability $\min\left( \frac{1}{2}, \frac{\card{\acs{u}}}{3 \destimate}\right)$ and listens on $c$ during the entire frame with the remaining probability. In the former case, $u$ transmits the same message during each slot of the frame. In the latter case, $u$ listens for any clear messages it may receive during \emph{any part} of the frame. The partition of a frame into slots is not important for a receiving node; they are only used by a transmitting node. Also, a node may receive multiple clear messages while listening during a single frame.  The execution of a node with respect to its local clock is shown in \figref{frame|local}. The execution of the network with respect to common real time is shown in \figref{frame|global}.

\begin{algorithm}[t]
\SetKwInput{Input}{Input}
\SetKwInput{Output}{Output}
\tcp{Algorithm for node $u$} 
\BlankLine
\Input{An upper bound on maximum node degree, say $\destimate$.}
\Output{The set of neighbors along with the subset of channels that are common with the neighbor.}
\BlankLine 
\BlankLine
\tcp{compute the transmission probability}
$p \leftarrow \min\left( \frac{1}{2}, \frac{\card{\acs{u}}}{3\destimate} \right)$\;
\While{\true{}}{
\tcp{select a channel}
$c \leftarrow$ channel selected uniformly at random from $\acs{u}$\;
tune the transceiver to $c$\;
switch to transmit mode with probability $p$ and  receive mode with probability $1-p$\;
\uIf{(in transmit mode)}{transmit a message containing $\acs{u}$ during each slot of the frame\;}
\Else{
\tcp{in receive mode}
\ForEach{(clear message received during the frame)}{
let the received message be sent by node $v$ containing set $A$\;
add $\ang{v,A \cap \acs{u}}$ to the set of neighbors\;
}
%% end if
} 
%% end while
}
\caption{Neighbor discovery algorithm for an asynchronous system with a bound on clock drift rate and knowledge of upper bound on maximum node degree.}
\label{alg:asynchronous|degree}
\end{algorithm}

A formal description of the algorithm is shown in \algoref{asynchronous|degree}. We now analyze the running time of the algorithm. In our analysis, unless otherwise stated, time refers to \emph{real-time}. Of course, nodes do not have access to the real-time. 
We define two notions we use in our analysis as follows:

\begin{definition}[\legal{} pair] We say that a pair of frames $\lpair{f}{g}$ is \emph{\legal}  if at least one slot of $f$ lies completely within $g$. 
\end{definition}

\begin{definition}[overlapping frames]
For a  frame $f$ and a node $u$, let $\overlap{f}{u}$ denote the subset of frames of $u$ that overlap in real-time with $f$.  Further, let $\overlapAll{f}$ denote the subset of all frames that overlap in real-time with $f$.
\end{definition}

For example, as per the execution shown in \figref{frame|global}, pairs $\lpair{f_1}{g_1}$ and $\lpair{f_2}{h_1}$ are \legal{} whereas the pair $\lpair{f_1}{h_1}$ is not. Also, $\overlap{g_2}{v} = \{ f_1, f_2 \}$ and $\overlapAll{g_2} = \{ f_1, f_2,  g_2, h_1, h_2 \}$. 

One of the arguments we commonly use in our analysis is:  ``$x$ adjacent slots of a node cannot strictly contain $y$ adjacent slots of another node'' for certain specific values of $x$ and $y$ with $x < y$. To prove this, we argue that, for the statement to be false, the following inequality must hold:
\[
\frac{xL}{3 (1 - \drmaximum)} > \frac{y L}{3(1+ \drmaximum)} 
\]
and then show that it contradicts \asmref{drift|rate}. The above inequality must hold for the statement to be false because the left hand side denotes the largest possible length of the time interval containing $x$ adjacent slots, and the right hand side denotes the smallest possible length of the time interval containing $y$ adjacent slots. 

\begin{lemma} 
\label{lem:overlap}
A frame of a node overlaps with at most three frames of any other node. Formally,
\[\forall f, u ::  \card{\overlap{f}{u}} \leq 3\]
\end{lemma}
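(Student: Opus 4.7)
The plan is to argue by contradiction: suppose some frame $f$ overlaps with at least four frames of node $u$, and derive a violation of Assumption~\ref{asm:drift|rate}. Label four consecutive frames of $u$ that each overlap $f$ as $g_1, g_2, g_3, g_4$ in temporal order. The key geometric step I would establish first is that the union $g_2 \cup g_3$ lies entirely inside the real-time interval occupied by $f$. Since $g_1$ overlaps $f$, the end of $g_1$ cannot occur before the start of $f$; because consecutive frames of $u$ abut one another, that endpoint is exactly the start of $g_2$, so $g_2$ begins no earlier than $f$. A symmetric argument using $g_4$ shows that $g_3$ ends no later than $f$. Hence $[\text{start}(g_2), \text{end}(g_3)] \subseteq f$.

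Once that containment is in hand, the rest follows from the slot-length comparison template the excerpt highlights right before the lemma. The interval spanned by $g_2$ and $g_3$ consists of exactly $6$ consecutive slots of $u$, while $f$ consists of $3$ consecutive slots of its own node. Applying~(\ref{eq:range|realtime}) to individual slots of local length $\flength/3$, the real-time length of $6$ consecutive slots is at least $\frac{6\flength}{3(1+\drmaximum)}$, while the real-time length of $3$ consecutive slots is at most $\frac{3\flength}{3(1-\drmaximum)}$. The containment therefore forces
\[
\frac{3\flength}{3(1-\drmaximum)} \;\geq\; \frac{6\flength}{3(1+\drmaximum)},
\]
which simplifies to $\drmaximum \geq \tfrac{1}{3}$ and contradicts $\drmaximum \leq \tfrac{1}{7}$.

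I do not anticipate any real obstacle beyond carefully justifying the containment step; the algebraic check is exactly the authors' template instantiated with $x = 3$ and $y = 6$. One minor point I would handle explicitly is that the containment derived in the first step may hold with equality at one or both endpoints (the strict version of the template assumes strict containment), so I would phrase the final comparison with the non-strict inequality above — the slack of $\tfrac{1}{3}$ versus $\tfrac{1}{7}$ is more than enough to absorb this subtlety.
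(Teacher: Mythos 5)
Your proof is correct and follows essentially the same route as the paper's: both argue by contradiction that four overlapping frames force two consecutive frames of $u$ to be contained in $f$, and both then compare the maximal real-time length $\frac{\flength}{1-\drmaximum}$ of one frame against the minimal real-time length $\frac{2\flength}{1+\drmaximum}$ of two frames to conclude $\drmaximum \geq \frac{1}{3}$, contradicting \asmref{drift|rate}. Your write-up merely makes explicit the containment step that the paper asserts in one line, and your handling of the non-strict endpoint case is a harmless refinement.
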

\begin{proof}
Assume, in order to obtain a contradiction, that a frame of some node, say $f$, overlaps with at least four frames of another node. This implies that there are at least two consecutive frames that are strictly contained within $f$. From \eqref{eq:range|realtime}, a frame of a node can strictly contain two frames of another node only if the following condition holds:
\[
\frac{\flength}{1 - \drmaximum} > \frac{2\flength}{1 + \drmaximum}  \quad \implies \quad \drmaximum > \frac{1}{3}
\]
This contradicts \asmref{drift|rate}.  
\end{proof}

Consider the link from node $v$ to node $u$ on channel $c$. Also, consider frames $f$ of $v$ and $g$ of $u$ such that the pair $\lpair{f}{g}$ is \legal{}. We extend the notion of a link covered by a time-slot (defined in \secref{identical|degree}) to a link covered by an \legal{} pair of frames. Specifically, the pair of \legal{} frames $\lpair{f}{g}$ \emph{covers} the link $\link{v}{u}$ on channel $c$ if: \begin{inparaenum}[(i)]
\item $v$ transmits on $c$ during $f$, 
\label{enum:pair|1}
\item $u$ listens on $c$ during $g$, and 
\label{enum:pair|2} 
\item no other neighbor of $u$, say $w$, transmits on $c$ during any frame in $\overlap{g}{w}$.
\label{enum:pair|3}  
\end{inparaenum} Also, $\lpair{f}{g}$ covers $\link{v}{u}$ if $\lpair{f}{g}$ covers $\link{v}{u}$ on some channel $c \in \lspan{v}{u}$.
The three conditions  collectively ensure that $u$ receives a clear message from $v$ (on $c$ during $g$) provided $\lpair{f}{g}$ is \legal{}. We have:

\begin{lemma}
\label{lem:legal|atleast}
If $\lpair{f}{g}$ is \legal{}, then $\lpair{f}{g}$ covers $\link{v}{u}$ with probability at least $\frac{\srminimum}{8\max(2\acsmaximum, 3\destimate)}$.
\end{lemma}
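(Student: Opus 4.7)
The plan is to mirror the analysis structure used in the synchronous case (\secref{identical|degree}), but with the additional complication that each ``other neighbor'' of $u$ may overlap the receive frame $g$ with up to three of its own transmit frames. First I would fix an arbitrary channel $c \in \lspan{v}{u}$ and decompose the event ``$\lpair{f}{g}$ covers $\link{v}{u}$ on $c$'' into three sub-events: event $\hat A$ that $v$ transmits on $c$ during frame $f$; event $\hat B$ that $u$ selects $c$ and listens during the entire frame $g$; and event $\hat C$ that no other neighbor of $u$ on $c$ transmits on $c$ during any of its own frames that overlap $g$. Because these three events involve the independent random choices of disjoint sets of nodes (respectively $v$, $u$, and the other neighbors of $u$ on $c$), they are mutually independent.

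Next I would bound each probability separately. For $\hat A$, multiplying the $\tfrac{1}{|\acs{v}|}$ probability of selecting $c$ with the transmission probability $\min\bigl(\tfrac{1}{2}, \tfrac{|\acs{v}|}{3\destimate}\bigr)$ and simplifying the $\min$ yields $\Pr(\hat A) \geq \tfrac{1}{\max(2\acsmaximum,\,3\destimate)}$. For $\hat B$, since the transmit probability is at most $\tfrac{1}{2}$, the receive probability is at least $\tfrac{1}{2}$, so $\Pr(\hat B) \geq \tfrac{1}{2|\acs{u}|}$. These two bounds are essentially the same calculations as their synchronous counterparts (the $3\destimate$ factor simply comes from the modified transmission probability in \algoref{asynchronous|degree}).

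The main obstacle is bounding $\Pr(\hat C)$, since now a single other neighbor $w$ may have several frames overlapping $g$. Here I would invoke \lemref{overlap} to conclude that $|\overlap{g}{w}| \leq 3$. In any individual frame of $w$, the probability that $w$ transmits on $c$ is at most $\tfrac{1}{3\destimate}$ (by the same $\min$-manipulation as for $\hat A$), and $w$'s random choices across distinct frames are independent. Combining independence across the at most $3$ overlapping frames of $w$ and then across the at most $\degreeon{u}{c}-1 \leq \destimate - 1$ other neighbors of $u$ on $c$ gives $\Pr(\hat C) \geq \bigl(1-\tfrac{1}{3\destimate}\bigr)^{3(\degreeon{u}{c}-1)} \geq \bigl(1-\tfrac{1}{3\destimate}\bigr)^{3\destimate} \geq \tfrac{8}{27} \geq \tfrac{1}{4}$, where the penultimate inequality uses the standard monotonicity of $(1-1/n)^n$ for integer $n \geq 1$ together with $3\destimate \geq 3$.

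Finally, I would multiply the three bounds to obtain the probability of coverage on a fixed channel $c$, namely $\tfrac{1}{8|\acs{u}|\max(2\acsmaximum,\,3\destimate)}$, and then sum over $c \in \lspan{v}{u}$. Since $u$ listens on at most one channel during $g$, the events ``covered on $c$'' for distinct channels are pairwise disjoint, so the probabilities add. This yields $\tfrac{|\lspan{v}{u}|}{8|\acs{u}|\max(2\acsmaximum,\,3\destimate)}$. Recognizing that $|\lspan{v}{u}|/|\acs{u}|$ is by definition the span-ratio of $\link{v}{u}$ and is therefore at least $\srminimum$, the bound simplifies to $\tfrac{\srminimum}{8\max(2\acsmaximum,\,3\destimate)}$, as desired.
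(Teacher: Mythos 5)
Your proposal is correct and follows essentially the same route as the paper's proof: the same decomposition into the three independent events $\AF{f}{g}{c}$, $\BF{f}{g}{c}$, $\CF{f}{g}{c}$, the same use of \lemref{overlap} to cap the number of overlapping frames of each interfering neighbor at three, and the same final summation over the disjoint per-channel coverage events to pull in the span-ratio $\srminimum$. The only cosmetic differences are that you make the disjointness of the per-channel events explicit and evaluate the $\bigl(1-\tfrac{1}{3\destimate}\bigr)^{3\destimate}$ bound at $8/27$ rather than the paper's generic $\geq \tfrac{1}{4}$; neither changes the argument.
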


\begin{proof}
Analogous to $\A{\tau}{c}$, $\B{\tau}{c}$ and $\C{\tau}{c}$ defined in \secref{identical|degree}, let $\AF{f}{g}{c}$, $\BF{f}{g}{c}$, $\CF{f}{g}{c}$ denote the three events corresponding to the three conditions \enumref{pair|1}, \enumref{pair|2} and \enumref{pair|3}, respectively, required for coverage.
Note that the three events are \emph{mutually independent}. As before, we compute the  probability of occurrence of the three events separately.

\paragraph{Computing the probability of occurrence of $\AF{f}{g}{c}$} We have:
\begin{align}
\nonumber
\Pr\{\AF{f}{g}{c}\} & =  (v \text{ selects } c \text{ at the beginning of } f) \land \\
\nonumber
& \phantom{=~} (v \text{ chooses to transmit during } f) \\
\nonumber
& =   \frac{1}{\card{\acs{v}}} \times \min\left( \frac{1}{2}, \frac{\card{\acs{v}}}{3 \destimate}  \right) \\
\nonumber
& =  \min\left( \frac{1}{2\card{\acs{v}}}, \frac{1}{3 \destimate} \right) \\
& \geq \frac{1}{\max( 2 \acsmaximum, 3 \destimate )} 
\label{eq:AA|1}
\end{align}

\paragraph{Computing the probability of occurrence of  $\BF{f}{g}{c}$} We have:
\begin{align}
\nonumber
\Pr\{\BF{f}{g}{c}\} & =  (u \text{ selects } c \text{ at the beginning of } g) \land \\
\nonumber
& \phantom{=~} (u \text{ chooses to listen during } g) \\
\nonumber
& =  \frac{1}{\card{\acs{u}}} \times \left\{1 - \min\left( \frac{1}{2}, \frac{\card{\acs{u}}}{3 \destimate}  \right) \right\} \\
\nonumber
& \phantom{=} \big\{ \min(x, y) \leq x \big\} \\
& \geq  \frac{1}{\card{\acs{u}}} \times  \left( 1 - \frac{1}{2} \right)  \; = \; \frac{1}{2 \card{\acs{u}}}
\label{eq:BB|1}
\end{align}

\paragraph{Computing the probability of  occurrence of $\CF{f}{g}{c}$} 
Let $\incoming{u}{c}$ denote the set of  neighbors of $u$ on $c$. Note that, if $\incoming{u}{c}$ only contains $v$, then $\Pr\{ \CF{f}{g}{c} \} = 1$. Otherwise, we have:
\begin{align}
\nonumber
& \Pr\{\CF{f}{g}{c}\} \\
\nonumber
& =  \prod_{\substack{w \in \incoming{u}{c} \\ w \neq v}} \!\! \Pr\left( \begin{array}{@{}c@{}} w \text{ does not transmit on } c \text{ during any frame} \\[-0.125em] \text{in } \overlap{g}{w} \end{array} \right) \\
\nonumber
& =  \prod_{\substack{w \in \incoming{u}{c} \\ w \neq v}} \prod_{h \in \overlap{g}{w}} \!\!
\Pr\left( \begin{array}{@{}c@{}} w \text{ does not transmit on } c \\[-0.125em] \text{ during  } h \end{array} \right)
 \\
\nonumber
& =  \prod_{\substack{w \in \incoming{u}{c} \\ w \neq v}}  \prod_{h \in \overlap{g}{w}} \!\! \{1 - \Pr(w \text{ transmits on } c \text{ during } h)\} \\
\nonumber
& \geq  \prod_{\substack{w \in \incoming{u}{c} \\ w \neq v}}  \prod_{h \in \overlap{g}{w}} \!\! \left\{ 1 -  \frac{1}{\card{\acs{w}}} \times \min\left( \frac{1}{2}, \frac{\card{\acs{w}}}{3 \destimate} \right) \right\} \\
\nonumber
& =  \prod_{\substack{w \in \incoming{u}{c} \\ w \neq v}}  \prod_{h \in \overlap{g}{w}} \!\! \left\{ 1 - \min\left( \frac{1}{2\card{\acs{w}}}, \frac{1}{3 \destimate} \right) \right\} \\
\nonumber
& \phantom{=} \big\{ \min(x,y) \leq y \big\} \\
\nonumber
& \geq  \prod_{\substack{w \in \incoming{u}{c} \\ w \neq v}} \prod_{h \in \overlap{g}{w}} \!\!  \left( 1 - \frac{1}{3 \destimate} \right)  \\
\nonumber
& \phantom{=} \big\{ \text{using \lemref{overlap}, } \card{\overlap{g}{w}} \leq 3 \big\} \\
\nonumber
& \geq  \prod_{\substack{w \in \incoming{u}{c} \\ w \neq v}}  \left( 1 - \frac{1}{3 \destimate} \right)^{3} \\ 
\nonumber 
& =  \left(1 - \frac{1}{3 \destimate} \right)^{3 (\card{\incoming{u}{c}} - 1)} \\
\nonumber 
& \phantom{=} \big\{  \card{\incoming{u}{c}} - 1 =  \degreeon{u}{c} -1 \leq \destimate \big\}  \\
\nonumber
& \geq  \left(1 - \frac{1}{3 \destimate} \right)^{3 \destimate}  \\
\nonumber
& \phantom{=} \left\{ \forall x \geq 2,  \left( 1 - \tfrac{1}{x} \right)^x \text{ is a monotonically increasing} \right. \\
\nonumber
& \phantom{=\left\{\right.} \left.\text{function of } x \text{ and thus } \geq \tfrac{1}{4} \right\}  \\
& \geq  \frac{1}{4} 
\label{eq:CC|1}
\end{align}

Finally, using a derivation similar to that for \eqref{eq:stage|cover}, it can be  shown that $\lpair{f}{g}$ covers $\link{v}{u}$ with probability at least $\frac{\srminimum}{8\max(2\acsmaximum, 3\destimate)}$. 
\end{proof}

For a frame $f$, we use $\node{f}$ to denote the node to which $f$ belongs. For example, as per the execution show in \figref{frame|global}, $\node{f_1} = v$ and $\node{h_1} = w$. We now define a precedence relation between pairs of frames referred to as  \emph{frame-pairs} for short.
 
\begin{definition}[\precedence{} relation]
For frame-pairs $\lpair{f}{g}$ and $\lpair{p}{q}$, we say that $\lpair{f}{g}$ \emph{precedes} $\lpair{p}{q}$, denoted $\lpair{f}{g} \under{} \lpair{p}{q}$, if \begin{inparaenum}[(i)]  \item $\node{f} = \node{p}$, \item $\node{g} = \node{q}$, \item the start time of $f$ is before that of  $p$, and \item the start time of $g$ is before that of  $q$. 
\end{inparaenum}
\end{definition}

For example, as per the execution shown in \figref{frame|global}, $\lpair{f_1}{g_1} \under \lpair{f_2}{g_2}$ and $\lpair{f_1}{g_1} \under \lpair{f_3}{g_2}$ but $\lpair{f_1}{g_1} \not\under \lpair{f_1}{g_2}$. We next define a notion on a sequence of frame-pairs that intuitively enables us to treat coverage provided by different frame-pairs as essentially independent events.

\begin{definition}[\admissible{} sequence] A sequence $\sigma$ of $M$ frame-pairs $\big\{ \lpair{f_{i_k}}{g_{j_k}} \big\}_{1 \leq k \leq M}$ is \emph{\admissible} with respect to the link $\link{v}{u}$ if it satisfies the following conditions:
\begin{enumerate}
\item $\forall i: 1 \leq k \leq M: \node{f_{i_k}} = v$ and $\node{g_{j_k}} = u$,
\item $\forall i: 1 \leq k < M: \lpair{f_{i_k}}{g_{j_k}} \under \lpair{f_{i_{k+1}}}{g_{j_{k+1}}}$,
\item $\forall i: 1 \leq k \leq M: \lpair{f_{i_k}}{g_{j_k}}$ is \legal{}, and
\item $\forall i: 1 \leq k < M: \overlapAll{g_{i_k}} \cap \overlapAll{g_{i_{k+1}}}$ $= \emptyset$.
\end{enumerate}
\end{definition}

For a sequence $\sigma$ of frame-pairs that is admissible with respect to $\link{v}{u}$, we say that $\sigma$ covers $\link{v}{u}$ if some frame-pair in the sequence covers $\link{v}{u}$.

\begin{lemma}
\label{lem:sequence|covers}
Let $\sigma$ be a sequence of $\frac{8\max(2\acsmaximum, 3 \destimate)}{\srminimum} \ln\left( \frac{\nsize^2}{\epsilon} \right)$ frame-pairs such that $\sigma$ is admissible with respect to $\link{v}{u}$. Then the probability that $\sigma$ does not cover $\link{v}{u}$ is at most  $\frac{\epmaximum}{\nsize^2}$.
\end{lemma}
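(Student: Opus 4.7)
The plan is to reduce the claim to an independent-trials calculation. By Lemma~\ref{lem:legal|atleast}, any single \legal{} frame-pair covers $\link{v}{u}$ with probability at least $p := \frac{\srminimum}{8\max(2\acsmaximum, 3\destimate)}$, and $\sigma$ has length $M := \frac{1}{p}\ln\!\left(\frac{\nsize^2}{\epmaximum}\right)$. If I can treat the $M$ per-pair coverage events as mutually independent, the bound follows from
\[
\Pr(\sigma \text{ fails to cover } \link{v}{u}) \;\le\; (1-p)^M \;\le\; e^{-pM} \;=\; \frac{\epmaximum}{\nsize^2}.
\]
Thus the entire content of the argument is a justification of independence.

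To locate that independence, I would revisit the proof of Lemma~\ref{lem:legal|atleast}: the coverage event for the pair $\lpair{f_{i_k}}{g_{j_k}}$ depends only on the channel and transmit-or-receive choices that nodes make at the start of the frames in $\{f_{i_k}\} \cup \{g_{j_k}\} \cup \bigcup_{w \in \incoming{u}{c},\, w \ne v} \overlap{g_{j_k}}{w}$, and this set is contained in $\overlapAll{g_{j_k}}$ (because legality of $\lpair{f_{i_k}}{g_{j_k}}$ forces $f_{i_k} \in \overlapAll{g_{j_k}}$). Since each node draws its per-frame choices independently of everything else, the $M$ coverage events will be mutually independent once the sets $\overlapAll{g_{j_1}}, \ldots, \overlapAll{g_{j_M}}$ are shown to be pairwise disjoint.

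The main obstacle is that admissibility condition 4 only asserts disjointness for \emph{consecutive} indices, and I need to lift this to all pairs. The argument I would make is a monotonicity one: frames of any single node are arranged consecutively in real time with no gaps, and by admissibility condition 2 the frames $g_{j_1}, g_{j_2}, \ldots$ occur in strict real-time order. Suppose, for contradiction, that some frame $h$ of a node $w$ lay in $\overlapAll{g_{j_k}} \cap \overlapAll{g_{j_{k'}}}$ with $k < k'$. Then, as a single time interval, $h$ meets both $g_{j_k}$ and $g_{j_{k'}}$, so it must cover the entire real-time span running from the start of $g_{j_k}$ to the end of $g_{j_{k'}}$; in particular $h$ meets every intermediate $g$-frame and therefore lies in $\overlapAll{g_{j_{k+1}}}$ as well, contradicting $\overlapAll{g_{j_k}} \cap \overlapAll{g_{j_{k+1}}} = \emptyset$. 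With pairwise disjointness established, the random variables governing the $M$ coverage events are drawn from disjoint sets of frames and are hence mutually independent, so the displayed inequality above yields the claimed bound $\epmaximum/\nsize^2$.
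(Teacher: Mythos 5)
Your proposal is correct and follows essentially the same route as the paper's proof: reduce the claim to mutual independence of the per-pair coverage events by observing that the event for the $k$-th pair is determined only by the random choices made for frames in $\overlapAll{g_{j_k}}$, that these sets are pairwise disjoint, and then apply $(1-p)^M \le e^{-pM} = \epmaximum/\nsize^2$. If anything you are more careful than the paper, which asserts pairwise disjointness of the sets $\overlapAll{g_{j_x}}$ directly from admissibility; your contiguity argument lifting condition~4 from consecutive indices to arbitrary pairs supplies the justification the paper leaves implicit (with the minor nit that $h$ need only contain the span from its intersection point in $g_{j_k}$ to its intersection point in $g_{j_{k'}}$, which still suffices since every intermediate frame of $u$ lies inside that span).
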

\begin{proof} For convenience, let $M = \frac{8\max(2\acsmaximum, 3 \destimate)}{\srminimum} \ln\left( \frac{\nsize^2}{\epsilon} \right)$. Let $E_k$ with $1 \leq k \leq M$ denote the event that $\lpair{f_{i_k}}{g_{j_k}}$ covers $\link{v}{u}$ on some channel. Note that the occurrence of the event $E_k$ depends \emph{only} on frames in $\overlapAll{g_{j_k}}$. Since $\sigma$ is an admissible sequence, for all $x$ and $y$, with $1 \leq x < y \leq M$, $\overlapAll{g_{j_x}} \cap \overlapAll{g_{j_y}} = \emptyset$. In other words, the set of frames that overlap with the frame $g_{j_x}$ are 
distinct from the set of frames that intersect the frame $g_{j_y}$. Since, for each frame, a node chooses its action randomly, events $E_k$s are \emph{mutually independent} of each other. Using a derivation similar to that of \eqref{eq:sequence|cover}, it can be shown that $\sigma$ does not cover $\link{v}{u}$ with probability at most $\frac{\epmaximum}{\nsize^2}$.
\end{proof}

We next show that any execution of the network must contain a ``sufficiently long'' sequence of admissible frame-pairs. 
In the remainder of this section, let $\starttime$ denote the time by which all nodes have initiated the neighbor discovery algorithm, \algoref{asynchronous|degree}.
%%
%% We say that a frame $f$ is \emph{full} with respect to a time instant $T$ if the start time of $f$ is at or after $T$.
%%
%% We say that an execution starting at time $T$ contains at least $M$ \emph{full} frames of a node $u$ if at least $M$ frames of $u$ have their start times at or after $T$. 
%%
To show the existence of a ```sufficiently long'' admissible sequence, we first show that, for any instant of time $T$ after $\starttime$ and, for any pair of neighboring nodes, there is an \legal{} 
pair of frames after but ``close'' to $T$.

\begin{lemma}
\label{lem:legal|close}
Consider a link from node $v$ to node $u$ and some instant of time $T$ with $T \geq \starttime$. Let $f_i$ (respectively, $g_i$) with $i \geq 1$ denote the $i^{th}$ full frame of node $v$ (respectively, node $u$)  \emph{after} $T$. Then some frame in $\{ f_1, f_2 \}$ is \legal{} with some frame in $\{ g_1, g_2 \}$.
\end{lemma}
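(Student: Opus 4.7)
The plan is to combine Equation~\eqref{eq:range|realtime} with \asmref{drift|rate} to obtain real-time bounds: every frame has length in $[\tfrac{7\flength}{8}, \tfrac{7\flength}{6}]$ and every slot has length in $[\tfrac{7\flength}{24}, \tfrac{7\flength}{18}]$. First, since $f_1$ (respectively $g_1$) is the first full frame of $v$ (respectively $u$) after $T$, the preceding frame of each node must straddle $T$ and so began at most one frame length before $T$; hence $t_f, t_g \in [T, T + \tfrac{7\flength}{6}]$, giving $|\Delta| \leq \tfrac{7\flength}{6}$, where $\Delta := t_g - t_f$. I write $L_i$ and $M_i$ for the real-time lengths of $f_i$ and $g_i$.

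Next I would establish a sub-claim: if a $u$-frame $g$ begins at real time $Z$ with $t_f \leq Z \leq t_f + L_1 + 2L_2/3$ and has real-time length at least $\tfrac{7\flength}{9}$, then some slot of $\{f_1,f_2\}$ fits entirely inside $g$. This holds because the six slot-start positions in $f_1 \cup f_2$ (relative to $t_f$) are $0, L_1/3, 2L_1/3, L_1, L_1 + L_2/3, L_1 + 2L_2/3$, with consecutive gaps each at most $\tfrac{7\flength}{18}$. Consequently, the smallest such slot-start at or after $Z$ lies within $Z + \tfrac{7\flength}{18}$, and the slot it begins ends within another $\tfrac{7\flength}{18}$, i.e., within $Z + \tfrac{7\flength}{9}$, which still lies inside $g$.

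I would then apply the sub-claim by case analysis on the sign of $\Delta$. If $\Delta \geq 0$, take $Z = t_g$ and $g = g_1$: the hypotheses hold since $\Delta \leq \tfrac{7\flength}{6} \leq \tfrac{35\flength}{24} \leq L_1 + 2L_2/3$ and $M_1 \geq \tfrac{7\flength}{8} > \tfrac{7\flength}{9}$. If $\Delta < 0$, let $X := t_g + M_1$ and $\Delta' := X - t_f = M_1 - |\Delta|$. When $\Delta' \geq 0$, apply the sub-claim with $Z = X$ and $g = g_2$. When $\Delta' < 0$ (so $g_2$ begins before $f_1$), I would verify directly that the first slot of $f_1$, namely $[t_f, t_f + L_1/3]$, fits inside $g_2$: this reduces to $L_1/3 \leq M_2 - |\Delta'|$, and since $|\Delta'| = |\Delta| - M_1 \leq \tfrac{7\flength}{6} - \tfrac{7\flength}{8} = \tfrac{7\flength}{24}$ and $L_1/3 \leq \tfrac{7\flength}{18}$, the inequality becomes $\tfrac{49\flength}{72} < \tfrac{63\flength}{72} = \tfrac{7\flength}{8} \leq M_2$. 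The hard part will be this last sub-case, where $g_1$ does not even overlap $f_1$; it works only because the tight bound $|\Delta| - M_1 \leq \tfrac{7\flength}{24}$ (which relies critically on $\drmaximum \leq \tfrac{1}{7}$) leaves just enough room inside $g_2$ for slot~$1$ of $f_1$.
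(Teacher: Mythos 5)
Your proof is correct, and it reorganizes the argument in a way that is genuinely different from the paper's. The paper fixes the six slot-start times $a_1,\dots,a_6$ of $f_1,f_2$ and $b_1,\dots,b_6$ of $g_1,g_2$, proves three auxiliary claims by contradiction (each of the form ``$x$ adjacent slots of one node cannot strictly contain $y$ adjacent slots of another,'' which is where $\drmaximum \leq \tfrac{1}{7}$ enters), and then does a case analysis on where $b_1$ falls among the $a_i$. You instead work with explicit real-time length bounds ($[\tfrac{7\flength}{8},\tfrac{7\flength}{6}]$ for frames, $[\tfrac{7\flength}{24},\tfrac{7\flength}{18}]$ for slots) and prove one uniform covering sub-claim: any $u$-frame whose start falls in the interval spanned by the six slot-starts of $f_1\cup f_2$ must contain a whole slot, because consecutive slot-starts are at most $\tfrac{7\flength}{18}$ apart and a frame is longer than two maximal slots. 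This single sub-claim absorbs the paper's Case~1 and most of Case~2, leaving only the corner case where $g_2$ begins before $f_1$ (your $\Delta' < 0$, the analogue of the paper's Case~2.2), which you close with the bound $\abs{\Delta}-M_1 \leq \tfrac{7\flength}{24}$. All the arithmetic checks out, and every pair you produce is among $\lpair{f_1}{g_1},\lpair{f_1}{g_2},\lpair{f_2}{g_1},\lpair{f_2}{g_2}$ with the correct orientation (a slot of the $v$-frame inside the $u$-frame). One small remark: your closing comment that the last sub-case ``relies critically on $\drmaximum \leq \tfrac17$'' overstates things --- each of your three inequalities (the sub-claim's length requirement, the bound $\abs{\Delta}\leq L_1 + 2L_2/3$, and the final sub-case) in fact holds whenever $\drmaximum \leq \tfrac15$, so your proof establishes the lemma under a slightly weaker drift assumption than the paper's, whose Claim~1 genuinely needs $\tfrac17$. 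What your route buys is modularity and a visibly quantitative handle on where the drift bound is used; what the paper's route buys is that each step is a self-contained contradiction needing no bookkeeping of real-time offsets.
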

\begin{proof}
Note that the frames $f_1$ and $f_2$ together contain six slots. Let $\slotv{i}$ for $i = 1, 2, \ldots, 6$ denote the start times of the six slots numbered in the increasing order of their start times (see \figref{slots}). Likewise, let $\slotu{i}$ for $i = 1, 2, \ldots, 6$ denote the start times of the six slots of the frames $g_1$ and $g_2$. \\

\begin{figure}[tp]
\centerline{\resizebox{\columnwidth}{!}{\input{slots.pstex_t}}}
\caption{\label{fig:slots} Slots used in the proof of \lemref{legal|close}.}
\end{figure}

\begin{figure}[tp]
\centerline{\resizebox{\columnwidth}{!}{\input{cases.pstex_t}}}
\caption{\label{fig:cases} Various cases in the proof of \lemref{legal|close}.}
\end{figure}

\noindent
\textit{Claim 1:} 
We first show that $\slotu{1} \leq \slotv{5}$.  Assume, by the way of contradiction, that $\slotv{5} < \slotu{1}$. Note that, by definition of $g_1$, there is only a partial frame of $u$ between $T$ and $\slotu{1}$. This, in turn, implies that a partial frame of $u$ contains at least four slots of $v$. This can happen only if the following condition holds:
\[
\frac{\flength}{1 - \drmaximum} > \frac{4\flength}{3 (1 + \drmaximum)} \quad \implies \quad \drmaximum > \frac{1}{7} 
\]
This contradicts \asmref{drift|rate}. \\

\noindent
\textit{Claim 2:} 
Likewise, we can show that $\slotv{1} \leq \slotu{5}$. \\

\noindent
\textit{Claim 3:} 
If $\slotv{i-1} \leq \slotu{1} \leq \slotv{i}$ for some $i$ with $1 < i \leq 5$, then  $\slotu{1} \leq \slotv{i} < \slotv{i+1} \leq \slotu{4}$. It suffices to show that $\slotv{i+1} \leq \slotu{4}$. Assume, by the way of contradiction, that $\slotu{4} < \slotv{i+1}$. This implies that $\slotv{i-1} \leq \slotu{1} < \slotu{4} < \slotv{i+1}$. In other words, two adjacent slots of $v$ strictly contain an entire frame of $u$. This can happen only if the following condition holds:
\[
\frac{2 \flength}{3(1 - \drmaximum)} > \frac{\flength}{1 + \drmaximum} \quad \implies \quad \drmaximum > \frac{1}{5}
\]
This contradicts \asmref{drift|rate}. \\

We now prove the lemma statement. 
We consider two cases depending on whether $f_1$ or $g_1$ has earlier start time.

\begin{itemize}
\item \textit{Case 1 ($\slotv{1} < \slotu{1}$):} In this case, we have $\slotv{1} <  \slotu{1} \leq \slotv{5}$ (see \figref{cases}(a)). The rightmost inequality follows from Claim~1. The time interval from $\slotv{1}$ to $\slotv{5}$ consists of four contiguous slots of $v$: $[\slotv{1}, \slotv{2}]$, $[\slotv{2}, \slotv{3}]$, $[\slotv{3}, \slotv{4}]$ and $[\slotv{4}, \slotv{5}]$. Clearly,  $\slotu{1}$ lies in at least one of them. (If $\slotu{1}$ lies on the boundary of two slots, we select the earlier one.) In any case, using Claim~3, we can show that at least  one of the slots $[\slotv{2}, \slotv{3}]$, $[\slotv{3}, \slotv{4}]$, $[\slotv{4}, \slotv{5}]$ or $[\slotv{5}, \slotv{6}]$ is contained in the frame $g_2$. This implies that either $\lpair{f_1}{g_1}$ or $\lpair{f_2}{g_1}$ is \legal{}.

\item \textit{Case 2 ($\slotu{1} \leq \slotv{1}$):} If $\slotv{2} \leq \slotu{4}$, then $\slotu{1} \leq \slotv{1} < \slotv{2} \leq \slotu{4}$ (see \figref{cases}(b)). In other words, the first slot of the frame $f_1$ is contained with the frame $g_1$, which implies that $\lpair{f_1}{g_1}$ is \legal{}. Therefore assume that $\slotu{4} < \slotv{2}$. We have two subcases depending on where $\slotv{1}$ lies relative to $\slotu{4}$. Let $\slotu{7}$ denote the end time of the frame $g_2$.
\begin{itemize}
\item \textit{Case 2.1 ($\slotv{1} \leq \slotu{4}$):}  In this case, we show that the slot $[\slotv{2}, \slotv{3}]$ is contained within the frame $[\slotu{4}, \slotu{7}]$ (see \figref{cases}(c)).  To that end, we first prove that $\slotv{3} < \slotu{7}$. If not, then $\slotv{1} \leq \slotu{4} < \slotu{7} \leq \slotv{3}$. In other words, two adjacent slots of $v$ strictly contain an entire frame of $u$. This can happen only if the following condition holds:
\[
\frac{2 \flength}{3(1 - \drmaximum)} > \frac{\flength}{1 + \drmaximum} \quad \implies \quad \drmaximum > \frac{1}{5}
\]
This contradicts \asmref{drift|rate}. Therefore, we have $\slotu{4} < \slotv{2} < \slotv{3} < \slotu{7}$, which implies that the pair $\lpair{f_1}{g_2}$ is \legal{}.
\item \textit{Case 2.2 ($\slotu{4} < \slotv{1}$):} In this case, we show that the slot $[\slotv{1}, \slotv{2}]$ is contained within the frame $[\slotu{4}, \slotu{7}]$ (see \figref{cases}(d)). To that end, we show that $\slotv{2} \leq \slotu{7}$. If not, then $\slotv{1} \leq \slotu{5} < \slotu{7} < \slotv{2}$. (The inequality $\slotv{1} \leq \slotu{5}$ follows from Claim~2.) In other words, one slot of $v$ strictly contains at least two adjacent slots of $u$. This can happen only if the following condition holds:
\[
\frac{\flength}{3(1 - \drmaximum)} > \frac{2 \flength}{3(1 + \drmaximum)} \quad \implies \quad \drmaximum > \frac{1}{3}
\]
This contradicts \asmref{drift|rate}. Therefore, we have $\slotu{4} < \slotv{1} < \slotv{2} \leq \slotu{7}$, which implies that the pair $\lpair{f_1}{g_2}$ is \legal{}.
\end{itemize}
\end{itemize}

In all cases, we show that one of the four pairs $\lpair{f_1}{g_1}$, $\lpair{f_1}{g_2}$, $\lpair{f_2}{g_1}$ or $\lpair{f_2}{g_2}$ is aligned.
\end{proof}

We now show that existence of a ``sufficiently long'' admissible sequence in any execution of \algoref{asynchronous|degree}.

\begin{lemma}
\label{lem:subsequence}
Consider a link from node $v$ to node $u$.  Further, consider an execution of the network after $\starttime$ that contains at least $M$ full frames of $u$ as well as $v$. Then the execution contains a sequence of at least $\frac{M}{6}$ frame-pairs such that the sequence is admissible with respect to $\link{v}{u}$. 
\end{lemma}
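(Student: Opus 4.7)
The plan is to construct the required admissible sequence by a greedy iterative procedure. I initialize a time marker $T_1 = \starttime$, and at each iteration $k$ I apply \lemref{legal|close} at $T_k$ to obtain a legal pair $\lpair{f_{i_k}}{g_{j_k}}$, where $f_{i_k}$ is chosen from the first two full frames of $v$ after $T_k$ and $g_{j_k}$ from the first two full frames of $u$ after $T_k$. I then advance the marker by setting $T_{k+1}$ to the end time of the $u$-frame two positions after $g_{j_k}$, namely the end of $g_{j_k + 2}$. This ensures that the next chosen receiver frame $g_{j_{k+1}}$, lying entirely after $T_{k+1}$, sits at least three positions after $g_{j_k}$ in $u$'s frame sequence, which will be the key to condition (4) of admissibility.

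I then verify the four admissibility conditions. Conditions (1) and (3) are immediate from the construction. For condition (2), the precedence relation holds because $T_{k+1}$ lies past the end of $g_{j_k}$ by construction, and past the end of $f_{i_k}$ as well, since $f_{i_k}$ overlaps $g_{j_k}$ by legality while $T_{k+1}$ sits two further $u$-frames beyond $g_{j_k}$, and the drift bound in \asmref{drift|rate} is tight enough to guarantee $f_{i_k}$ has already ended; then $f_{i_{k+1}}$ and $g_{j_{k+1}}$ both begin after $T_{k+1}$. Condition (4) is the delicate one and the main obstacle: were some frame $h$ to belong to both $\overlapAll{g_{j_k}}$ and $\overlapAll{g_{j_{k+1}}}$, then $h$ would have to overlap every $u$-frame in the index range from $g_{j_k}$ to $g_{j_{k+1}}$ inclusive; since $g_{j_{k+1}}$ is at least three positions after $g_{j_k}$, this is at least four consecutive $u$-frames, contradicting \lemref{overlap}.

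Finally, for the count: since $g_{j_k}$ sits at one of the first two $u$-frame positions past $T_k$, and $T_{k+1}$ is the end of the $u$-frame two positions after $g_{j_k}$, the index of $g_{j_{k+1}}$ exceeds that of $g_{j_k}$ by at most four. Hence $K$ iterations consume at most $4K$ full $u$-frames, and a parallel count using the upper bound $\tfrac{1+\drmaximum}{1-\drmaximum} \leq \tfrac{4}{3}$ (from \asmref{drift|rate}) on the ratio of $v$-frames to $u$-frames in a fixed real-time span shows that at most a similar constant multiple of $K$ full $v$-frames are needed. Taking $K = \lfloor M/6 \rfloor$ keeps both quantities comfortably below $M$, so the construction produces a sequence $\lpair{f_{i_1}}{g_{j_1}}, \ldots, \lpair{f_{i_K}}{g_{j_K}}$ of length at least $M/6$ that is admissible with respect to $\link{v}{u}$, as required.
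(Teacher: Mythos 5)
Your proof is correct and follows essentially the same strategy as the paper's: iterate \lemref{legal|close} to generate legal pairs, and enforce condition (4) of admissibility by making consecutive receiver frames at least three indices apart, so that a frame overlapping two of them would overlap four consecutive frames of $u$ and contradict \lemref{overlap}. The one organizational difference is that the paper first builds a dense sequence of at least $\tfrac{M}{2}$ legal pairs---advancing the marker to the \emph{earlier} of the two frame end-times, so that each step consumes at most two frames of \emph{each} of $u$ and $v$---and only then keeps every third pair, whereas you build the sparse sequence directly by jumping the marker past two extra frames of $u$. Because your marker is defined asymmetrically in terms of $u$'s frames only, you must separately bound how many of $v$'s frames each step consumes, which you do via the drift ratio $\tfrac{1+\drmaximum}{1-\drmaximum}\leq\tfrac{4}{3}$; that count does close (roughly $\tfrac{16}{3}\cdot\tfrac{M}{6}<M$ frames of $v$), but it is the one step worth writing out explicitly, since the paper's symmetric choice of marker makes it unnecessary. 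Your justification of the precedence condition is also slightly heavier than needed: it suffices that $f_{i_k}$ \emph{starts} before $T_{k+1}$ (immediate from legality, since a slot of $f_{i_k}$ lies inside $g_{j_k}$) while $f_{i_{k+1}}$ starts after $T_{k+1}$; you do not actually need $f_{i_k}$ to have ended.
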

\begin{proof}
The proof is by construction. The construction is in two steps. In the first step, we construct a sequence of frame-pairs $\gamma$ that is ``almost'' admissible in the sense that it satisfies the first three properties of an admissible sequence but may not satisfy the fourth property. We show that $\gamma$ contains at least $\frac{M}{2}$ frame-pairs. In the second step, using $\gamma$, we construct a sequence of frame-pairs $\sigma$ that satisfies all four properties of
an admissible sequence. We also show that $\sigma$ contains at least $\frac{M}{6}$ frame-pairs.

\paragraph*{Constructing $\gamma$} To obtain the first frame-pair $\lpair{f_{i_1}}{g_{i_1}}$ that is \legal{}, we apply \lemref{legal|close} to $\starttime$. Now, assume that we have already selected $k$ frame-pairs satisfying the first three properties of an admissible sequence. Let the $k^{th}$ frame-pair be denoted by $\lpair{f_{i_k}}{g_{i_k}}$. To select the next frame-pair, let $T_k$ be defined as the \emph{earlier} of the end times of frames $f_{i_k}$ and $g_{i_k}$. 
%%
%% Note that the start times of both $f_{i_k}$ and $g_{i_k}$ are before $T_k$. 
%% 
To obtain the next frame-pair that is \legal{}, we apply \lemref{legal|close} to $T_k$. Let the frame-pair be denoted by  $\lpair{f_{i_{k+1}}}{g_{i_{k+1}}}$. Clearly, the extended sequence (consisting of $k+1$ \legal{} frame-pairs) satisfies the first and third properties of an admissible sequence. The second property holds because the start times of both $f_{i_k}$ and $g_{i_k}$ are \emph{before} $T_k$, whereas the start times of both $f_{i_{k+1}}$ and $g_{i_{k+1}}$ are \emph{after} $T_k$. We repeatedly select \legal{} frame-pairs using \lemref{legal|close} until we run out of frames of either $u$ or $v$. We now establish a lower bound on the length of $\gamma$. Note that, when selecting the $(k+1)^{st}$ pair, the first full frame of node $v$ after $T_k$ (namely frame $f_1$ in the statement of \lemref{legal|close}) is \emph{adjacent} to the frame $f_{i_k}$. This follows from the definition of $T_k$. This, in turn, implies that the frame $f_{i_{k+1}}$ obtained using \lemref{legal|close} is within a distance of two of the frame $f_{i_k}$.  Likewise, the frame $g_{i_{k+1}}$ is within a distance of two of the frame $g_{i_k}$. As a result, $\gamma$ contains at least $\frac{M}{2}$ frame-pairs.

\paragraph*{Constructing $\sigma$} To construct a sequence $\sigma$ that also satisfies the fourth property of an admissible sequence, we choose every \emph{third} frame-pair of $\gamma$ starting with the first frame-pair $\lpair{f_{i_1}}{g_{i_1}}$. Clearly $\sigma$ also satisfies the first three properties of an admissible sequence (since it is a subsequence of $\gamma$). Let the $k^{th}$ frame-pair of $\sigma$ be denoted by $\lpair{f_{j_k}}{g_{j_k}}$. To prove that $\sigma$ satisfies the fourth property as well, assume, by the way of contradiction, that some frame, say $h$, overlaps with two consecutive frames $g_{j_k}$ and $g_{j_{k+1}}$ for some $k$. Note that, since we selected only every third frame-pair of $\gamma$ to construct $\sigma$, there are at least two other frames of $u$ between $g_{j_k}$ and $g_{j_{k+1}}$. This implies that $h$ overlaps with at least four frames of $u$, which contradicts \lemref{overlap}. This establishes that the length of $\sigma$ is at least $\frac{M}{6}$. 
\end{proof}

Finally, we have the main result.

\begin{theorem}
\label{thm:asynchronous|main}
Let $\starttime$ be the time by which all nodes have initiated neighbor discovery. Also, let $\finishtime$ be the \emph{earliest} time by which each node has executed at least $\frac{48\max(2\acsmaximum, 3 \destimate)}{\srminimum} \ln\left( \frac{\nsize^2}{\epsilon} \right)$  full frames since $\starttime$. Then \algoref{asynchronous|degree} ensures that each node discovers all its neighbors  on all channels with probability at least $1 - \epmaximum$ by time $\finishtime$. 
\end{theorem}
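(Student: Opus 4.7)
The strategy is to assemble the lemmas already established into a union-bound argument over all directed links in the network. The key quantitative choice of the constant $48$ in the threshold for $\finishtime$ is designed precisely so that, after losing a factor of $6$ via \lemref{subsequence}, we are left with an admissible sequence whose length matches the hypothesis of \lemref{sequence|covers}.

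First, I would set $M = \tfrac{48\max(2\acsmaximum, 3\destimate)}{\srminimum} \ln\!\left(\tfrac{\nsize^2}{\epmaximum}\right)$, so that by definition of $\finishtime$ every node has executed at least $M$ full frames during the real-time interval $[\starttime, \finishtime]$. Fix an arbitrary directed link $\link{v}{u}$. Applying \lemref{subsequence} to this interval yields an admissible sequence (with respect to $\link{v}{u}$) of at least $\tfrac{M}{6} = \tfrac{8\max(2\acsmaximum, 3\destimate)}{\srminimum} \ln\!\left(\tfrac{\nsize^2}{\epmaximum}\right)$ frame-pairs. This length is exactly what \lemref{sequence|covers} requires, so the probability that this sequence fails to cover $\link{v}{u}$ is at most $\tfrac{\epmaximum}{\nsize^2}$.

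Next, I would apply the union bound over all links. Since there are at most $\nsize(\nsize-1) \leq \nsize^2$ directed links, the probability that some link fails to be covered by time $\finishtime$ is at most $\nsize^2 \cdot \tfrac{\epmaximum}{\nsize^2} = \epmaximum$. Whenever a link $\link{v}{u}$ is covered by some frame-pair $\lpair{f}{g}$ on a channel $c \in \lspan{v}{u}$, conditions \enumref{pair|1}--\enumref{pair|3} guarantee that $u$ receives a clear message from $v$ during $g$ and thereby discovers $v$ together with the common channel set; hence covering every link is equivalent to every node discovering all its neighbors on all channels.

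The main subtlety—already handled by the prior lemmas, but worth flagging—is that coverage of different frame-pairs in an admissible sequence must be treated as independent events so that the multiplicative failure bound in \lemref{sequence|covers} is valid. This is precisely why the fourth clause of admissibility (the $\overlapAll{\cdot}$-disjointness condition) was introduced, and why we pay the factor of $6$ rather than $2$ in going from $M$ full frames to an admissible sequence of length $M/6$. No additional calculation beyond invoking the three cited lemmas is needed; the proof reduces to verifying that the constants line up and issuing the union bound.
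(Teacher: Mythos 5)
Your proposal is correct and follows essentially the same route as the paper's own proof: fix a link, invoke \lemref{subsequence} to extract an admissible sequence of length at least $\frac{M}{6} = \frac{8\max(2\acsmaximum, 3\destimate)}{\srminimum}\ln\left(\frac{\nsize^2}{\epmaximum}\right)$, apply \lemref{sequence|covers} to bound the per-link failure probability by $\frac{\epmaximum}{\nsize^2}$, and finish with a union bound over the at most $\nsize^2$ links. Your added remarks on the role of the factor of $6$ and the disjointness clause of admissibility are accurate glosses on the lemmas rather than new content.
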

\begin{proof}
Consider a link $\link{v}{u}$. 
From  \lemref{subsequence},  the execution from $\starttime$ to $\finishtime$ contains a sequence of at least  $\frac{8\max(2\acsmaximum, 3 \destimate)}{\srminimum} \ln\left( \frac{\nsize^2}{\epsilon} \right)$  frame-pairs such that the sequence is admissible with respect to $\link{v}{u}$. From \lemref{sequence|covers}, the probability that $\link{v}{u}$ is not covered by $\finishtime$ is at most $\frac{\epmaximum}{\nsize^2}$. This implies that the probability that some link in the network is not covered by $\finishtime$ is at most $\epmaximum$. 
\end{proof}

We now bound the length of the interval $\finishtime - \starttime$.

\begin{theorem} 
\label{thm:asynchronous|time}
Let $\starttime$ and $\finishtime$ be as defined in \thmref{asynchronous|main}. Then the length of the interval 
$\finishtime - \starttime$ is upper bounded by $\left\{ \frac{48 \max(2\acsmaximum, 3 \destimate)}{\srminimum} \ln\left( \frac{\nsize^2}{\epsilon} \right) + 1 \right\} \left( \frac{\flength}{1 - \drmaximum} \right)$.
\end{theorem}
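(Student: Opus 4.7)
The plan is to bound, for each individual node, the real-time duration required for it to execute $M := \tfrac{48\max(2\acsmaximum,3\destimate)}{\srminimum}\ln\!\left(\tfrac{\nsize^2}{\epsilon}\right)$ full frames starting from $\starttime$, and then take the maximum over nodes. Since every node has initiated the algorithm by $\starttime$, the worst case per node is that $\starttime$ falls in the middle of some frame, so the node must first finish its current (partial) frame before it can begin counting full frames.

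First I would fix an arbitrary node $u$ and let $h_0$ denote the frame of $u$ that contains $\starttime$ (if $\starttime$ coincides with a frame boundary we can take $h_0$ to be the empty prefix). The remainder of $h_0$ after $\starttime$ has real-time length at most the maximum length of a whole frame, which by the upper bound in \eqref{eq:range|realtime} is $\tfrac{\flength}{1-\drmaximum}$. After $h_0$ ends, $u$ executes a sequence of full frames $h_1, h_2, \ldots, h_M$; by the same inequality each $h_k$ has real-time length at most $\tfrac{\flength}{1-\drmaximum}$. Therefore the real time from $\starttime$ until $u$ has completed $M$ full frames is at most
\[
\underbrace{\tfrac{\flength}{1-\drmaximum}}_{\text{finish } h_0} \;+\; \underbrace{M \cdot \tfrac{\flength}{1-\drmaximum}}_{\text{frames } h_1,\ldots,h_M} \;=\; (M+1)\,\tfrac{\flength}{1-\drmaximum}.
\]

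Finally, $\finishtime$ is by definition the earliest instant at which \emph{every} node has completed at least $M$ full frames after $\starttime$, so $\finishtime - \starttime$ is the maximum over all nodes of the individual per-node durations bounded above. Since the bound $(M+1)\tfrac{\flength}{1-\drmaximum}$ is independent of which node we picked, this same quantity upper bounds $\finishtime - \starttime$, which is exactly the claimed expression.

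There is no serious obstacle here; the argument is essentially an application of the drift bound \eqref{eq:range|realtime} together with the observation that a node may have to discard at most one partial frame at the beginning. The only point requiring minor care is the $+1$ term, which accounts precisely for the partial frame $h_0$ straddling $\starttime$.
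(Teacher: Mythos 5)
Your proof is correct and follows essentially the same route as the paper's: both arguments reduce to the observation that the interval consists of at most one partial frame plus $M$ full frames for the slowest node, with each frame of real-time length at most $\tfrac{\flength}{1-\drmaximum}$ by \eqref{eq:range|realtime}. The only cosmetic difference is that you bound every node uniformly and take a maximum, whereas the paper directly singles out the witness node that finishes last.
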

\begin{proof}
By our choice of $\finishtime$, there exists some node such that $\finishtime - \starttime$ contains exactly \linebreak $\frac{48 \max(2\acsmaximum, 3 \destimate)}{\srminimum} \ln\left( \frac{\nsize^2}{\epsilon} \right)$ full frames of that node. The first frame of that node in the execution may be partial frame. From \eqref{eq:range|realtime}, the length of each frame is upper bounded by $\frac{\flength}{1 - \drmaximum}$. Combining the three facts, we obtain the result.
\end{proof}

In our algorithm, we divided our frame into three slots. It is possible to divide a frame into only two slots and a result similar to \thmref{asynchronous|time} can be derived. However, the proof is more involved. Likewise, a frame can be divided into four or more slots albeit at the expense of increasing the running time of the algorithm.
 
\subsection{No Knowledge of Maximum Node Degree}
\label{sec:asynchronous|none}

The algorithm for this case is similar to the one for a synchronous system, namely \algoref{variable|none}, but with time-slots replaced with frames and transmission probability calculated as in \algoref{asynchronous|degree}. A formal description of the algorithm is given in \algoref{asynchronous|none}.

\begin{algorithm}[t]
\SetKwInput{Input}{Input}
\SetKwInput{Output}{Output}
\tcp{Algorithm for node $u$} 
\BlankLine
%% \Input{An upper bound on maximum node degree, say $\destimate$.}
\Output{The set of neighbors along with the subset of channels that are common with the neighbor.}
\BlankLine 
\BlankLine
$i \leftarrow 1$\;
\While{\true{}}{
\tcp{epoch $i$}
\For{$j \leftarrow 1$ \KwTo $i + 1$}{
\tcp{compute the transmission probability}
$p \leftarrow \min\left( \frac{1}{2}, \frac{\card{\acs{u}}}{3 \cdot 2^j} \right)$\;
\For{$k \leftarrow 1$ \KwTo $2^i$}{
\tcp{phase $j$ in epoch $i$}
\tcp{select a channel}
$c \leftarrow$ channel selected uniformly at random from $\acs{u}$\;
tune the transceiver to $c$\;
switch to transmit mode with probability $p$ and  receive mode with probability $1-p$\;
\uIf{(in transmit mode)}{transmit a message containing $\acs{u}$ during each slot of the frame\;}
\Else{
\tcp{in receive mode}
\ForEach{(clear message received during the frame)}{
let the received message be sent by node $v$ containing set $A$\;
add $\ang{v,A \cap \acs{u}}$ to the set of neighbors\;
}
%% Else
}
%% end for inner
}
%% end for outer
}
$i \leftarrow i + 1$\;
%% end while
}
\caption{Neighbor discovery algorithm for an asynchronous system with a bound on clock drift rate.}
\label{alg:asynchronous|none}
\end{algorithm}

The time-complexity analysis of \algoref{asynchronous|none} is more involved because of asynchrony. Unlike in a synchronous system, the time difference between when two nodes start the same phase may \emph{grow} with time due to clock drift. We need some additional notation for the analysis.

Let $\phase{i}{j}$ denote the phase $j$ of the epoch $i$. Let $\framesuntil{i}{j}$ denote the number of frames that a node in the execution until (but not including) $\phase{i}{j}$. We have:
\begin{align}
\nonumber
\framesuntil{i}{j} 
& = \text{(number of frames contained in the first } i-1 \text{ epochs)} \quad + \\
\nonumber
& \phantom{=} \text{(number of frames contained in the } i^\text{th} \text{ epoch until } \phase{i}{j})  \\
\nonumber
& \phantom{=} \left\{ \text{using \eqref{eq:epoch|sum} in the appendix and the fact that, for each } j, \card{\phase{i}{j}} = 2^i  \right\} \\
\nonumber
& = (i-1) \cdot 2^i + (j-1) \cdot 2^i \\
& = (i + j - 2) \cdot 2^i
\label{eq:frames|until}
\end{align}

Let $\starttimeOf{i}{j}{u}$ denote the time at which node $u$ begins executing $\phase{i}{j}$. Likewise, let $\finishtimeOf{i}{j}{u}$ denote the time at which node $u$ ends executing $\phase{i}{j}$.  Let $\starttimeOf{i}{j}$ denote the \emph{latest} time at which some node starts executing $\phase{i}{j}$. Likewise, let  $\finishtimeOf{i}{j}$ denote the \emph{earliest} time at which some node starts executing $\phase{i}{j}$.
Symbols $\starttime$, $\vmaximum$ and $\flength$ have the same meaning as before. Note that, in real-time, the minimum and maximum lengths of a frame are given by $\frac{\flength}{1 + \drmaximum}$ and $\frac{\flength}{1 - \drmaximum}$, respectively. Thus, we have:
\begin{alignat}{4}
& \starttimeOf{i}{j} \quad  && = \quad \max \{ \starttimeOf{i}{j}{u} \}  \quad && \leq \quad \starttime{} && + \framesuntil{i}{j} \left( \frac{\flength}{1 - \drmaximum} \right) 
\label{eq:phase|min} \\
& \finishtimeOf{i}{j} \quad && = \quad \min \{ \finishtimeOf{i}{j}{u} \} \quad && \geq \quad (\starttime{} - \vmaximum) && + (\framesuntil{i}{j} + 2^i) \left( \frac{\flength}{1 + \drmaximum} \right)
\label{eq:phase|max}
\end{alignat}

Suppose $i$ and $j$ are such that $\finishtimeOf{i}{j} < \starttimeOf{i}{j}$. 
Note that, during the time-period from $\starttimeOf{i}{j}$ to $\finishtimeOf{i}{j}$, all nodes will use the \emph{same} estimate for maximum node degree.      
Recall that $\dmaximum_0$ denotes the smallest power of two greater than or equal to 
$\drmaximum$.
Let $M =  \frac{48 \max(2\acsmaximum, 3 \dmaximum_0)}{\srminimum} \ln\left( \frac{\nsize^2}{\epsilon} \right) + 1$. From \thmref{asynchronous|main}, if the time period from $\starttimeOf{i}{j}$ to $\finishtimeOf{i}{j}$ is such that:
\begin{inparaenum}[(a)] 
\item the time period contains at least $M$ full frames and 
\item each node uses an estimate value of $\dmaximum_0$ during the time period, 
\end{inparaenum} then the neighbor discovery is guaranteed to complete with probability at least $1 - \epmaximum$. 
Let $i_0 = \left\lceil \log \left( 6 M + \frac{3\vmaximum}{\flength} \right) \right\rceil$ and $j_0 = \ceil{ \log \dmaximum}  (= \log \dmaximum_0)$. Note that, by definition, during $\phase{i_0}{j_0}$, each node uses
$\dmaximum_0$ as the estimate for maximum node degree. It suffices to prove that the time period from $\starttime{i_0}{j_0}$ to $\finishtime{i_0}{j_0}$ contains at least $M$ frames. To that end, we make 
the following assumption about the clock drift rate:

\begin{assumption}
\label{asm:drift|rate|2}
Let $D = 48 \left\{  \log\left( \frac{\nsize \acsmaximum}{\epmaximum } \right) + \log \left( 1 + \frac{\vmaximum}{\flength} \right)  + 5 \right\}$. 
The maximum drift rate of the clock of any node is bounded by $\frac{1}{D}$ seconds/second.
\end{assumption}

We first prove the following lemma:

\begin{lemma}
\label{lem:drift|rate:small}
If \asmref{drift|rate|2} holds, then $i_0 \cdot \drmaximum \leq \frac{1}{16}$.
\end{lemma}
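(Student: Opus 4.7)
The plan is to show $i_0 \leq D/16$, after which the conclusion $i_0 \cdot \drmaximum \leq 1/16$ follows immediately from \asmref{drift|rate|2}. Unpacking the definition of $D$, this reduces to proving
\[
\ceil{\log(6M + 3\vmaximum/\flength)} \;\leq\; 3\log\!\left(\frac{\nsize \acsmaximum}{\epmaximum}\right) + 3\log\!\left(1 + \frac{\vmaximum}{\flength}\right) + 15.
\]

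First I would peel off the $\vmaximum/\flength$ contribution using the elementary identity $\log(a+b) = \log a + \log(1 + b/a)$ with $a = 6M$. Since $M \geq 1$, one has $3\vmaximum/(6M\flength) \leq \vmaximum/\flength$, so
\[
\log\!\left(6M + 3\vmaximum/\flength\right) \;\leq\; \log(6M) + \log\!\left(1 + \vmaximum/\flength\right),
\]
which neatly matches the $\log(1 + \vmaximum/\flength)$ term on the right of the desired inequality. After absorbing the ceiling (a ``$+1$''), it suffices to establish $\log(6M) \leq 3\log(\nsize \acsmaximum/\epmaximum) + 14$.

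The next step is to overestimate $M$ using only $\nsize$, $\acsmaximum$, and $\epmaximum$, since those are the only parameters appearing in $D$. From the definition of $\srminimum$ we have $1/\srminimum \leq \acsmaximum$, and since the maximum degree is at most $\nsize$, one has $\dmaximum_0 \leq 2\dmaximum \leq 2\nsize$, giving $\max(2\acsmaximum, 3 \dmaximum_0) \leq 6 \nsize \acsmaximum$. Plugging back in yields
\[
M \;\leq\; 288\, \nsize\, \acsmaximum^{2}\, \ln(\nsize^2/\epmaximum) + 1,
\]
so that $\log(6M) \leq \log(6 \cdot 289) + \log \nsize + 2\log \acsmaximum + \log\log(\nsize^2/\epmaximum)$.

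The only mildly delicate piece is the doubly-logarithmic term, which I would control via $\log\log(\nsize^2/\epmaximum) = \log(2\log \nsize + \log(1/\epmaximum)) \leq \log \nsize + \log(1/\epmaximum) + O(1)$. Collecting constants gives $\log(6M) \leq 13 + 2\log \nsize + 2\log \acsmaximum + \log(1/\epmaximum)$, which is comfortably below $3\log(\nsize \acsmaximum/\epmaximum) + 14$ as soon as $\nsize, \acsmaximum \geq 1$ and $\epmaximum \leq 1$. The main obstacle is not conceptual but bookkeeping: one must verify that the constant ``$5$'' (multiplied by $48$) inside $D$ is large enough to absorb all of the additive $O(1)$ slack accumulated in the log-separation, the ceiling, the doubly-logarithmic estimate, and the $\log 1734$ from the crude bound on $M$ --- and the calculation above shows there is exactly one unit of slack to spare.
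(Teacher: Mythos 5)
Your proof is correct and takes essentially the same route as the paper's: both reduce the claim to showing $i_0 \leq D/16$, peel off the $\log\left(1+\frac{\vmaximum}{\flength}\right)$ term, and bound $\log M$ by $3\log\left(\frac{\nsize\acsmaximum}{\epmaximum}\right)$ plus a constant small enough to be absorbed by the ``$+5$'' inside $D$. The only substantive difference is one sub-step: the paper sidesteps your doubly-logarithmic term entirely by bounding $\ln\left(\frac{\nsize^2}{\epmaximum}\right) \leq \frac{\nsize^2}{\epmaximum}$, so that $M \leq \frac{576\,\acsmaximum^2\nsize^3}{\epmaximum} + 1$ and $\log M \leq 3\log\left(\frac{\nsize\acsmaximum}{\epmaximum}\right) + 11$ with no $\log\log$ bookkeeping.
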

\begin{proof} We have:
\begin{align}
\nonumber
M
& = \frac{48 \max(2\acsmaximum, 3 \dmaximum_0)}{\srminimum} \ln\left( \frac{\nsize^2}{\epsilon} \right) + 1 \\
\nonumber
& \leq \left( \frac{48 \cdot (2\acsmaximum) \cdot (6 \nsize)}{\frac{1}{\acsmaximum}} \right) \cdot \left( \frac{\nsize^2}{\epsilon} \right) + 1 \\
\nonumber
& = \frac{576 \cdot \acsmaximum^2 \cdot \nsize^3}{\epmaximum} + 1
\end{align}
This implies that:
\begin{align}
\nonumber
\log M
& \leq \log 576 + 3 \log \left( \frac{\nsize \acsmaximum}{\epmaximum} \right) + 1 \\
& \leq 3 \log \left( \frac{\nsize \acsmaximum}{\epmaximum} \right)  + 11
\label{eq:logM}
\end{align}

Now, we have:
\begin{align}
\nonumber
i_0 \cdot \drmaximum 
& \leq \left\{ \log \left( 6 M + \frac{3\vmaximum}{\flength} \right) + 1 \right\} \cdot \drmaximum \\
\nonumber
& \leq \left\{ \log 6 + \log \left( M + \frac{\vmaximum}{\flength} + 1 \right) + 1 \right\} \cdot \drmaximum \\
\nonumber
& \phantom{\leq} \left\{ \text{for all } x, y \text{ with } x \geq 1 \text{ and } y \geq 1, \log (x + y) \leq \log (x y) \leq \log x + \log y \right\} \\
\nonumber
& \leq \left\{ \log(M) + \log\left( 1 + \frac{\vmaximum}{\flength} \right) + 4 \right\} \cdot \drmaximum \\
\nonumber
& \phantom{\leq} \left\{ \text{using \eqref{eq:logM}} \right\} \\
\nonumber
& \leq \left\{ 3 \log \left( \frac{\nsize \acsmaximum}{\epmaximum} \right)  + 11 + \log\left( 1 + \frac{\vmaximum}{\flength} \right) + 4 \right\} \cdot \drmaximum \\
\nonumber
& \leq 3 \cdot \left\{ \log \left( \frac{\nsize \acsmaximum}{\epmaximum} \right)  + \log\left( 1 + \frac{\vmaximum}{\flength} \right) + 5 \right\} \cdot \drmaximum \\
\nonumber
& \phantom{\leq} \left\{ \text{using definition of } D \right\} \\
\nonumber
& \leq 3 \cdot \frac{D}{48} \cdot \drmaximum \\
\nonumber
& \phantom{\leq} \left\{ \text{using \asmref{drift|rate|2}} \right\} \\
\nonumber
& \leq  3 \cdot \frac{D}{48} \cdot \frac{1}{D} = \frac{1}{16}
\end{align}

This establishes the lemma.
\end{proof}

Now, substituting $i = i_0$ and $j = j_0$ in \eqref{eq:phase|min} and \eqref{eq:phase|max} and subtracting \eqref{eq:phase|min} from \eqref{eq:phase|max}, we obtain:
\begin{align}
\nonumber
& \finishtimeOf{i_0}{j_0} - \starttimeOf{i_0}{j_0} \\
\nonumber
& \geq 
(\framesuntil{i_0}{j_0} + 2^{i_0}) \left( \frac{\flength}{1 + \drmaximum} \right) - \framesuntil{i_0}{j_0} \left( \frac{\flength}{1 - \drmaximum} \right) - \vmaximum \\
\nonumber
& \phantom{\geq} \left\{ \text{rearranging terms} \right\} \\
\nonumber
& = 2^{i_0} \left( \frac{\flength}{1 + \drmaximum} \right)  - \framesuntil{i_0}{j_0} \left( \frac{2 \drmaximum \flength}{1 - \drmaximum^2} \right) - \vmaximum \\
\nonumber
& \phantom{\geq} \left\{ \text{using \eqref{eq:frames|until}} \right\} \\
\nonumber
& = 2^{i_0} \left( \frac{\flength}{1 + \drmaximum} \right)  - (i_0 + j_0 - 2) \cdot 2^{i_0} \left( \frac{2 \drmaximum \flength}{1 - \drmaximum^2} \right) - \vmaximum \\
\nonumber
& \geq 2^{i_0} \left( \frac{\flength}{1 + \drmaximum} \right) \left\{  1  - (i_0 + j_0 - 2) \left( \frac{2 \drmaximum}{1 - \drmaximum} \right) \right\} - \vmaximum \\
\nonumber
& \phantom{\geq} \left\{ i_0 + j_0 - 2 \leq 2 i_0 \right\} \\
\nonumber
& \geq  2^{i_0} \left( \frac{\flength}{1 + \drmaximum} \right) \left(  1 - \frac{(2 i_0) (2\drmaximum)}{1 - \drmaximum} \right) - \vmaximum \\
\nonumber
& \phantom{\geq} \left\{ \drmaximum \leq \frac{1}{2} \right\} \\
\nonumber
& \geq  2^{i_0} \left( \frac{2\flength}{3} \right)  ( 1 - 8 \cdot i_0 \cdot \drmaximum) - \vmaximum  \\
\nonumber
& \phantom{\geq} \left\{ \text{using \lemref{drift|rate:small}} \right\} \\
\nonumber
& \geq 2^{i_0} \left( \frac{\flength}{3} \right)  - \vmaximum \\
\nonumber
& \geq \left( 6M + \frac{3 \vmaximum}{\flength} \right)  \left( \frac{\flength}{3} \right)  - \vmaximum \\
\nonumber 
& = 2 M \flength + \vmaximum - \vmaximum  = 2 M \flength
\end{align}

We now prove the main result.
Since the maximum length of a frame in real-time is $\frac{\flength}{1 - \drmaximum}$ and $\drmaximum \leq \frac{1}{2}$, it follows that the time-period from $\starttimeOf{i_0}{j_0}$ to $\starttimeOf{i_0}{j_0}$ 
contains at least $\frac{2 M \flength}{\nicefrac{\flength}{1  - \drmaximum}} = 2 M (1 - \drmaximum) \geq M$ full frames. 

The above analysis implies that, once all nodes have completed their epoch $i_0$, the neighbor discovery completes with probability at least $1 - \epmaximum$. Thus we have: 

\begin{theorem}
Let $M =  \frac{48 \max(2\acsmaximum, 3 \dmaximum_0)}{\srminimum} \ln\left( \frac{\nsize^2}{\epsilon} \right) + 1$. Then \algoref{asynchronous|none} ensures that each node discovers all its neighbors  on all channels  within $O\left( \left( M + \frac{\vmaximum}{\flength} \right) \log\left( M + \frac{\vmaximum}{\flength} \right) \left( \frac{\flength}{1 - \drmaximum} \right) \right)$ time-units of $\starttime$ with probability at least $1 - \epmaximum$.
\end{theorem}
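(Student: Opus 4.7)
The plan is to stitch together the preceding lemmas and then convert a frame count into a real-time bound. The heart of the argument has already been carried out in the paragraphs leading up to the statement: assuming Assumption~\ref{asm:drift|rate|2}, Lemma~\ref{lem:drift|rate:small} yields $i_0 \cdot \drmaximum \leq \tfrac{1}{16}$, which in turn forces the interval $[\starttimeOf{i_0}{j_0}, \finishtimeOf{i_0}{j_0}]$ to contain at least $M$ full frames (as the displayed chain of inequalities shows, this interval has real-time length $\geq 2M\flength$, and each frame lasts at most $\tfrac{\flength}{1-\drmaximum}$). During this interval every node uses the common estimate $\dmaximum_0$, so Theorem~\ref{thm:asynchronous|main} applies verbatim to the sub-execution confined to $\phase{i_0}{j_0}$ and gives the $1-\epmaximum$ success probability. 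The only remaining task is to bound the wall-clock time at which all nodes have \emph{finished} epoch $i_0$ in terms of $\starttime$.

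For that, first I would count frames. By the same arithmetic that appeared in the synchronous variable-start-time analysis (Section~\ref{sec:different|none}, with $|\phase{i}{j}|=2^i$ and $i+1$ phases per epoch), the number of frames executed by a single node between $\starttime$ and the end of its own epoch $i_0$ is $i_0 \cdot 2^{i_0+1}$. Converting frames to real time using the upper bound in \eqref{eq:range|realtime} gives at most $i_0 \cdot 2^{i_0+1} \cdot \tfrac{\flength}{1-\drmaximum}$ time-units per node since it began the algorithm, and accounting for the slack $\vmaximum$ between start times adds only an additive $\vmaximum$, which is absorbed into the $\vmaximum/\flength$ term once multiplied through.

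Next I would simplify the two factors $i_0$ and $2^{i_0}$. By definition $2^{i_0} = O\bigl(M + \vmaximum/\flength\bigr)$ and $i_0 = O\bigl(\log(M + \vmaximum/\flength)\bigr)$, so
\begin{equation*}
i_0 \cdot 2^{i_0+1} \cdot \frac{\flength}{1-\drmaximum}
\;=\; O\!\left(\left(M + \frac{\vmaximum}{\flength}\right)\log\!\left(M + \frac{\vmaximum}{\flength}\right)\cdot \frac{\flength}{1-\drmaximum}\right),
\end{equation*}
which is precisely the bound claimed in the theorem. Combining this with the success-probability statement from the previous paragraph completes the argument.

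The only delicate step is the second one: one has to be careful that the count $i_0 \cdot 2^{i_0+1}$ really does dominate the extra frames incurred by the slack $\vmaximum$ and by the partial (first) frame of whichever node determines $\finishtime$. I would handle this the same way Theorem~\ref{thm:asynchronous|time} did, by adding a constant number of frames to the count and noting that each contributes at most $\tfrac{\flength}{1-\drmaximum}$ to the real-time bound, so the asymptotics are preserved. Everything else is a direct invocation of results already proved in Sections~\ref{sec:asynchronous|degree} and the opening paragraphs of Section~\ref{sec:asynchronous|none}.
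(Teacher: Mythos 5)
Your proposal is correct and follows essentially the same route as the paper: it reuses the preceding analysis (Assumption~\ref{asm:drift|rate|2}, \lemref{drift|rate:small}, and the chain of inequalities showing the \critical{} interval $[\starttimeOf{i_0}{j_0}, \finishtimeOf{i_0}{j_0}]$ contains at least $M$ full frames) to invoke \thmref{asynchronous|main}, then bounds the real time to complete epoch $i_0$ by $i_0 \cdot 2^{i_0+1} \cdot \frac{\flength}{1-\drmaximum}$ with $2^{i_0} = O(M + \vmaximum/\flength)$ and $i_0 = O(\log(M + \vmaximum/\flength))$. The paper's own proof of this final theorem is just the one-line conclusion of that same preceding analysis, so nothing further is needed.
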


\begin{remark}[Impact of \asmref{drift|rate|2}]
Even for very large values of system parameters, namely $\nsize = 10^6$, $\acsmaximum = 10^6$, $\epmaximum = 10^{-9}$ and $\frac{\vmaximum}{\flength} = 10^6$, \asmref{drift|rate|2} only requires $\drmaximum \leq 2.2 \times 10^-3$~seconds/seconds. This is not very restrictive because, as we mentioned earlier, an ordinary quartz clock has a maximum drift rate of $10^{-6}$ seconds/second.
\end{remark}

\section{Summary}
\label{sec:summary}

In this paper, we have presented several randomized algorithms for neighbor discovery in an \model{} network both for a synchronous and an asynchronous system, and also analyzed their running time.
\Tabref{summary} summarizes the results of the previous two sections.
%%
%% In this section, 
We now discuss two extensions to our 
%% randomized neighbor discovery 
algorithms to enhance their applicability.

\begin{table}[t]
\begin{center}
\caption{A summary of all randomized neighbor discovery algorithms in this work.}
\label{tab:summary}
\fontsize{8}{9}
\selectfont
\begin{tabular}{|c|c|>{\centering\arraybackslash}m{1.75in}|>{\centering\arraybackslash}m{2.0in}|c|} \hline
\multicolumn{3}{|c|}{\textbf{Assumptions}} & \multirow{2}{*}{\textbf{Time Complexity$^\ast$}} & \multirow{2}{*}{\textbf{Section}} \\ \cline{1-3}
\textbf{Synchrony} & \textbf{Start Times} & \textbf{System Parameters} &  & \\ \hline \hline
Synchronous & Same & know an estimate for an upper bound on $\dmaximum$, given by $\destimate$ &  $O\left( \frac{\max( \acsmaximum, \dmaximum )}{\srminimum} \log(\destimate)\log\left( \frac{\nsize}{\epmaximum}\right)\right)$ & 
\ref{sec:identical|degree} \\ \hline
Synchronous & Same &  - & $O( M \log M )$ where 
$M = \frac{16 \max(\acsmaximum, \dmaximum)}{\srminimum}  \ln\left( \frac{\nsize^2}{\epmaximum} \right)$ & \ref{sec:identical|none}  \\ \hline
Synchronous & Different & know an estimate for an upper bound on $\dmaximum$, given by $\destimate$ &  $O\left( \frac{\max( 2\acsmaximum, \destimate )}{\srminimum} \log\left( \frac{\nsize}{\epmaximum}\right)\right)$ &
\ref{sec:different|degree} \\ \hline
Synchronous & Different & - & $O((M + \vmaximum)\log( M + \vmaximum ))$ where $M =  \frac{16 \max(\acsmaximum, \dmaximum_0)}{\srminimum}  \ln\left( 
\frac{\nsize^2}{\epmaximum} \right)$ &  \ref{sec:different|none} \\ \hline
Asynchronous & Different & know an estimate for an upper bound on $\dmaximum$, given by $\destimate$,  and $\drmaximum \leq \frac{1}{7}$ & $O\left(\frac{\flength \max(2\acsmaximum, 3 \destimate) }{\srminimum} \log\left( \frac{\nsize}{\epsilon} \right) \right)$ & \ref{sec:asynchronous|degree} \\ \hline
Asynchronous & Different & $\drmaximum \leq \frac{1}{48 \left\{  \log\left( \frac{\nsize \acsmaximum}{\epmaximum } \right) + \log \left( 1 + \frac{\vmaximum}{\flength} \right)  + 5 \right\}}$ & 
$O\left( \left( M \flength + \vmaximum \right) \log\left( M + \frac{\vmaximum}{\flength} \right) \right)$ where $M = \frac{48 \max(2\acsmaximum, 3 \dmaximum_0)}{\srminimum} \ln\left( \frac{\nsize^2}{\epsilon} \right) + 1$ & 
\ref{sec:asynchronous|none} \\ \hline
\multicolumn{5}{l}{$^\ast$: In the case of variable start times, the time is measured after the last node has started neighbor discovery} \\
\multicolumn{5}{l}{$\nsize$: total number of CR nodes in the network} \\
\multicolumn{3}{l}{$\dmaximum$: maximum degree of a node on a channel} &
\multicolumn{2}{l}{$\dmaximum_0$: smallest power of two greater than or equal to $\dmaximum$} \\
\multicolumn{3}{l}{$\acsmaximum$: size of the largest available channel set} &
\multicolumn{2}{l}{$\srminimum$: minimum span ratio of a link} \\
\multicolumn{5}{l}{$\epmaximum$: maximum error probability (user-specified)} \\
\multicolumn{5}{l}{$\vmaximum$: maximum difference in the start times of two nodes} \\
\multicolumn{3}{l}{$\flength$: length of a frame} &
\multicolumn{2}{l}{$\drmaximum$: maximum drift rate of a local clock} \\
\end{tabular}
\end{center}
\end{table}

\section{Improving Robustness}
\label{sec:extensions}

\subsection{Handling Diverse Propagation Characteristics}
 
All algorithms described so far have assumed ``one-for-all'' property that if a link from node $v$ to node $u$ exists on some channel then it also exists on all channels that are present in available channel sets of both $u$ and $v$. That may not hold if different channels have very different propagation characteristics.  In  case the ``one-for-all'' property does not hold, we can partition the channels into \emph{bands} such that the subset of  channels within the same band have very similar propagation characteristics. As a result, ``one-for-all'' holds for each band individually. We say that a link can operate in a band if it can operate on some channel in the band. Each link can then be replaced with multiple links, one for each band it can operate in. Link for each band then has to be discovered separately. This modification increases the running time of our algorithms since more links have to be discovered now. The minimum span-ratio in the network now ranges from $\tfrac{1}{S}$ to $\tfrac{\bminimum}{\acsmaximum}$ instead of $\tfrac{1}{\acsmaximum}$ to $1$, where $\bminimum$ is the size of the smallest band (in terms of number of channels). Further the  term ``$\ln\left( \tfrac{\nsize^2}{\epmaximum} \right)$'' in time complexity expressions is replaced with the term ``$\ln\left( \tfrac{N^2 \bsize}{\epmaximum} \right)$'', where $\bsize$ denote the maximum number of bands in which any link can operate.

\subsection{Handling Lossy Channels}

All algorithms described so far have assumed that channels are reliable in the sense that a node fails to receive a message from its neighbor only due to collision. However, in the real world, a node may fail to receive a message due
to other reasons as well such as Gaussian noise or transient interference. Let $\fpmaximum$ with $0 \leq \fpmaximum < 1$ denote the probability that a node fails to receive a message
from its neighbor even in the absence of collision. Nodes may not know $\fpmaximum$. With lossy channels, the probability that a link is covered by a time-slot (in \algoref{identical|degree}, \algoref{identical|nodegree}, \algoref{variable|degree} and \algoref{variable|none}) or an \legal{} frame-pair (in \algoref{asynchronous|degree} and \algoref{asynchronous|none})  is now multiplied by a factor of $1 - \fpmaximum$. For example, the probability expression in \lemref{legal|atleast} now becomes 
 $\tfrac{\srminimum (1 - \fpmaximum)}{8\max(2\acsmaximum, 3\destimate)}$. It can be verified that this has the effect of increasing the running time of an algorithm by a factor of $\tfrac{1}{1 - \fpmaximum}$.

\subsection{Handling Jamming Attacks}

Most of the work on jamming 
attack~\cite{WanWu+:2011:JSAC,LiCad:2011:CISS,CheSon+:2013:IN,DabBet+:2014:ICASSP} 
is concerned with analyzing or mitigating the impact of jamming attack on 
communication throughput in the system, and, as such, they assume that neighbor discovery has already been done. 
Work in~\cite{AstZor+:2010:CCW} presents a suite of neighbor discovery 
algorithms in the presence of jamming attack. However, unlike our work, the work 
in~\cite{AstZor+:2010:CCW} assumes a single-hop network and does not provide 
any theoretical guarantees (algorithms are only evaluated using simulation experiments).

We show that, under certain conditions, our neighbor discovery algorithms, with minor modifications, are tolerant to jamming attacks in the sense that a malicious jammer who seeks to prevent neighbor discovery can only delay time to the completion of neighbor discovery by at most a constant factor.

In formulating the jamming model, we seek a balance between a jammer who is too powerful (making the neighbor discovery problem impossible to solve) and one who is too weak 
(making the attack easy to foil). For instance, if the jammer can simultaneously jam all channels at all times, then neighbor discovery would be unable to complete. Likewise, if for some pair of neighboring nodes, $\card{ \lspan{u}{v}} = 1$, then the jammer can ensure that discovery does not complete by simply jamming this channel. 
%% 
%% To avoid this scenario, we assume that every link has at least one more channel than the jammer can jam in any time-slot.

We assume an ``intelligent'' jammer who can \emph{scan} the spectrum and learn which channels are currently under use by transmitting nodes running the neighbor discovery algorithm. 
The jammer can then decide to jam any single channel, thereby disrupting possibly all transmissions on that channel. 
When scanning, the jammer only learns which channels are being used; we assume that it cannot determine which channels are used by a majority of nodes by analyzing the relative strength of the transmissions.

In a heterogeneous system, it may be possible for a jammer to increase its effectiveness by keeping statistics on which channels are seen in scans more often. In some cases, this would correspond to channels used by more nodes. It is not clear how best to analyze this case, so we explicitly avoid it by directing our analysis towards homogeneous systems in which $\card{\lspan{u}{v}} = \acsmaximum$ for all neighboring nodes $u$ and $v$. 

The jammer can change the channel which it jams during the course of the algorithm. Clearly, if the jammer is able to switch channels instantaneously, it would be possible for the jammer to block every channel in every slot making it as powerful as the one who can simultaneously jam all channels at all times. To avoid this, we assume a lower bound on the time between switching channels for jamming. In the asynchronous case, this lower bound also allows to assume that the outcomes of two consecutive scans are independent of each other. 
%%
%% We allow the jammer to take longer than $J$ to switch channels, but as we will show, this is not in the jammer's best interest.
%%

Note that the jammer need not respect slot boundaries even when the system is synchronous. Therefore the jammer can effectively jam \emph{two channels} in a given slot by selecting a new channel to jam near the middle of a slot. 
%% 
%% Since it is in the jammer's best interest to do so, we assume that this happens in every slot. 
%%

We also assume that the jammer has complete knowledge of our algorithm, but is ignorant of the exact random choices which are made. Clearly, the jammer's best option is to make a random selection of channels to jam. However, since the system is homogeneous, the best that the jammer can do is make a selection at random, in an attempt to foil the random choices made by our algorithm. 

Based on the discussion above, we now formulate our jamming model.
The jammer operates in rounds. At the beginning of each round, the jammer scans the spectrum and selects a channel to jam during that round from among the channels it observed to be in use in the scan. Specifically, let $\oldchannel$ denote the channel jammed by the jammer in the previous round. 
Also, let $\busychannels$ denote the set of channels the jammer found to be in use in the scan it performed at the beginning of the current round. 
The jammer then makes a uniform random selection of a channel from the set $\busychannels \setminus \oldchannel$, if non-empty, and jams this channel during the current round. We assume that round length is at least at large as slot length, but slot boundaries and round boundaries need not be aligned even if the system is synchronous. (A jamming round can start in the middle of a slot.) Note that it is not necessary for the jammer to jam $\oldchannel$ again since it would have already disrupted all transmissions on channel $\oldchannel$ in the current slot.

Note that, in the above-described jamming model, the neighbor discovery problem can only be solved if there are at least three nodes and at least three channels.
We now consider synchronous and asynchronous cases separately.

\subsubsection{Synchronous Algorithms}

Recall the events $\A{\tau}{c}$, $\B{\tau}{c}$ and $\C{\tau}{c}$ defined 
earlier when the analyzing the probability that node $u$ discovers node $v$ on channel $c$ during time-slot $\tau$. 
We  define a new event $\D{\tau}{c}$ to represent the condition that the 
transmission from node $v$ to node $u$ on channel $c$ is not jammed during 
the time-slot $\tau$. For convenience, let $\ABC{\tau}{c}$ denote the 
conjunction of the three events $\A{\tau}{c}$, $\B{\tau}{c}$ and $\C{\tau}{c}$. 
We derive a lower bound on the conditional probability that event 
$\D{\tau}{c}$ occurs given that event $\ABC{\tau}{c}$ occurs. 
Note that the \emph{slow-down factor} in the running time of a neighbor discovery algorithm that 
is undergoing a jamming attack is given by the \emph{inverse of this conditional probability}.

\begin{comment}
Note that, in case the jammer does not switch to a new channel to jam during $\tau$, we have:
%%
\[
\Pr\{ \D{\tau}{c} \given \ABC{\tau}{c} \} \quad = \quad  \frac{\acsmaximum - 1}{\acsmaximum} \quad \geq \quad \frac{2}{3}
\]
\end{comment}
%%
Assume that, at the beginning of the time-slot $\tau$, the jammer was jamming channel $\oldchannel$. Sometime during $\tau$, the jammer switches to a new channel to jam. We have:
\begin{align}
\Pr\{ \D{\tau}{c} \given \ABC{\tau}{c} \} \quad = \quad \left( \frac{\acsmaximum - 1}{\acsmaximum} \right) \cdot \Pr\{ \E{\tau}{c} \given \ABC{\tau}{c} \} \cdot \left( \frac{1}{2} \right)
\label{eq:jamming:notblocked}
\end{align}
where $\E{\tau}{c}$ denotes the event that some node in the system transmits on a channel other than $c$ and $\oldchannel$ during the time-slot $\tau$ with $c \neq \oldchannel$. In the above expression, the first term (given by $\frac{\acsmaximum-1}{\acsmaximum}$) captures the probability that $c \neq \oldchannel$ and the third term (given by $\frac{1}{2}$) the probability that the jammer chooses a channel other than $c$ to jam.

We next compute the conditional probability that the event $\E{\tau}{c}$ occurs given that the event $\ABC{\tau}{c}$ occurs. We partition the set of nodes in the system into three mutually-exclusive groups: $G_1$, $G_2$ and $G_3$. Group $G_1$ contains nodes $u$ and $v$, group $G_2$ contains nodes that are neighbors 
of node $u$ except for node $v$, and group $G_3$ contains the remaining nodes. 

Let $\acs$ denote the set of all available channels. (Note that, by our assumption, all nodes have identical available channel sets.) Also, let $\EBAR[w]{\tau}{c}$ denote the event that node $w$ \emph{does not} transmit on a channel in $\acs \setminus \{ c, \oldchannel \}$ during time-slot $\tau$. Since nodes act independently of each other, we have:
\begin{align}
\nonumber
\Pr\{ \E{\tau}{c} \given \ABC{\tau}{c}  \} 
& = 1 - \prod_{i \in \{ 1, 2, 3 \}} \prod_{w \in G_i} \Pr\{ \EBAR[w]{\tau}{c} \given \ABC{\tau}{c} \}  \\
\nonumber
& \phantom{=} \big\{ \forall w : w \in G_1 : \Pr\{ \EBAR[w]{\tau}{c} \given \ABC{\tau}{c} \} = 1 \big\} \\
\nonumber
& = 1 - \prod_{i \in \{ 2, 3 \}} \prod_{w \in G_i} \Pr\{ \EBAR[w]{\tau}{c} \given \ABC{\tau}{c} \}  \\
\nonumber
& \phantom{=} \left\{ \begin{array}{@{}l@{}} \forall w : w \in G_3 : \Pr\{ \EBAR[w]{\tau}{c} \given \ABC{\tau}{c} \} = \Pr\{ \EBAR[w]{\tau}{c} \}  \\[-0.75em]
\because \text{ nodes act independently} \end{array} \right\} \\
\label{eq:jamming:third|channel}
& = 1 -  \left( \displaystyle\prod_{w \in G_2} \Pr\{ \EBAR[w]{\tau}{c} \given \ABC{\tau}{c} \} \right) \cdot \left( \displaystyle\prod_{w \in G_3} \Pr\{ \EBAR[w]{\tau}{c} \} \right)  
\end{align}

Let $p$ denote the placeholder for transmission probability. Note that $p = \min\left( \frac{1}{2}, \frac{\acsmaximum}{\destimate} \right)$. For our analysis, we focus on time-slots for which $\destimate$ satisfies the following condition:
\begin{align}
\label{eq:assumption:degree}
\destimate \leq 2(\nsize -2)
\end{align}
We show that the above-condition trivially holds for three out of four synchronous algorithms, namely 
\algoref{identical|degree}, \algoref{identical|nodegree} and \algoref{variable|none}. For \algoref{variable|degree}, in which $\destimate$ is an ``external'' parameter, we state it as an explicit assumption. In the three synchronous algorithms (in which nodes estimate degree), for the time-slots used to show that neighbor discovery completes with arbitrarily high probability, note that $\destimate$ is the smallest power of two that is greater than or equal to either $\degreeon{u}{c}$ or $\dmaximum$. In either case, we have:
\begin{align*}
& \phantom{\implies} \frac{\destimate}{2} < \dmaximum \\
& \implies \frac{\destimate}{2} \leq \dmaximum - 1 \\
& \phantom{\implies} \big\{ \dmaximum \leq \nsize - 1 \big\} \\
& \implies \frac{\destimate}{2} \leq (\nsize - 1) - 1 = \nsize - 2
\end{align*}

We now derive the conditional probability $\Pr\{ \EBAR[w]{\tau}{c} \given \ABC{\tau}{c} \}$. There are two cases depending on whether $w \in G_2$ or $w \in G_3$.
First, assume that $w \in G_2$. We have:
\begin{align}
\nonumber
\Pr\{ \EBAR[w]{\tau}{c} \given \ABC{\tau}{c} \} 
& = \Pr\{  w \text{ does not transmit on } \acs \setminus \{ c, \oldchannel \}  \given  w \text{ does not transmit on } c \} \\
\nonumber
& \phantom{=} \left\{  \Pr\{ X \given Y \} = \frac{\Pr\{ X \land  Y \}}{\Pr\{ Y \}} \right\} \\
\nonumber
& = \frac{\Pr\{ (w \text{ does not transmit on } \acs \setminus \{ c, \oldchannel \}) \land  (w \text{ does not transmit on } c)\}}{\Pr\{ w \text{ does not transmit on } c \}} \\
\nonumber
& = \frac{\Pr\{ (w \text{ listens) or (} w \text{ transmits on channel } \oldchannel) \}}
{\Pr\{ (w \text{ listens) or (} w \text{ transmits on a channel other than } c) \}} \\
\nonumber
& = \frac{1 - p + p \cdot \frac{1}{\acsmaximum}}{1 - p + p \cdot \frac{\acsmaximum-1}{\acsmaximum} } \\
%% & \phantom{=} \big\{ \text{simplifying} \big\} \\
\nonumber
& = \frac{1 - p \frac{\acsmaximum-1}{\acsmaximum}}{1 - \frac{p}{\acsmaximum}} \\
\nonumber
& \phantom{=} \big\{ \text{multiplying numerator and denominator by } \frac{\acsmaximum}{p} \big\} \\
\nonumber
& = \frac{\frac{\acsmaximum}{p} - \acsmaximum + 1}{\frac{\acsmaximum}{p} - 1} \\
\label{eq:jamming:G2}
& = 1 - \frac{\acsmaximum - 2}{\frac{\acsmaximum}{p} - 1} 
%% \\
%% & = \begin{cases}
%%       1 - \frac{\acsmaximum-2}{2\acsmaximum-1} & \text{if } p = \frac{1}{2} \\
%%			 1 - \frac{\acsmaximum-2}{\destimate-1} & \text{if } p = \frac{\acsmaximum}{\destimate}
%%    \end{cases} 
\end{align}

Next, assume that $w \in G_3$. We have:
\begin{align}
\nonumber
\Pr\{ \EBAR[w]{\tau}{c} \given \ABC{\tau}{c} \} 
& = \Pr\{ w \text{ does not transmit on } \acs \setminus \{ c, \oldchannel \} \} \\
\nonumber
& = \Pr\{ (w \text{ listens) or (} w \text{ transmits on channel } c \text{ or } \oldchannel) \} \\
\nonumber
& = 1 - p + p \cdot \frac{2}{\acsmaximum} \\
\nonumber
& = 1 - p \cdot \frac{\acsmaximum-2}{\acsmaximum} \\
\label{eq:jamming:G3}
& = 1 - \frac{\acsmaximum - 2}{\frac{\acsmaximum}{p}} 
%% & = \begin{cases}
%%       1 - \frac{\acsmaximum-2}{2\acsmaximum} & \text{if } p = \frac{1}{2} \\
%%			 1 - \frac{\acsmaximum-2}{\destimate} & \text{if } p = \frac{\acsmaximum}{\destimate}
%%    \end{cases}  
\end{align}

Using \eqref{eq:jamming:third|channel}, \eqref{eq:jamming:G2} and \eqref{eq:jamming:G3}, we have:
\begin{align}
\nonumber
\Pr\{ \E{\tau}{c} \given \ABC{\tau}{c} \} 
& = 1 -   
\displaystyle\prod_{w \in G_2} \left(1 - \frac{\acsmaximum-2}{\frac{\acsmaximum}{p}-1} \right)   \cdot 
\displaystyle\prod_{w \in G_3} \left( 1 - \frac{\acsmaximum-2}{\frac{\acsmaximum}{p}} \right) \\
\nonumber
& \geq 1 - \displaystyle\prod_{w \in G_2} \left(1 - \frac{\acsmaximum-2}{\frac{\acsmaximum}{p}} \right) \cdot 
\displaystyle\prod_{w \in G_3} \left( 1 - \frac{\acsmaximum-2}{\frac{\acsmaximum}{p}} \right) \\
\nonumber
& = 1 - \displaystyle\prod_{w \in G_2 \cup G_3} \left(1 - \frac{\acsmaximum-2}{\frac{\acsmaximum}{p}} \right) \\
\label{eq:jamming:general}
& = 1 - \left(1 - p \cdot \frac{\acsmaximum-2}{\acsmaximum} \right)^{\nsize - 2} 
\end{align}

We analyze the cases corresponding to when $p = \frac{1}{2}$ and when $p = \frac{\acsmaximum}{\destimate}$ separately. 

\begin{itemize}

\item \textbf{Case 1 ($p = \frac{1}{2}$):} Using \eqref{eq:jamming:notblocked} and \eqref{eq:jamming:general}, we have:
\begin{align*}
\Pr\{ \D{\tau}{c} \given \ABC{\tau}{c} \} 
& \geq \frac{1}{2} \cdot \left( \frac{\acsmaximum-1}{\acsmaximum} \right) \cdot \left\{ 1 - \left(1 - \frac{\acsmaximum-2}{2\acsmaximum} \right)^{\nsize - 2} \right\} \\
& \phantom{\geq} \big\{ \nsize \geq 3 \big\} \\
& \geq \frac{1}{2} \cdot \left( \frac{\acsmaximum-1}{\acsmaximum} \right) \cdot \left( \frac{\acsmaximum-2}{2\acsmaximum} \right) \\
& = \frac{1}{4} \cdot \left( 1 -  \frac{1}{\acsmaximum} \right) \cdot \left( 1 - \frac{2}{\acsmaximum} \right) \\
& \phantom{\geq} \big\{ \acsmaximum \geq 3 \big\} \\
& \geq \frac{1}{4} \cdot \frac{2}{3} \cdot \frac{1}{3} = \frac{1}{18}
\end{align*}

\item \textbf{Case 2 ($p = \frac{\acsmaximum}{\destimate}$):}  Using \eqref{eq:jamming:general}, we have:
\begin{align}
\nonumber
\Pr\{ \E{\tau}{c} \given \ABC{\tau}{c} \} 
\nonumber
& = 1 - \left(1 - \frac{\acsmaximum-2}{\destimate} \right)^{\nsize -2} \\
\nonumber
& \phantom{=} \left\{ \begin{array}{@{}l@{}} \text{substitute } x = \frac{\destimate}{\acsmaximum-2} \\
p = \frac{\acsmaximum}{\destimate} \implies \frac{\acsmaximum}{\destimate} \leq \frac{1}{2} 
\implies \frac{\destimate}{\acsmaximum} \geq 2 \implies x \geq 2 \end{array} \right\} \\
\nonumber
& \geq 1 - \left(1 - \frac{1}{x}\right)^{\nsize -2} \\
\nonumber
& \phantom{\geq} \left\{ \text{using \eqref{eq:assumption:degree}, } \nsize - 2 \geq \frac{\destimate}{2} \implies \nsize - 2 \geq \frac{x (\acsmaximum - 2)}{2} \right\} \\
\nonumber
& \geq 1 - \left\{ \left(1 - \frac{1}{x}\right)^x \right\}^{\frac{\acsmaximum-2}{2}} \\
\nonumber
& \phantom{\geq} \left\{ \left( 1 - \frac{1}{x} \right)^x \leq \frac{1}{\eulernum} \text{ when } x \geq 2 \right\} \\
\label{eq:jamming:case:2}
& \geq 1 - \left(\frac{1}{\eulernum}\right)^{\frac{\acsmaximum-2}{2}} 
\end{align}

Using \eqref{eq:jamming:notblocked} and \eqref{eq:jamming:case:2}, we have:
\begin{align*}
\Pr\{ \D{\tau}{c} \given \ABC{\tau}{c} \} 
& \geq \frac{1}{2} \cdot \left( \frac{\acsmaximum-1}{\acsmaximum} \right) \cdot \left\{ 1 - \left(\frac{1}{\eulernum}\right)^{\frac{\acsmaximum-2}{2}} \right\} \\
& \geq \frac{1}{2} \cdot \left( 1 -  \frac{1}{\acsmaximum} \right) \cdot \left( 1 - \frac{1}{\sqrt{\eulernum}} \right) \\
& \geq \frac{1}{2} \cdot \frac{2}{3} \cdot \frac{1}{3} = \frac{1}{9}
\end{align*}

\end{itemize}

Let $\sdf{\nsize}{\acsmaximum}$ denote the slow-down factor experienced by a neighbor discovery algorithm that is undergoing a jamming attack.
The above analysis implies that $\sdf{\nsize}{\acsmaximum} \leq 18$ for all $\nsize \geq 3$ and $\acsmaximum \geq 3$. For sufficiently large
 values of $\nsize$ and $\acsmaximum$, we show that the slow-down factor is much smaller. 
Again, we analyze the cases corresponding to when $p = \frac{1}{2}$ and when $p = \frac{\acsmaximum}{\destimate}$ separately. 

\begin{itemize}

\item \textbf{Case 1 ($p = \frac{1}{2}$):} Using \eqref{eq:jamming:notblocked} and \eqref{eq:jamming:general} we have:
\begin{align*}
\Pr\{ \D{\tau}{c} \given \ABC{\tau}{c} \} 
& \geq \frac{1}{2} \cdot \left( \frac{\acsmaximum-1}{\acsmaximum} \right) \cdot \left\{  1 - \left(1 -   \frac{\acsmaximum-2}{2\acsmaximum} \right)^{\nsize - 2} \right\} \\
& \phantom{\geq} \left\{ \text{for sufficiently large } \acsmaximum, \frac{\acsmaximum - 1}{\acsmaximum} \approx 1 \text{ and } \frac{\acsmaximum - 2}{2\acsmaximum} \approx \frac{1}{2} \right\} \\
& \approx \frac{1}{2} \cdot 1 \cdot \left\{ 1 - \left(\frac{1}{2}\right)^{\nsize - 2} \right\} \\
& \phantom{\approx}  \left\{ \text{for sufficiently large } \nsize, \left(\frac{1}{2}\right)^{\nsize - 2} \approx 0 \right\} \\
& \approx \frac{1}{2} \cdot (1 - 0) = \frac{1}{2}
\end{align*}

\item \textbf{Case 2 ($p = \frac{\acsmaximum}{\destimate}$):} Using \eqref{eq:jamming:notblocked} and \eqref{eq:jamming:case:2}, we have:
\begin{align*}
\Pr\{ \D{\tau}{c} \given \ABC{\tau}{c} \} 
& \geq \frac{1}{2} \cdot \left( \frac{\acsmaximum-1}{\acsmaximum} \right) \cdot 
\left\{ 1 - \left(\frac{1}{\eulernum}\right)^{\frac{\acsmaximum-2}{2}} \right\} \\
& \phantom{\geq} \left\{ \text{for sufficiently large } \acsmaximum, \frac{\acsmaximum - 1}{\acsmaximum} \approx 1 \text{ and } 
\left(\frac{1}{\eulernum}\right)^{\frac{\acsmaximum-2}{2}} \approx 0 \right\} \\
& \approx \frac{1}{2} \cdot 1 \cdot (1 - 0) = \frac{1}{2}
\end{align*}

\end{itemize}

The above analysis implies that:
\begin{align*}
\lim_{\substack{\nsize \to \infty \\ \acsmaximum \to \infty}} \sdf{\nsize}{\acsmaximum} \leq 2
\end{align*}

Finally, we have:

\begin{theorem}
The running time of the synchronous algorithms increases by a factor of 18 in the worst case. Furthermore, when $\nsize$ and $\acsmaximum$ are sufficient large, the running time only increases by a factor of at most 2 in the worst case.
\end{theorem}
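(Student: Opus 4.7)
The plan is to observe that the theorem is essentially a packaging of the case analysis already carried out in the paragraphs preceding the theorem, and to make explicit how the slow-down factor $\sdf{\nsize}{\acsmaximum}$ is extracted from the conditional probability $\Pr\{\D{\tau}{c} \mid \ABC{\tau}{c}\}$. First I would recall that, in the absence of a jammer, the probability that a given time-slot $\tau$ covers the link $\link{v}{u}$ on channel $c$ is $\Pr\{\ABC{\tau}{c}\}$, and in the presence of the jammer it is $\Pr\{\ABC{\tau}{c} \land \D{\tau}{c}\} = \Pr\{\ABC{\tau}{c}\} \cdot \Pr\{\D{\tau}{c} \mid \ABC{\tau}{c}\}$, since $\D{\tau}{c}$ is conditionally independent of whether the other coverage conditions hold (the jammer only sees which channels are occupied, not who the transmitter or intended receiver is). Consequently, to reach a fixed coverage probability of $1-\epmaximum$, the number of time-slots required grows by a factor of at most $1/\Pr\{\D{\tau}{c} \mid \ABC{\tau}{c}\}$, which is exactly $\sdf{\nsize}{\acsmaximum}$.

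Next I would invoke the two cases already dispatched in the preceding analysis. For any $\nsize \geq 3$ and $\acsmaximum \geq 3$, the transmission probability $p$ used by \algoref{identical|degree}, \algoref{identical|nodegree}, \algoref{variable|degree} and \algoref{variable|none} is either $\tfrac{1}{2}$ or $\tfrac{\acsmaximum}{\destimate}$. In Case~1 ($p = \tfrac{1}{2}$) the bound $\Pr\{\D{\tau}{c} \mid \ABC{\tau}{c}\} \geq \tfrac{1}{18}$ was derived; in Case~2 ($p = \tfrac{\acsmaximum}{\destimate}$) the bound $\Pr\{\D{\tau}{c} \mid \ABC{\tau}{c}\} \geq \tfrac{1}{9}$ was derived, provided \eqref{eq:assumption:degree} holds. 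I would briefly note that \eqref{eq:assumption:degree} is automatic for \algoref{identical|degree}, \algoref{identical|nodegree} and \algoref{variable|none} (because the relevant value of $\destimate$ in the ``covering'' time-slots is the smallest power of two at least $\dmaximum$, and $\dmaximum \leq \nsize - 1$ forces $\destimate/2 \leq \nsize - 2$), and is stipulated as an explicit input assumption for \algoref{variable|degree}. Taking the worst of the two cases gives $\Pr\{\D{\tau}{c} \mid \ABC{\tau}{c}\} \geq \tfrac{1}{18}$, hence $\sdf{\nsize}{\acsmaximum} \leq 18$, establishing the first sentence of the theorem.

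For the second sentence, I would re-examine the same two cases in the asymptotic regime. In Case~1, $\tfrac{\acsmaximum-1}{\acsmaximum} \to 1$ and $\bigl(1 - \tfrac{\acsmaximum-2}{2\acsmaximum}\bigr)^{\nsize-2} \to 0$ as $\acsmaximum, \nsize \to \infty$, so the lower bound on $\Pr\{\D{\tau}{c} \mid \ABC{\tau}{c}\}$ tends to $\tfrac{1}{2}$; in Case~2 the same limit is obtained because $\bigl(\tfrac{1}{\eulernum}\bigr)^{(\acsmaximum-2)/2} \to 0$. Combining gives $\lim_{\nsize, \acsmaximum \to \infty} \sdf{\nsize}{\acsmaximum} \leq 2$, which is the second sentence.

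The main obstacle I anticipate is not analytical but expository: the theorem quietly conflates four different algorithms, and the validity of condition \eqref{eq:assumption:degree} has to be checked separately for each one in the time-slots that drive the coverage bound. In particular, for \algoref{variable|degree} the condition really is an external hypothesis that should be stated with the theorem (or the theorem should be read as ``subject to the same hypotheses as in the paragraph deriving Case~2''), and for the algorithms that use stages or epochs one must be careful that the bound $\Pr\{\D{\tau}{c} \mid \ABC{\tau}{c}\} \geq \tfrac{1}{18}$ is applied only to the specific ``critical'' time-slots used in the earlier analyses---otherwise the overall increase-by-a-constant claim would require re-proving that those critical time-slots, when independently thinned by the $\D{\tau}{c}$ events, still suffice to cover every link with probability at least $1-\epmaximum$. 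This last point is essentially a repeat of the derivation of \eqref{eq:sequence|cover}, with the per-slot covering probability multiplied by the constant $\tfrac{1}{18}$ (or, asymptotically, $\tfrac{1}{2}$), and I would include a short remark to that effect rather than redo the calculation.
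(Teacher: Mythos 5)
Your proposal is correct and follows essentially the same route as the paper: the theorem is indeed just a packaging of the preceding case analysis, with the slow-down factor identified as the inverse of $\Pr\{\D{\tau}{c} \given \ABC{\tau}{c}\}$, the worst case coming from Case~1 ($p=\tfrac{1}{2}$, bound $\tfrac{1}{18}$) versus Case~2 (bound $\tfrac{1}{9}$), and the asymptotic limit of $2$ obtained exactly as you describe. Your added caveats about verifying \eqref{eq:assumption:degree} per algorithm and about rethreading the thinned covering probability through the derivation of \eqref{eq:sequence|cover} are points the paper handles (or glosses over) in the same way, so nothing is missing.
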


Note that the slow-down factor of 18 represents the worst-case and only occurs for a ``small'' system which is easier to jam. As $\nsize$ and $\acsmaximum$, increases, the slow-down factor decreases rapidly as depicted in \figref{jamming}.

\begin{figure}[t]
\centering
\begin{tikzpicture}[scale=1.0, every node/.style={transform shape}]
\begin{groupplot}[group style = {group size = 2 by 1,horizontal sep=1.0in},
width=3.0in, 
height=2.75in,
domain=3:10,
every axis/.append style={font=\footnotesize},
xmin=3,
xmax=10,
ymin=1,
xlabel=$\acsmaximum$,
xlabel style = {font=\scriptsize,yshift=0.5ex}, 
ylabel style = {font=\scriptsize,xshift=0.5ex},
legend style={at={(0.95,0.875)},anchor=north east}]
\nextgroupplot[title={Synchronous Algorithms},ylabel=upper-bound on $\sdf{\nsize}{\acsmaximum}$,extra y ticks = {2, 18}]
\foreach \yindex in {3,4,8}
{
   \addplot+[mark repeat=4] {1/(0.5*((x-1)/x)*(1-(1-(min(0.5,x/(2*(\yindex-2))))*((x-2)/x))^(\yindex-2)))};
}
\addplot+[mark=none,thin,dashed,] {2};
\addplot+[mark=none,thin,dashed,] {18};
\legend{$\nsize =3$,$\nsize = 4$, $\nsize = 8$}
\nextgroupplot[title={Asynchronous Algorithms},ylabel=upper-bound on $\sdf{\nsize}{\acsmaximum}$,extra y ticks = {2, 21},]
\foreach \yindex in {3,4,8}
{
   \addplot+[mark repeat=4] {1/(0.5*((x-1)/x)*(1-(1-(min(0.5,x/(6*(\yindex-2))))*((x-2)/x))^(\yindex-2)))};
}
\addplot+[mark=none,thin,dashed,] {2};
\addplot+[mark=none,thin,dashed,] {21};
\legend{$\nsize =3$,$\nsize = 4$, $\nsize = 8$}
\end{groupplot}
\end{tikzpicture}
\caption{Upper bound on slow-down factor as a function of $\nsize$ and $\acsmaximum$.}
\label{fig:jamming}
\end{figure}

\subsubsection{Asynchronous Algorithms}

In the asynchronous case, we require a minor change to \algoref{asynchronous|degree} and 
\algoref{asynchronous|none}. Namely, when a node is in transmit mode, it makes a 
uniform selection of channel at the beginning of each slot, rather than at 
the beginning of each frame. We also assume that the length of a jamming round is at least the maximum length of a slot (measured in real-time) (which is given by $\frac{\flength}{3 (1 - \drmaximum)}$).

Consider the link from node $v$ to node $u$. Consider a legal pair 
of frames $\lpair{f}{g}$ with $\node{f} = v$ and $\node{g} = u$. Let 
$\slot{f}{g}$ denote the slot of $f$ that lies completely within $g$. (If 
there are multiple such slots, then choose one arbitrarily.) For analyzing 
the jamming attack, we narrow the period during which no transmission by other neighbors 
of $u$ interferes with $v$'s transmission during $\slot{f}{g}$. This, in 
conjunction with the assumption made on the minimum length of jamming round, 
ensures that the set of channels a jammer observes to be in use in two 
different scans do not have any correlation. Intuitively, this is because two consecutive 
scans by a jammer will involve two different slots of any node and, in transmit mode, a node 
now chooses the channel on which to transmit at the beginning of a slot (rather than 
a frame).

Consider events $\AF{f}{g}{c}$, $\BF{f}{g}{c}$ and $\CF{f}{g}{c}$. To analyze our algorithms under jamming attack, 
we replace the event $\AF{f}{g}{c}$ with a new event $\AFS{f}{g}{c}$ which 
denotes the condition that $u$ transmits on $c$ during $\slot{f}{g}$. 
Further, we replace the event $\CF{f}{g}{c}$ with a new event $\CFS{f}{g}{c}$ 
which denotes the condition that no neighbor of $u$ besides $v$
transmits on $c$ during any of its \emph{slots} that overlaps with 
$\slot{f}{g}$. Clearly, in the absence of jamming attack, $u$ will discover $v$ during 
frame $g$ if events $\AFS{f}{g}{c}$, $\BF{f}{g}{c}$ and $\CFS{f}{g}{c}$ occur.
Analogous to \lemref{overlap}, 
it can be easily shown that a single slot of a node can overlap with at most 
three slots of another node. Note that these three slots can come from at most 
two different frames. It can be verified that \eqref{eq:AA|1} holds for $\AFS{f}{g}{c}$ and \eqref{eq:CC|1} holds for $\CFS{f}{g}{c}$
(for the latter, see \secref{CC|slot} in the appendix).  

Analogous to events $\ABC{\tau}{c}$, $\D{\tau}{c}$ and $\E{\tau}{c}$ and $\EBAR[w]{\tau}{c}$ define in the synchronous case, 
we can define events $\ABCF{f}{g}{c}$, $\DF{f}{g}{c}$, $\EF{f}{g}{c}$ and $\EBARF{f}{g}{c}$, respectively.
Most of the analysis for the synchronous case is also applicable to the asynchronous case. Specifically, when $p = \frac{1}{2}$, the analysis and results for the synchronous case also apply to the asynchronous case. However, when $p = \frac{\acsmaximum}{3 \destimate}$, the results for the asynchronous case are slightly different from that for the synchronous case. 
Since the analysis is very similar to that for the synchronous case, we only present the main steps. 
Using an expression similar to \eqref{eq:jamming:general}, when $p = \frac{\acsmaximum}{3\destimate}$, we can show the following:
\begin{align*}
\Pr\{ \EF{f}{g}{c} \given \ABCF{f}{g}{c} \} 
& = 1 - \left(1 - \frac{\acsmaximum-2}{3\destimate} \right)^{\nsize -2} \\
%% & \phantom{=} \left\{ \text{substitute } x = \frac{3\destimate}{\acsmaximum} \right\} \\
%% & = 1 -  \left(1 - \frac{1}{x}\right)^{\nsize - 2} \\
%% & \geq 1 - \left\{ \left(1 - \frac{1}{x}\right)^x \right\}^{\frac{\acsmaximum-2}{6}} \\
& \geq 1 - \left(\frac{1}{\eulernum}\right)^{\frac{\acsmaximum-2}{6}} 
\end{align*}

For arbitrary $\nsize$ and $\acsmaximum$ with $\nsize, \acsmaximum \geq 3$, we have:
\begin{align*}
\Pr\{ \DF{f}{g}{c} \given \ABCF{f}{g}{c} \} 
& \geq \frac{1}{2} \cdot \left( \frac{\acsmaximum-1}{\acsmaximum} \right) \cdot \left\{ 1 - \left(\frac{1}{\eulernum}\right)^{\frac{\acsmaximum-2}{6}} \right\} \\
& \geq \frac{1}{2} \cdot \left( 1 -  \frac{1}{\acsmaximum} \right) \cdot \left( 1 - \frac{1}{\sqrt[6]{\eulernum}} \right) \\
& = \frac{1}{2} \cdot \frac{2}{3} \cdot \frac{1}{7} = \frac{1}{21}
\end{align*}

For sufficiently large $\nsize$ and $\acsmaximum$, we have:
\begin{align*}
\Pr\{ \DF{f}{g}{c} \given \ABCF{f}{g}{c} \} 
& \geq \frac{1}{2} \cdot \left( \frac{\acsmaximum-1}{\acsmaximum} \right) \cdot \left\{ 1 - \left(\frac{1}{\eulernum}\right)^{\frac{\acsmaximum-2}{6}} \right\} \\
& \gtrapprox \frac{1}{2} \cdot 1 \cdot (1 - 0) = \frac{1}{2}
\end{align*}

The above analysis implies that:
\begin{align*}
\forall \nsize, \forall \acsmaximum: \nsize \geq 3, \acsmaximum \geq 3 :  \sdf{\nsize}{\acsmaximum} \leq 21 
\qquad \text{ and } \qquad
\lim_{\substack{\nsize \to \infty \\ \acsmaximum \to \infty}} \sdf{\nsize}{\acsmaximum} \leq 2
\end{align*}

Finally, we have the following result:

\begin{theorem}
The running time of the asynchronous algorithms increases by a factor of 21 in the worst case. Furthermore, when $\nsize$ and $\acsmaximum$ are sufficient large, the running time only increases by a factor of at most 2 in the worst case.
\end{theorem}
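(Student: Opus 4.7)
The plan is to follow the template used for the synchronous theorem, treating the slow-down factor $\sdf{\nsize}{\acsmaximum}$ as the reciprocal of the lower bound on the conditional probability $\Pr\{\DF{f}{g}{c} \given \ABCF{f}{g}{c}\}$ that a transmission of $v$ is not jammed on channel $c$ during $\slot{f}{g}$, given that events $\AFS{f}{g}{c}$, $\BF{f}{g}{c}$, and $\CFS{f}{g}{c}$ all occur. Multiplying this factor through the bound of \thmref{asynchronous|main} then yields the claimed running-time increase, since the expected number of \legal{} frame-pairs per link needed to guarantee successful discovery scales by $\sdf{\nsize}{\acsmaximum}$.

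The first step is to justify the conditional independence between successive jammer scans in the asynchronous model. Here I would invoke the two modifications introduced just before the theorem statement: the transmitter re-randomizes its channel at each slot boundary rather than each frame boundary, and each jamming round is at least as long as the maximum real-time slot length $\tfrac{\flength}{3(1-\drmaximum)}$. Together these ensure that any two consecutive jammer scans observe two disjoint, independently drawn channel selections of every node, so the jammer obtains no correlated information about the current slot beyond what an analogous synchronous jammer would obtain. This allows me to reuse, with $\A{\tau}{c}$ and $\C{\tau}{c}$ replaced by their slot-based counterparts $\AFS{f}{g}{c}$ and $\CFS{f}{g}{c}$, the entire derivation culminating in \eqref{eq:jamming:general}, on a per-slot basis.

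Once independence is secured, the proof splits on the two possible values of the transmission probability $p$. For $p = \tfrac{1}{2}$ the synchronous analysis applies verbatim --- $\Pr\{\EBARF{f}{g}{c} \given \ABCF{f}{g}{c}\}$ takes the same form for nodes in $G_2$ and in $G_3$ --- yielding the same $\tfrac{1}{18}$ worst-case and $\approx \tfrac{1}{2}$ asymptotic bounds. For $p = \tfrac{\acsmaximum}{3\destimate}$ I would use the two displayed bounds immediately preceding the theorem statement, namely $\Pr\{\EF{f}{g}{c} \given \ABCF{f}{g}{c}\} \geq 1 - e^{-(\acsmaximum-2)/6}$, and then multiply by $\tfrac{1}{2} \cdot \tfrac{\acsmaximum-1}{\acsmaximum}$ to obtain $\Pr\{\DF{f}{g}{c} \given \ABCF{f}{g}{c}\} \geq \tfrac{1}{21}$ whenever $\nsize, \acsmaximum \geq 3$, and $\gtrapprox \tfrac{1}{2}$ as $\nsize, \acsmaximum \to \infty$. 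Taking the minimum over the two branches gives $\sdf{\nsize}{\acsmaximum} \leq 21$ in general and $\leq 2$ in the limit, establishing both claims.

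The main obstacle is the independence argument in the second paragraph: the asynchronous jammer is a subtler adversary because it can in principle cross-reference observations from slots belonging to different frames, and this is where the slot-level re-randomization coupled with the minimum jamming-round length must do all the work. A secondary point worth checking carefully is the arithmetic of the $p = \tfrac{\acsmaximum}{3\destimate}$ branch: the extra factor of three in the denominator (relative to the synchronous $p = \tfrac{\acsmaximum}{\destimate}$) propagates into the exponent $\tfrac{\acsmaximum-2}{6}$ rather than $\tfrac{\acsmaximum-2}{2}$, and it is precisely this branch that is binding and explains why the worst-case bound worsens from $18$ to $21$.
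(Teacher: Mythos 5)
Your proposal is correct and follows essentially the same route as the paper: you modify the algorithm to re-randomize the channel at each slot boundary, invoke the minimum jamming-round length to decouple successive scans, reuse the synchronous derivation with the slot-based events $\AFS{f}{g}{c}$ and $\CFS{f}{g}{c}$, and split on $p = \frac{1}{2}$ versus $p = \frac{\acsmaximum}{3\destimate}$ to obtain the binding bound $\Pr\{\DF{f}{g}{c} \given \ABCF{f}{g}{c}\} \geq \frac{1}{2}\cdot\frac{2}{3}\cdot\left(1 - \frac{1}{\sqrt[6]{\eulernum}}\right) \geq \frac{1}{21}$ in the worst case and $\gtrapprox \frac{1}{2}$ asymptotically. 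Your observation that the extra factor of three in the transmission probability is what shifts the exponent from $\frac{\acsmaximum-2}{2}$ to $\frac{\acsmaximum-2}{6}$ and hence the constant from $18$ to $21$ is exactly the point the paper makes.
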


Again, as in the synchronous case, the slow-down factor of 21 represents the worst-case and only occurs for a ``small'' system which is easier to jam. As $\nsize$ and $\acsmaximum$, increases, the slow-down factor decreases rapidly as depicted in \figref{jamming}.

\subsubsection{Discussion}

It is possible for a jammer to jam a band of channels and not just a single 
channel. In that case, we require the the available channel set to contain at 
least three \emph{orthogonal bands} of channels. The results derived above still hold if the 
analysis is conducted in terms of bands rather than channels and all bands are of the same size. 

It is also possible to analyze the running time of our algorithms when a 
jammer can jam up to $k$ channels in a jamming round for a fixed $k$, where $1
 \leq k < \frac{\acsmaximum}{2}$. The analysis is more involved and requires 
stronger assumptions on system size. The slow-down factor also increases to $O(k)$ and 
is no longer a constant.

%% \IEEEraisesectionheading{\section{Related Work}\label{sec:related}}

%% Update by Shreyas: Contents of this file are obtained by reorganizing information in previous 'Related Work' section in the 'Introduction.tex' file alongwith addition of a comparison summary + citing some new papers

\section{Related Work}
\label{sec:related}

A large number of neighbor discovery algorithms have been proposed in the literature. To our knowledge, most of the algorithms suffer from one or more of the following limitations:

\begin{itemize}

\item \emph{Single-Channel Network:}
Several neighbor discovery algorithms have been proposed for a (traditional) single channel wireless network
(\emph{e.g.}, \cite{McGBor:2001:MOBIHOC,AloKra+:2003:IPDPS,ZheHou+:2003:MOBIHOC,VasKur+:2005:INFOCOM,BorEph+:2007:AN,AnHek:2007:ISMWC,DyoMas:2008:DCOSS,DutCul:2008:SenSys,HamChe+:2008:DMTCS,VasTow+:2009:MOBICOM,YanShi+:2009:PERCOM,VasAdl+:2013:TN,ZhaLuo+:2013:PE,RusVas+:2014:TPDS,WanMao+:2015:TPDS}).  
Some of these neighbor discovery algorithms, such as those proposed in \cite{VasTow+:2009:MOBICOM,VasAdl+:2013:TN}, 
can be extended to work for a multi-channel network (including a heterogeneous network). The main idea is as follows.
Let the collective set of all channels over which radio nodes in the network are capable of operating be referred to as the \emph{universal channel set}. 
The main idea is to execute a separate instance of single-channel neighbor discovery algorithm  on all channels in the universal channel set \emph{concurrently} by interleaving their executions.
A node only participates in instances that are associated with channels in its available channel set.
However, this simple approach has several disadvantages.
First, it requires that all nodes have to agree on the composition of the universal channel set.
Second, the time complexity of the algorithm for multi-channel network (obtained as above) will always be \emph{linear} in the size of the universal channel set.
This is true even if  the available channel  set of all nodes contain a single common channel (an extreme case).
%%
%% This may happen if all channels in the universal channel set but one are busy and cannot be used by any node in the network.
%%
In many cases, the available channel sets of nodes may be much smaller than the universal channel set.
Third, all nodes should start executing the algorithm at the same time. Otherwise, different nodes may tune to different channels in the same time slot, thereby causing the multi-channel neighbor discovery algorithm to fail. As such, this transformation \emph{does not work for an asynchronous system}.
In comparison, in our algorithms, nodes do not need to agree on a universal channel set and the time complexity of our algorithms depends on the ``degree of heterogeneity'' in the network.

\item \emph{Single-Hop Network:}
Neighbor discovery algorithms in~\cite{AloKra+:2003:ADHOCNOW, AraVen+:2008:DYSPAN, AstZor+:2010:CCW} consider a single-hop network in which all nodes can communicate with each other directly. Our algorithms, on the other hand, work even for a multi-hop network.

\item \emph{Synchronous System:}
Neighbor discovery algorithms in \cite{IyePru+:2011:SASO,SunWu+:2012:GLOBECOM,ChaHsu:2013:ICCP} assume a time-slotted synchronous system.
Work in \cite{KriTho+:2008:CN,MitKri+:2009:JPDC,ZenMit+:2010:ISPDC} assume a synchronous system in which all nodes initiate neighbor discovery at the same time. Moreover, the proposed algorithms are deterministic in nature, and have high time complexity that depends on the 
\emph{product} of network size and universal channel set size. In contrast, our focus in this work is on developing algorithms for an asynchronous system that have lower running time (albeit provide only probabilistic guarantees).

\item \emph{Homogeneous Channel Availability:}
Neighbor discovery algorithms in \cite{AloKra+:2003:ADHOCNOW, KarVia+:2011:INFOCOM, ChaHsu:2013:ICCP} assume homogeneous channel availability (all channels are available to all nodes). Our algorithms, on the other hand, work even when different nodes have different subsets of channels available for communication, and, further, their running time adapts to the ``degree of heterogeneity'' in the network.

\item \emph{Little or No Theoretical Guarantees:}
Work in \cite{RajGan+:2004:GLOBECOM,WanSun+:2013:ICCT} only evaluate the proposed neighbor discovery algorithms using experiments; no attempt is made to
provide any theoretical (deterministic or probabilistic) guarantees. Work in \cite{AstZor+:2010:CCW} provides a suite of neighbor discovery algorithms that can tolerate jamming attacks. However, it does not provide any theoretical guarantees either (algorithms are evaluated using only simulation experiments). In our work, in contrast, we analyze all our algorithms theoretically and establish probabilistic bounds on their performance.

\end{itemize}

Besides the related work mentioned above,
Raniwala and Chiueh  propose a neighbor discovery algorithm for a multi-channel wireless network in~\cite{RanChi:2005:INFOCOM} but their work 
assumes that each node has multiple interfaces. 
%%
%% The neighbor discovery algorithm proposed in~\cite{KarVia+:2011:INFOCOM} involves constructing a listening schedule for each node \emph{a priori} using linear programming.
%%
Law \emph{et al.} describe a neighbor discovery algorithm for constructing a scatternet in~\cite{LawMeh+:2001:MOBIHOC}. Salondis \emph{et al.} describe an algorithm for \emph{two-node} link formation in~\cite{SalBha+:2000:MOBIHOC}, which is then used as a building block for forming a scatternet in~\cite{SalBha+:2001:INFOCOM}.
A scatternet is an ad hoc network consisting of two or more piconets; a piconet is an ad hoc network consisting two or more (up to eight) Bluetooth-enabled devices with one device acting as master and remaining devices acting as slaves.
Work in \cite{GanWan+:2012:SECON,LiuLin+:2012:TPDS,GuHua+:2013:SECON,BiaPar:2013:TMC,MisMis+:2013:VTC,ChaHsu:2013:ICCP,CheBia+:2014:MOBIHOC:a,CheBia+:2014:MOBIHOC:b} addresses the closely-related problem of \emph{rendezvous} between two CR nodes. Intuitively, the rendezvous problem between two nodes involves ensuring that the two nodes tune to the same channel at the same time, which is necessary for nodes to discover each other. 
However, as opposed to the problem considered in this work, the rendezvous problem does not consider the effect of possible simultaneous transmissions by other neighboring nodes at the same, which may cause collisions thereby preventing a link from being discovered.

\begin{comment}

Several neighbor discovery algorithms for a multi-channel wireless network have also been proposed (\emph{e.g.}, 
\cite{SalBha+:2000:MOBIHOC,LawMeh+:2001:MOBIHOC,SalBha+:2001:INFOCOM,AloKra+:2003:ADHOCNOW,RanChi:2005:INFOCOM,KriTho+:2008:CN,AraVen+:2008:DYSPAN,MitKri+:2009:JPDC,ZenMit+:2010:ISPDC,YuaLiz+:2011:MSN,YouZhu+:2012:ICC,GanWan+:2012:SECON,LiuLin+:2012:TPDS,GonCha+:2013:MASS,GuHua+:2013:SECON,MisMis:2013:DCOSS,MisMis+:2013:VTC,BiaPar:2013:TMC,ChaHsu:2013:ICCP,CheBia+:2014:MOBIHOC:a,CheBia+:2014:MOBIHOC:b}).

With respect to cognitive radio networks, due to spatial variations in frequency usage, hardware variations in radio transceivers and uneven propagation of wireless signals, different nodes in the network may perceive different subsets of frequencies available to them for communication. This may require communication across multiple hops. Our algorithms account for the spatial variations and work for a multi-hop system.

However, these algorithms will not work in cognitive radio networks because of their heterogeneous nature.

\cite{Bou:2005:Book, Bou:2008:Book} discuss usage of position of neighbor nodes in location-based routing but assume that a node can obtain accurate location of itself using GPS, onboard-navigation system, etc.

\end{comment}

\section{Conclusion}

In this work, we have proposed several randomized neighbor discovery 
algorithms for an \model{} network both for synchronous and asynchronous 
systems under a variety of assumptions. Our algorithms guarantee success with 
arbitrarily high probability. All our algorithms are designed to minimize the 
amount of knowledge that nodes need to possess \emph{a priori} before they 
start executing the neighbor discovery algorithm. We have shown that our neighbor discovery algorithms are robust to unreliable channels are adversarial attacks. 

As a future, we plan to develop randomized algorithms for solving other 
communication problems for an \model{} network such as broadcasting and 
convergecasting that guarantee success with arbitrarily high probability.

\bibliographystyle{plain}
\bibliography{citations}

\begin{appendix}
\section{Omitted Proofs}

\subsection{Computing the probability of occurrence of $\A{\tau}{c}$} 
We have:
\begin{align*}
\nonumber
\Pr\{\A{\tau}{c}\} & =  (v \text{ selects } c \text{ at the beginning of } \tau) \land (v \text{ chooses to transmit during } \tau) \\
\nonumber
& =   \frac{1}{\card{\acs{v}}} \times \min\left( \frac{1}{2}, \frac{\card{\acs{v}}}{2^k}  \right) \\
\nonumber
& =  \min\left( \frac{1}{2\card{\acs{v}}}, \frac{1}{2^k} \right) \\
\nonumber
& \phantom{=} \big\{ \text{using \eqref{eq:slot}, } 2^{k-1} \leq \degreeon{u}{c}  \text{ which implies }  2^k \leq 2 \degreeon{u}{c } \big\} \\
\nonumber
& \geq  \frac{1}{\max\{ \: 2 \card{\acs{v}}, 2\degreeon{u}{c} \:\} } \\
\nonumber
&  \geq \frac{1}{2 \max(\acsmaximum,\dmaximum)}
\end{align*}

\subsection{Computing the probability of occurrence of  $\B{\tau}{c}$} 
We have:
\begin{align*}
\nonumber
\Pr\{\B{\tau}{c}\} & =  (u \text{ selects } c \text{ at the beginning of } \tau) \land (u \text{ chooses to listen during } \tau) \\
\nonumber
& =  \frac{1}{\card{\acs{u}}} \times \left\{1 - \min\left( \frac{1}{2}, \frac{\card{\acs{u}}}{2^k}  \right) \right\} \\
\nonumber
& \phantom{=} \big\{  \min( x, y ) \leq x \big\}  \\
\nonumber
& \geq  \frac{1}{\card{\acs{u}}} \times \left(1 - \frac{1}{2} \right) = \frac{1}{2\card{\acs{u}}}
\end{align*}

\subsection{Computing the probability of  occurrence of $\C{\tau}{c}$} 
Let $\incoming{u}{c}$ denote the set of neighbors of $u$ on $c$. Note that, if $\incoming{u}{c}$ only contains $v$, then $\Pr\{ \C{\tau}{c} \} = 1$. Otherwise, we have:
\begin{align*}
\nonumber
\Pr\{\C{\tau}{c}\}  
& =   \prod_{\substack{w \in \incoming{u}{c} \\ w \neq v}}  \Pr(w \text{ does not transmit on } c \text{ during } \tau) \\
\nonumber
& =  \prod_{\substack{w \in \incoming{u}{c} \\ w \neq v}}  \{1 - \Pr(w \text{ transmits on } c \text{ during } \tau)\} \\
\nonumber
& =  \prod_{\substack{w \in \incoming{u}{c} \\ w \neq v}}  \left\{ 1 - \frac{1}{\card{\acs{w}}} \times \min\left( \frac{1}{2}, \frac{\card{\acs{w}}}{2^k} \right) \right\} \\
\nonumber
& =   \prod_{\substack{w \in \incoming{u}{c} \\ w \neq v}}  \left\{ 1 - \min\left( \frac{1}{2\card{\acs{w}}}, \frac{1}{2^k} \right) \right\} \\
\nonumber
& \phantom{=} \big\{ \min(x, y) \leq y \big\} \\
\nonumber
& \geq  \prod_{\substack{w \in \incoming{u}{c} \\ w \neq v}} \!\!\!\! \left( 1 - \frac{1}{2^k} \right) \\
\nonumber
& =  \left(1 - \frac{1}{2^k} \right)^{\card{\incoming{u}{c}} - 1} \\
\nonumber
& \phantom{=} \big\{ \card{\incoming{u}{c}} - 1 =  \degreeon{u}{c} - 1 \text{ and, using \eqref{eq:slot}, } \degreeon{u}{c} - 1 \leq 2^k  \big\} \\
\nonumber 
& \geq  \left(1 - \frac{1}{2^k} \right)^{2^k} \\
\nonumber
& \phantom{=} \left\{ \forall x \geq 2,  \left( 1 - \tfrac{1}{x} \right)^x \text{ is a monotonically increasing function of } x  \text{ and thus } \geq \tfrac{1}{4}  \right\}  \\
\nonumber
& \geq  \frac{1}{4}
\end{align*}

\subsection{Computing the probability that  $s$ covers $\link{v}{u}$} 
Let $\F{\tau}{c}$ denote the event that  $\tau$ covers $\link{v}{u}$ on $c$. Note that $\F{\tau}{c} = \A{\tau}{c} \land \B{\tau}{c} \land \C{\tau}{c}$. Since $\A{\tau}{c}$, $\B{\tau}{c}$ and $\C{\tau}{c}$ are mutually independent events, $\Pr\{\F{\tau}{c}\} = \Pr\{\A{\tau}{c}\} \times \Pr\{\B{\tau}{c}\} \times \Pr\{\C{\tau}{c}\}$. We have:
\begin{align*}
\nonumber
\Pr\{\F{\tau}{c}\} 
& \geq \frac{1}{2 \max(\acsmaximum,\dmaximum)} \times \frac{1}{2\card{\acs{u}}} \times \frac{1}{4} \\
\nonumber
& = \frac{1}{16 \card{\acs{u}} \max(\acsmaximum,\dmaximum)}
\end{align*}

Now, we have:
\begin{align*}
\nonumber
& \Pr\{ s \text{ covers } \link{v}{u} \} \\
\nonumber
& = \sum_{c} \Pr\{ s \text{ covers } \link{v}{u} \text{ on } c  \} \\
\nonumber
& = \sum_{c \in \acs{v} \cap \acs{u}}  \Pr\{ s \text{ covers } \link{v}{u} \text{ on } c  \}  + \sum_{c \notin \acs{v} \cap \acs{u}}  \Pr\{ s \text{ covers } \link{v}{u} \text{ on } c  \}  \\
\nonumber
& \phantom{=} \big\{ c \notin \acs{v} \cap \acs{u}  \text{ implies that } \Pr\{ s \text{ covers } \link{v}{u} \text{ on } c  \} = 0 \big\} \\
\nonumber
& = \sum_{c \in \acs{v} \cap \acs{u}}  \Pr\{ s \text{ covers } \link{v}{u} \text{ on } c  \}   \\
\nonumber
& \geq \sum_{c \in \acs{v} \cap \acs{u}}  \Pr\{\F{\tau}{c}\}  \\
\nonumber
& \geq \sum_{c \in \acs{v} \cap \acs{u}}  \frac{1}{16 \card{\acs{u}} \max(\acsmaximum,\dmaximum)}  \\
\nonumber
& \quad = \quad \frac{\card{\acs{v} \cap \acs{u}}}{\card{\acs{u}}} \times \frac{1}{16\max(\acsmaximum,\dmaximum)} 
\geq \frac{\srminimum}{16 \max(\acsmaximum,\dmaximum)}
\end{align*}

\begin{comment}
\begin{align}
\nonumber
& \Pr(\link{v}{u} \text{ is not covered within } M \text{ stages}) \\
\nonumber 
& = \prod_{1 \leq i \leq M} \Pr(\link{v}{u} \small \text{ is not covered in the } i^{th} \text{ stage})  \\
\nonumber 
& \leq  \prod_{1 \leq i \leq M}  \left( 1 - \frac{\srminimum}{16\max(\acsmaximum,\dmaximum)} \right) \\
\nonumber
& =  \left( 1 - \frac{\srminimum}{16\max(\acsmaximum,\dmaximum)} \right)^{ \frac{16 \max( \acsmaximum, \dmaximum )}{\srminimum} \ln \left( \frac{N^2}{\epmaximum} \right) } \\
\nonumber
& \phantom{=} \left\{ \forall x \geq 2, \left(1 - \tfrac{1}{x}\right)^x \text{ is upper bounded by }  \tfrac{1}{e} \right\} \\
\nonumber
&  \leq \left(\frac{1}{e}\right)^{\ln \left( \frac{N^2}{\epmaximum} \right)} =  \frac{\epmaximum}{N^2} 
\end{align}
\end{comment}

\subsection{Computing the probability of successful discovery of a link in $\frac{16 \max( \acsmaximum, \dmaximum )}{\srminimum} \ln \left( \frac{N^2}{\epmaximum} \right)$ stages}
Let $M = \frac{16 \max( \acsmaximum, \dmaximum )}{\srminimum} \ln \left( \frac{N^2}{\epmaximum} \right)$. We have:
\begin{align*}
\nonumber
& \Pr(\link{v}{u} \text{ is not covered within } M \text{ stages}) \\
\nonumber 
& = \prod_{1 \leq i \leq M} \Pr(\link{v}{u} \small \text{ is not covered in the } i^{th} \text{ stage})  \\
\nonumber 
& \leq  \prod_{1 \leq i \leq M}  \left( 1 - \frac{\srminimum}{16\max(\acsmaximum,\dmaximum)} \right) \\
\nonumber
& =  \left( 1 - \frac{\srminimum}{16\max(\acsmaximum,\dmaximum)} \right)^{ \frac{16 \max( \acsmaximum, \dmaximum )}{\srminimum} \ln \left( \frac{N^2}{\epmaximum} \right) } \\
\nonumber
& \phantom{=} \left\{ \forall x \geq 2, \left(1 - \tfrac{1}{x}\right)^x \text{ is upper bounded by }  \tfrac{1}{e} \right\} \\
\nonumber
&  \leq \left(\frac{1}{e}\right)^{\ln \left( \frac{N^2}{\epmaximum} \right)} =  \frac{\epmaximum}{N^2} 
\end{align*}

\subsection{Computing the sum of the number of time-slots/frames in the first $k$ epochs}
Let $S$ denote the sum $\sum_{1 \leq i \leq k} (i+1) \cdot 2^i$. We have:
\[
\begin{array}{r@{\quad=\quad}rcrcrcrcrcr}
S & 2 \cdot 2^1 & + & 3 \cdot 2^2 & + & 4 \cdot 2^3 & + & \cdots & + & (k+1) \cdot 2^k  \\
2S &            &   & 2 \cdot 2^2 & + & 3 \cdot 2^3 & + & \cdots & + & k \cdot 2^k  & + & (k+1) \cdot 2^{k+1} 
\end{array}
\]
Subtracting the first equation from the second and rearranging terms, we obtain:
\begin{align*}
\nonumber
S 
& = (k+1) \cdot 2^{k+1} - (2^k + 2^{k-1} + \cdots + 2^2) - 2 \cdot 2^1 \\
\nonumber
& = (k+1) \cdot 2^{k+1} - \left( \sum_{1 \leq i \leq k} 2^i \right)  - 2 \\
\nonumber
& = (k+1) \cdot 2^{k+1} - 2 \cdot (2^k - 1) - 2 \\
& = k \cdot 2^{k+1} 
\label{eq:epoch|sum}
\end{align*}

\subsection{Computing the probability of occurrence of $\CFS{f}{g}{c}$}
\label{sec:CC|slot}

We focus on the case when exactly three slots of $w$ overlap with $\slot{f}{g}$. But the result also holds if fewer than three slots of $w$ overlap with $\slot{f}{g}$.  There are two sub-cases: the three (overlapping) slots belong to the same frame or two different frames.

\paragraph{Sub-case 1 (three slots belong to the same frame):} In this case, we have:
\begin{align*}
\Pr\{ w \text{ does not transmit on } c \text{ during } \slot{f}{g} \} \}
& = 1 - p + p \cdot \left( \frac{\acsmaximum-1}{\acsmaximum} \right)^3 \\
& = 1 - p \left( 1 - \frac{(\acsmaximum-1)^3}{\acsmaximum^3} \right) \\
& = 1 - p \left( \frac{3\acsmaximum^2 - 3\acsmaximum + 1}{\acsmaximum^3} \right) \\
& \geq 1 - p \left( \frac{3\acsmaximum^2}{\acsmaximum^3} \right) \\
& \geq 1 - \left( \frac{\acsmaximum}{3 \destimate} \right) \cdot \left( \frac{3}{\acsmaximum} \right) \\
& = 1 - \frac{1}{\destimate}
\end{align*}

\paragraph{Sub-case 2 (the three slots come from two different frames):} In this case, we have:
\begin{align*}
\Pr\{ w \text{ does not transmit on } c \text{ during } \slot{f}{g} \}
& = \left\{ 1 - p + p \cdot \left(\frac{\acsmaximum-1}{\acsmaximum}\right) \right\} \cdot \left\{1 - p + p \cdot \left(\frac{\acsmaximum-1}{\acsmaximum}\right)^2 \right\} \\
& = \left( 1 - p \cdot \frac{1}{\acsmaximum} \right) \cdot \left( 1 - p \cdot \frac{2\acsmaximum-1}{\acsmaximum^2} \right) \\
& = 1 - p \cdot \frac{1}{\acsmaximum} - p \cdot \frac{2\acsmaximum-1}{\acsmaximum^2} + p^2 \cdot \frac{2\acsmaximum - 1}{\acsmaximum^3} \\
& \geq 1 - p \cdot \frac{3\acsmaximum-1}{\acsmaximum^2} \\
& \geq 1 - p \cdot \frac{3 \acsmaximum}{\acsmaximum^2} \\
& = 1 - p \cdot \frac{3}{\acsmaximum} \\
& \geq 1 - \frac{\acsmaximum}{3 \destimate} \cdot \frac{3}{\acsmaximum} = 1 - \frac{1}{\destimate} 
\end{align*}

It can now be shown, in a similar manner as in the proof of \lemref{legal|atleast}, that $\Pr\{ \CFS{f}{g}{c} \} \geq \frac{1}{4}$.
\end{appendix}
\end{document}